\documentclass[letterpaper,12pt]{article}

\usepackage[utf8]{inputenc}
\usepackage{geometry,setspace}
\geometry{verbose,tmargin=1.25in,bmargin=1.25in,lmargin=1.25in,rmargin=1.25in}
\def\spacingset#1{\renewcommand{\baselinestretch}%
	{#1}\small\normalsize} \spacingset{1}
\spacingset{1.35}

\usepackage{pdfpages}
\usepackage[title]{appendix}
\usepackage{soul}

\usepackage{booktabs,caption,widetable,threeparttable}
\usepackage{tikz,fullpage}
\usetikzlibrary{arrows,	petri, topaths}
\usepackage{rotating,graphicx,lscape,xr-hyper,pgf,float,color}
\usepackage[position=top]{subfig}

\usepackage[authoryear]{natbib}
\usepackage{bibentry}
\usepackage{url}
\usepackage[colorlinks = true, allcolors = blue]{hyperref}
\usepackage[inline]{enumitem}
\usepackage{listings}
\usepackage[reftex]{theoremref}
\setlength{\bibsep}{0pt}

\usepackage{amsmath, amsthm, amssymb, bm, mathtools, amsfonts, subdepth}
\mathtoolsset{showonlyrefs}

\theoremstyle{remark}

\theoremstyle{definition}

\theoremstyle{plain}
\newtheorem{thm}{Theorem}
\newtheorem{ass}{Assumption}
\newtheorem{lemma}{Lemma}

\usepackage{fixme}

\DeclarePairedDelimiter{\abs}{\lvert}{\rvert}
\DeclarePairedDelimiter{\norm}{\lVert}{\rVert}

\newcommand\inverse{^{-1}}
\newcommand{\R}{\mathbb{R}}


\newcommand{\E}{\mathbb{E}}

\DeclareMathOperator{\tr}{tr}
\DeclareMathOperator{\rk}{rk}

\title{Linear Regression with Weak Exogeneity\thanks{
We are grateful to Isaiah Andrews, Morten Ø. Nielsen, Jack Porter, and Jim Stock for useful discussions. 
Harvey Barnhard, Bas Sanders, and Chris Walker provided excellent research assistance.
}}
\author{
    \textsc{Anna Mikusheva}%
    \thanks{
        Department of Economics, M.I.T., 50 Memorial Drive, E52-526, Cambridge, MA, 02142, United States. E-mail: amikushe@mit.edu.
      }, \ 
    \textsc{Mikkel S\o lvsten}%
    \thanks{
        Department of Economics and Business Economics, Aarhus University, Fuglsangs Allé 4, Building 2621, B 13a, 8210 Aarhus V, Denmark. E-mail: miso@econ.au.dk. This research was supported by grants from the Danish National and Aarhus University Research Foundations (DNRF Chair \#DNRF154 and AUFF Grant \#AUFF-E-2022-7-3)}}
 
\usepackage{datetime}    
\newdateformat{monthyeardate}{\monthname[\THEMONTH] \THEYEAR}
\date{\monthyeardate\today}    

\begin{document}

\maketitle

\begin{abstract}
     \noindent
     This paper studies linear time series regressions with many regressors. Weak exogeneity is the most used identifying assumption in time series. Weak exogeneity requires the structural error to have zero conditional expectation given the present and past regressor values, allowing errors to correlate with future regressor realizations. We show that weak exogeneity in time series regressions with many controls may produce substantial biases and even render the least squares (OLS) estimator inconsistent. The bias arises in settings with many regressors because the normalized OLS design matrix remains asymptotically random and correlates with the regression error when only weak (but not strict) exogeneity holds. This bias's magnitude increases with the number of regressors and their average autocorrelation. To address this issue, we propose an innovative approach to bias correction that yields a new estimator with improved properties relative to OLS. We establish consistency and conditional asymptotic Gaussianity of this new estimator and provide a method for inference.
     
     \bigskip
     
     \noindent
     \textsc{Keywords:} time series regression, weak exogeneity, many controls, feedback bias, OLS inconsistency, bias correction, valid inference
     
     \bigskip
     
     \noindent
     \textsc{JEL Codes:} C13, C22
\end{abstract}
\clearpage

\section{Introduction}

Structural estimation in macroeconomics, finance and other economic fields studying dynamic models often employs time series data. The most used identifying assumption for structural estimation in time series settings is weak exogeneity. Weak exogeneity postulates that the structural shock has zero conditional expectation given the present and past regressor values. It is a less restrictive assumption than strict exogeneity, which additionally requires that the shocks have zero conditional expectation given future values of regressors. Strict exogeneity is implausible in most settings due to \textit{feedback}, i.e., the outcome variable in one period affects the values of the regressors in future periods.\footnote{The formalization of feedback is typically ascribed to \cite{granger1969investigating}, while \cite{engle1983exogeneity} provide an early rigorous distinction between weak and strict exogeneity. See also \cite{sims1972money,chamberlain1982general} for further discussions and an empirical example.} Specifically, if the lagged outcome variable is among the regressors, strict exogeneity cannot hold.

Another common feature of modern structural regressions is the presence of many regressors, all of which may be autocorrelated. Various motivations for the use of many regressors in time series are that the economic system generating the data is partially observed \citep{zellner1974time,wallis1977multiple}, additional controls in local projections may ensure uniformity \citep{jorda2005estimation,montiel2021local}, and long memory may be arising from an underlying high-dimensional model \citep{schennach2018long,chevillon2018generating}. See also \cite{bauwens2023we} for two applications.

We show that these two features -- weak exogeneity and many autocorrelated regressors -- can produce substantial biases and even lead to inconsistency of the ordinary least squares (OLS) estimator. The large bias in OLS may arise even when all variables are stationary (i.e., no unit roots or strong persistence is needed) and when the feedback effect violating strict exogeneity is limited to just one period. Finite sample unbiasedness of OLS relies heavily on strict exogeneity. It is well understood that OLS is biased in most time series regressions, but there is also a common belief that the biases are relatively small and of second order \citep[see, e.g.,][]{bao2007second}. Our results show that such beliefs are unwarranted and that the bias in OLS can be a first-order issue.

This paper contains several results. First, we derive a formula for the asymptotically non-negligible part of the OLS bias, explain which features of the data are responsible for it, and provide a tool to assess the potential for OLS bias in a given time series application. Second, we propose a new estimator which is asymptotically unbiased if the data contains a one-period violation of strict exogeneity. Third, we derive the asymptotic distribution for this new estimator and discuss how to conduct inferences using our new approach. Surprisingly, bias correction does not necessarily trade off with increased variance.  Based on our simulations, the standard deviations of OLS and our new estimator are very close to each other with no clear ordering. Finally, we show how our new estimator generalizes to settings where strict exogeneity is violated by multiple periods of feedback effects.

The bias of OLS arises in a setting with many regressors because the normalized design matrix, $\frac{1}{T}X'X$, remains asymptotically random even in large samples. Weak (but not strict) exogeneity allows for correlation between the randomness in the design matrix and the numerator of the OLS estimator, which leads to a bias. Specifying the feedback structure allows us to derive a formula for the leading term of the bias. Specifically, once we assume that strict exogeneity is violated by one-period feedback from the outcome variable to the next period's regressors, we show that the OLS bias is aligned with the feedback direction. The size of the bias increases with the number of regressors and their one-period ahead linear predictability.\looseness=-1

We propose a new estimator which eliminates the bias asymptotically and is consistent under the same assumptions that may lead to inconsistency of OLS. Our proposal mimics an instrumental variables (IV) estimator with an intentionally endogenous `technical' instrument: a linear combination of the regressors and their leads (future values). The main insight is that future values of the regressors in the instrument induce an endogeneity bias along the feedback direction only, the same direction along which the OLS is biased. It is therefore possible to pick the weights in the linear combination to ensure that the bias stemming from the endogenous instrument offsets the bias originating from weak exogeneity.

An important feature of our bias correction is that it is constructed based on knowledge about the regressors only. The correction is the same for any outcome variable and does not require knowledge about or estimation of the direction of the feedback mechanism. We show that the new estimator is consistent and, after proper normalization, asymptotically Gaussian when there is a one-period violation of strict exogeneity. The main results can be generalized to the case where the violation of strict exogeneity is for a finite number of periods, that is when the outcome variable has feedback effects on the regressors for $L$ periods. In such a case, the bias of OLS contains $L$ terms, corresponding to the $L$ directions of feedback.\looseness=-1

We conduct a simulation study aimed at assessing how common and how large the OLS bias is in typical macroeconomic regressions. We take a large collection of US macroeconomic indexes observed at quarterly frequency for 200 periods, extract its business cycle part, and randomly draw a regression from this data set. We show that the time series dependence and magnitude of  feedback typical for these macroeconomic data produce empirically important OLS biases. For example, in a typical regression with 25 regressors we find a bias in the feedback direction equal to half of the standard deviation, while in regression with 50 regressors this bias approximately equals one standard deviation. Depending on the number of regressors, approximately 6--21 percent of coefficients display a statistically significant difference between the OLS estimator and our proposed estimator. 
Our proposed IV-type estimator provides full correction of the bias.

Our results are related to three distinct strands of literature. First, there is a classical literature on linear equations in time series.  The exact bias of OLS in a simple auto-regression with normal errors was derived by \cite{sawa1978exact}, claiming ``the least squares estimate is seriously biased for the sample of two-digits sizes." Formulas for the second-order bias (or bias of order $1/T$) where the lagged outcome variable is the only regressor violating strict exogeneity are derived in \cite{kiviet1999alternative} and \cite{bao2007second} in a setting with a small number of regressors. A concern that weak exogeneity may lead to large biases in OLS was also raised by \cite{stambaugh1999predictive}, who considered a regression model with a very persistent (near-unit-root) regressor.  \cite{hansen2002generalized}  shows the inappropriateness of the Generalized Least Squares (GLS) estimator in linear models with only weak exogeneity as GLS mixes up the timing of observations and  leads to significant biases. The authors argue against using GLS in macroeconomic data.

The second literature is that on estimation of dynamic effects in panel data, where the presence of fixed effects (many regressors) produces a sizable bias in the coefficient on the lagged outcome variable (the weakly exogenous regressor) \citep{nickell1981biases}. Unlike the solutions proposed in that literature \citep[e.g.,][]{arellano1991some}, our solution for the time series context does not rely on the knowledge that only a single known regressor fails to be strictly exogenous nor does it require that the many regressors in the model are fixed effects for mutually exclusive groups. 
Finally, the underlying algebraic source of the bias issues, as well as some asymptotic statements related to Gaussianity of quadratic forms, are connected to the issues arising in linear models with many instruments and/or many regressors -- see \cite{hansen2008estimation,chao2012asymptotic,kline2020leave}.

The rest of the paper is organized as follows. Section \ref{sec: bias OLS} derives the formula for the leading term of the OLS bias, provides intuition for the findings, and discusses features of the data responsible for the bias. Section \ref{sec: IV estimator} introduces a new asymptotically unbiased estimator under the assumption of one-period feedback and establishes its consistency. Section \ref{sec: inference} establishes asymptotic Gaussianity of the newly proposed estimator and suggests a valid inference procedure. Section \ref{sec: multi period} extends some of these results to settings with multi-period feedback. Section \ref{sec: mcs} contains simulation studies assessing the empirical relevance of the discussed issues in typical macroeconomic data sets. All proofs are in Appendix \ref{sec: proofs}.

\paragraph{Notation:} For any vector $x$, $\|x\|=\sqrt{x'x}$ gives the $L_2$ norm of $x$. For any matrix $A$ (not necessarily square) $\rk(A)$ is the rank of $A$, $\|A\|=\sup_x\frac{\|Ax\|}{\|x\|}$ denotes the operator norm (the positive square root of the largest eigenvalue of $A'A$), and $\|A\|_F=\sqrt{\tr(A'A)}$ denotes the Frobenius norm. Throughout, we let $c < 1$ and $C$ be strictly positive and finite (generic) constants, that may have different values in different equations, but do not depend on the sample size $T$. The $q\times q$ identity matrix is $I_q$ while $I=I_T$. 

\section{Inconsistency of OLS}
\label{sec: bias OLS}

\subsection{Model and assumptions}

Consider a linear time series regression
\begin{align}
	y_t &= x_t'\beta + \varepsilon_t, && t \in \{ 1,\dots,T \},
\end{align}
where the regressors $x_t \in \R^K$ are weakly exogenous and the number of regressors $K$ is large. We model this by assuming that $K$ may diverge proportionally to $T$ or slower. All features of the data generating process are implicitly indexed by $T$, but we drop this index for compactness of notation. The object of interest is the linear contrast $\theta=r'\beta$ for a known (non-random) $K$ vector $r$. A leading case is $r'\beta = \beta_1$.

The assumption of weak exogeneity imposes:
\begin{align}
    \E[ \varepsilon_t \!\mid\! x_t,x_{t-1},\dots ]&=0.
\end{align}
Weak exogeneity is considerably less restrictive than strict exogeneity, which assumes that $\E[ \varepsilon_t \!\mid\! X ]=0$, with $X$ denoting the $T\times K$ set of all regressors (including past, present, and future). In the context of time series economic data, strict exogeneity is seldom plausible. In contrast, weak exogeneity allows the structural shock to influence future regressor values through feedback. Presence of such feedback is very likely as variables employed in macro estimation often evolve as a joint dynamic process and affect each other either instantaneously (as $x_t$  affects $y_t$) or with some lag. When lagged outcome variable are incorporated as regressors, strict exogeneity is certainly violated. For further discussion regarding the plausibility of weak and strict exogeneity in various applications, see \citet[][ch. 16]{stock2019introduction}.\looseness=-1

Standard arguments for unbiasedness of the OLS estimator $r'\hat{\beta}^\text{OLS}=r'(X'X)^{-1}X'y$ rely heavily on strict exogeneity. Although it is well-known that OLS exhibits bias under weak exogeneity \citep[see, e.g.,][chapter 8.2]{hamilton1994time}, a common belief persists that this bias is small and asymptotically negligible. Here we claim that this widespread belief is both incorrect and misleading: the bias of OLS can be substantial, potentially leading to inconsistency. \looseness=-1

Standard statements regarding OLS consistency in a time series context hinge on the assumption that the normalized design matrix concentrates around its expectation: $\norm{\frac{1}{T}X'X - Q} \xrightarrow{p} 0$, where $Q$ is non-singular \citep[see, e.g.,][Assumption 8.6]{hamilton1994time}. For example, \cite{gupta2023robust} assumes that $K^3/T\to 0$ to invoke a Law of Large Numbers for the normalized design matrix. However, the normalized design matrix remains asymptotically random when the number of regressors (and thus the design matrix dimensionality) grows fast enough. Under strict exogeneity, the randomness of $\frac{1}{T}X'X$ does not cause a problem as one can condition on $X$ in the analysis, thus treating the design matrix as non-random. This approach is infeasible under weak exogeneity. In such instances, $\frac{1}{T}X'X$ is not just a random matrix, but is also correlated with $\frac{1}{T}X'y$. This correlation produces a bias that may become the leading term in the asymptotic analysis of OLS.

The size and form of the OLS bias depend on the feedback mechanism, or how past errors in the outcome variable affect future regressor values. We start by assuming one-period linear feedback, i.e., the present error term $\varepsilon_t$ only affects the regressors in the subsequent period, namely $x_{t+1}$. Section \ref{sec: multi period} extends our results to feedback lasting a finite number of periods.

\begin{ass}\label{Ass: OLS}
    \begin{enumerate}[label=(\roman*)]

	\item The observed regressors $x_t$ can be decomposed as $x_t = \tilde x_t + \alpha \varepsilon_{t-1},$ where the $T\times K$ matrix $\tilde{X} = [\tilde x_1,\dots,\tilde x_T]'$ has full rank. \label{Ass: OLS 1}
		
	\item The errors $\{\varepsilon_t\}_{t=0}^T$ are i.i.d. conditionally on $\tilde {X}$ with $\E[\varepsilon_t|\tilde X]=0$, $c<\sigma^2 := \E[\varepsilon_t^2|\tilde X] <C$, and $\E[\varepsilon_t^4|\tilde X]<C$ almost surely. \label{Ass: OLS 2}
		
	\item The size of the non-random vectors $\alpha, r \in \R^K$ relative to the strictly exogenous design matrix $\frac{1}{T}\tilde X'\tilde X$ is bounded: $\alpha' (\frac{1}{T}\tilde X'\tilde X) \inverse \alpha = O_p(1)$ and $r'(\frac{1}{T}\tilde X'\tilde X)^{-1}r = O_p(1)$.\label{Ass: OLS 3}
		
	\item The number of regressors $K$ may diverge with sample size $T$ such that $\frac{K}{T}<1-c$. \label{Ass: OLS 4}
		
\end{enumerate}
\end{ass}

Part (i) describes a specific violation of strict exogeneity where the present error term only affects the regressors in the subsequent period. If $\alpha=0$, then all regressors $x_t=\tilde{x}_t$ are strictly exogenous. Section \ref{sec: multi period} generalizes this assumption to allow for multi-period feedback and it can be generalized further to infinite-order feedback as long as the feedback magnitude decays sufficiently fast.\footnote{The results are available from the authors upon request.} Such a generalization covers the case with a lagged outcome variable used as a regressor as it leads to an infinite geometrically decaying feedback. The usual AR(p) model is a  special case of this. Linearity of the feedback is a natural starting point that arises if, for example, one assumes that all regressors and errors are jointly Gaussian (an assumption that originally motivated the OLS estimator). Furthermore, the wast majority of models estimated using macroeconomic data are linear due the limited length of time series data.\looseness=-1

The amount of assumptions imposed on the strictly exogenous part of the regressors $\{\tilde x_t\}_{t=1}^T$ is kept to a minimum, the main part of which is a full rank condition. Specifically, we do not assume stationarity or impose any moment conditions on $\tilde X$. Such generality is possible primarily because our analysis is done conditionally on $\tilde X$. One could impose additional assumptions on the evolution of $\tilde x_t$ and potentially use them to improve efficiency of estimators. However, such assumptions increases the potential for mis-specification and severely reduces the applicability of the results.  

Part (ii) is a standard set of assumptions on the error terms in homoskedastic regression models. Allowing for auto-correlation of error terms is a possible generalization, though we leave it to future research.  For such generalization, the main attention should be placed on a distinction between the feedback versus auto-correlation. Specifically, one could decompose the regression error into a sum of two parts: one that produces feedback but no autocorrelation (as in the current paper)  and one that auto-correlates but does not produce any feedback. The feedback leads to a bias in OLS while the autocorrelated part does not cause bias but rather necessitates different formulae for standard errors of linear estimators.

Part (iii) places a very loose bound on the magnitudes of $\alpha$ and $r$ relative to the scaled design matrix $\frac{1}{T}\tilde X'\tilde X$. This condition is, for example, satisfied if $\norm{\alpha}$ and $\norm{r}$ are bounded and the smallest eigenvalue of $\frac{1}{T}\tilde X'\tilde X$ is separated from zero.
Part (iv) allows a wide range for the number of regressors, including a small fixed number. It requires that the degrees of freedom $T-K$ diverges to infinity as the sample size increases. 

The violation of strict exogeneity in Assumption \ref{Ass: OLS} seems minimal as it is solely due to feedback from the dependent variable $y_t$ to the one-period-ahead regressors $x_{t+1}$ and the magnitude of the feedback is bounded. However, this violation is enough to produce inconsistency of the OLS estimator.

\subsection{Parameter estimator}

In order to provide some intuition for how and why the OLS bias arises, we consider a special case. Suppose, therefore, that only the first regressor fails to be strictly exogenous. Then, the first regressor experiences a one-period feedback: $x_{1t}=\tilde{x}_{1t} + a\varepsilon_{t-1}$, while the $T\times (K-1)$ matrix of all other regressors, $X_{-1}=\tilde{X}_{-1}$, is strictly exogenous. Let $X_1$ be the $T\times 1$ vector with elements $x_{1t}$ and $\tilde X_1$ its strictly exogenous part. Furthermore, we take $\tilde X$ as fixed and suppose that $\frac{1}{T}\tilde X'\tilde X=I_K$. 

In this special case, we now derive the bias of the OLS estimator for the first coefficient $\beta_1$. According to the Frisch-Waugh-Lovell theorem, we have
\begin{align}
    \hat{\beta}_1^\text{OLS}=\frac{X_1'M_{-1}y}{X_1'M_{-1}X_1}
    \quad \text{and} \quad
    \hat\beta_1^\text{OLS}-\beta_1=\frac{X_1'M_{-1}\varepsilon}{X_1'M_{-1}X_1},
\end{align}
where the partialling out operator $M_{-1}=I-\tilde X_{-1}(\tilde X_{-1}'\tilde X_{-1})^{-1}\tilde X_{-1}'$ is the projection on the space orthogonal to $\tilde{X}_{-1}$. Notice that $M_{-1}=\big(M_{st}^{*}\big)$ is fixed (or can be conditioned on) so that all randomness comes from $\varepsilon$. The denominator of $ \hat\beta_1^\text{OLS}-\beta_1$ has a standard asymptotic behavior and once normalized by $\frac{1}{T}$ converges to a non-random and non-zero limit.

If all regressors were strictly exogenous ($X_1=\tilde X_1$), then the OLS estimator would have been properly centered. However, $X_1$ contains randomness that is correlated with $\varepsilon$. This leads to a non-zero expectation of the numerator in $\hat\beta_1^\text{OLS}-\beta_1$: 
\begin{align}
    \E[X_1'M_{-1}\varepsilon \mid \tilde X]=\E[\tilde X_1'M_{-1}\varepsilon\mid \tilde X]+a \E[\sum_{s,t}M^{*}_{st}\varepsilon_{s-1} \varepsilon_t\mid \tilde X]= a \sigma^2 \sum_t M^{*}_{tt-1}.
\end{align}
While the weakly exogenous regressor $x_{1t}$ is not correlated with the contemporaneous error $\varepsilon_t$, the process of partialling out the remaining regressors mixes up the timing of the observations and makes the partialled out regressor correlated with the contemporaneous error.  This derivation suggest a form for the leading term of the bias of $\hat{\beta}_1^\text{OLS}$. We may notice that the bias depends on $a$ which governs the magnitude of feedback in the regressors and on the trace of the lower diagonal of the projection matrix $M_{-1}$ ($\sum_t M^{*}_{tt-1}$). As discussed below, this lower trace may be on the order of the number of regressors and increases with their average autocorrelation. It is also possible to show that the remaining OLS coefficients (e.g., $\hat\beta_2^\text{OLS}$) have an asymptotically negligible bias (although they have a finite sample bias). Thus for any linear combination $\theta=r'\beta$, the bias depends on the weight placed on $\beta_1$.


The insight from the preceding special case can be extended. In fact, the OLS estimator of a linear contrast remains invariant under linear transformations of the regressors. For instance, if we perform a regression of $Y$ on $XA$, where $A$ is a $K\times K$ matrix with full rank that linearly transforms the regressors, then the OLS serves as an estimator of $A^{-1}\beta$. To achieve the contrast $\theta$, we should apply $A'r$ as the weighting.

Any OLS scenario simplifies to the one discussed earlier when we choose $A=(\frac{1}{T}\tilde X'\tilde X)^{-1/2}$, with the square root selected to ensure that $A'\alpha$ is proportional to the first basis vector. These insights, coupled with certain technical derivations, lead to the following characterization of the OLS bias.

\begin{thm}[Inconsistency of OLS estimator]\label{thm- OLS}
    Suppose Assumption \ref{Ass: OLS} holds. Then,
    \begin{align}
        r'\hat \beta^\emph{OLS} - r'\beta = \sigma^2 r' \bar S^{-1} \alpha \sum\nolimits_{t=2}^{T} \tilde M_{tt-1} + o_p(1),
    \end{align}
    where $\bar S = \tilde X'\tilde X + \alpha \alpha' \sigma^2 (T-K)$ and $\tilde M = I - \tilde X(\tilde X'\tilde X)^{-1} \tilde X'$.
\end{thm}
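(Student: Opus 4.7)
The plan is to work conditional on $\tilde X$ and expand $r'\hat{\beta}^{\text{OLS}}-r'\beta = r'(X'X)^{-1}X'\varepsilon$ by combining the decomposition $X=\tilde X + u\alpha'$, where $u=(\varepsilon_0,\varepsilon_1,\dots,\varepsilon_{T-1})'$, with a Woodbury-style inversion. Writing $S=\tilde X'\tilde X$, direct multiplication gives the rank-two identity
\begin{align}
X'X = S + \bigl[\alpha,\,\tilde X'u\bigr]\begin{pmatrix} u'u & 1 \\ 1 & 0\end{pmatrix}\bigl[\alpha,\,\tilde X'u\bigr]',
\end{align}
so Woodbury yields $(X'X)^{-1}=S^{-1}-S^{-1}P Q^{-1} P' S^{-1}$ with $P=[\alpha,\tilde X'u]$ and $Q=\begin{pmatrix} a & 1+g \\ 1+g & -u'\tilde M u\end{pmatrix}$, where $a=\alpha'S^{-1}\alpha$ and $g=\alpha'S^{-1}\tilde X'u$. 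A short determinant computation gives $\det Q = -D$ with $D:=(1+g)^2+a\,u'\tilde M u$, after which solving the linear system delivers the closed form $(X'X)^{-1}\alpha = D^{-1}[(1+g)S^{-1}\alpha - a\,S^{-1}\tilde X'u]$.

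Splitting $X'\varepsilon = \tilde X'\varepsilon + (u'\varepsilon)\alpha$ decomposes the bias into two pieces. The piece $(u'\varepsilon)\,r'(X'X)^{-1}\alpha$ is $o_p(1)$ by an order check: $u'\varepsilon=\sum_t \varepsilon_{t-1}\varepsilon_t = O_p(\sqrt{T})$ as a conditional martingale-difference sum, while the closed form together with Assumption \ref{Ass: OLS}\ref{Ass: OLS 3} gives $r'(X'X)^{-1}\alpha=O_p(1/T)$. The piece $r'(X'X)^{-1}\tilde X'\varepsilon$ unfolds into the centered term $r'S^{-1}\tilde X'\varepsilon$ plus four Woodbury cross-terms built from $\rho=r'S^{-1}\alpha$, $\mu=r'S^{-1}\tilde X'u$, $\eta=\alpha'S^{-1}\tilde X'\varepsilon$, and $\zeta=u'\tilde P\varepsilon$ with $\tilde P=I-\tilde M$. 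Cauchy--Schwarz combined with Assumption \ref{Ass: OLS}\ref{Ass: OLS 3} gives $\rho=O_p(1/T)$ and $\mu,\eta=O_p(1/\sqrt T)$; together with $u'\tilde M u=O_p(T)$ this renders every cross-term $o_p(1)$ except the single term $-\rho(1+g)\zeta/D$.

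This surviving term produces the announced bias. Because $u_t=\varepsilon_{t-1}$ and $\{\varepsilon_s\}_{s=0}^T$ are conditionally i.i.d.\ with mean zero given $\tilde X$, the bilinear form $\zeta$ satisfies
\begin{align}
\E[\zeta\mid\tilde X]=\sigma^2\sum_{t=2}^T \tilde P_{t,t-1} = -\sigma^2\sum_{t=2}^T \tilde M_{t,t-1},
\end{align}
and a Chebyshev bound using $\|\tilde P\|_F^2=K$ shows $\rho(\zeta-\E[\zeta\mid\tilde X])=o_p(1)$. Concentration of $u'\tilde M u$ around $(T-K)\sigma^2$ and $g=o_p(1)$ give $D=1+a^*+o_p(1)$ with $a^*=(T-K)\sigma^2\alpha'S^{-1}\alpha$. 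The surviving term therefore reduces to $\rho\,\sigma^2\sum_{t=2}^T\tilde M_{t,t-1}/(1+a^*)$, and a Sherman--Morrison identity applied to $\bar S=S+(T-K)\sigma^2\alpha\alpha'$ yields $r'\bar S^{-1}\alpha=\rho/(1+a^*)$, matching the claim.

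The main obstacle is the disciplined bookkeeping of remainders: every cross-product generated by Woodbury must be controlled uniformly in $K,T$ using only the scalar bounds of Assumption \ref{Ass: OLS}\ref{Ass: OLS 3}. The delicate ingredients are the conditional mean and variance of the two quadratic forms $u'\tilde M u$ and $\zeta$, which exploit the one-step timing overlap between $u$ and $\varepsilon$: all diagonal contributions to $\E[\zeta\mid\tilde X]$ vanish by independence, leaving exactly the first subdiagonal of $\tilde P$, i.e., the trace $\sum_{t\geq 2}\tilde M_{t,t-1}$ that appears in the theorem.
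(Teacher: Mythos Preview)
Your argument is correct and follows a genuinely different route from the paper. The paper proves Theorem~\ref{thm- OLS} only as the special case $\Gamma=0$ of Theorem~\ref{Thm: IV}\ref{Thm: IV 1} (itself the $L=1$ case of Theorem~\ref{thm: consistency of multi-period version}). That argument first applies an invertible rotation of the regressors (Lemma~\ref{lem: rotation}) so that $r$ and $\alpha$ lie in the span of the first two coordinates; a generalized Frisch--Waugh--Lovell identity (Lemma~\ref{lem: Frisch-Waugh}) then reduces the $K$-dimensional problem to a $2\times 2$ system whose partialling-out matrix $\tilde M_2$ is strictly exogenous; and finally the discrepancy between $\tilde M_2$ and $\tilde M$ is absorbed as a bounded rank-two correction. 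By contrast, you keep the full $K\times K$ design and exploit directly that $X'X$ is a rank-two perturbation of $\tilde X'\tilde X$: Woodbury plus Sherman--Morrison collapses everything to scalar bookkeeping in $\rho,\mu,\eta,\zeta,g,a,D$, and the conditional moments of $\zeta=u'\tilde P\varepsilon$ and $u'\tilde M u$ deliver both the leading bias and the normalizing constant $1+a^*$ at once. Your route is more elementary and self-contained for the OLS case---no rotation lemma, no oblique projections---and makes the emergence of $\sum_{t}\tilde M_{t,t-1}$ completely transparent. The paper's route requires more machinery, but that machinery (Lemmas~\ref{lem: rotation}--\ref{lem: order of terms infinite lag}) then handles the general IV estimator $\hat\beta^{\text{IV}}(\Gamma)$, the variance estimator, and multi-period feedback without new ideas; your rank-two Woodbury would not extend as cleanly once $\Gamma\neq 0$ or $L>1$, since the perturbation of $\tilde X'(I-\Gamma)\tilde X$ is no longer low-rank in the same simple way.
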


Theorem \ref{thm- OLS} presents the leading term of the OLS bias under Assumption \ref{Ass: OLS}. Note that this formula uses the lower diagonal trace $\sum_t \tilde M_{tt-1}$ of the projection matrix orthogonal to $\tilde X$ instead of $\sum_t M^{*}_{tt-1}$. Theorem \ref{thm- OLS} establish asymptotic negligibility of this difference.

It is worth discussing the size of the lower trace of the projection matrix as well as the size of the biases we may see in applications. Consider the term $\sum_{t}\tilde{M}_{tt-1}=-\sum_t\tilde{P}_{tt-1}$, where $\tilde{P}=\tilde{X}(\tilde{X}'\tilde X)^{-1}\tilde X'$. Note that this quantity is unchanged by any full rank rotation of the regressors. For simplicity, suppose that the regressors are generated by a stationary and ergodic process with $\frac{1}{T}\tilde{X}'\tilde X=I_K$. Then 
\begin{align}
    \sum_{t}\tilde{M}_{tt-1}=-\frac{1}{T}\sum_{t}\tilde{X}_t'\tilde{X}_{t-1}\approx -\E\tilde{X}_t'\tilde{X}_{t-1}.
\end{align}
Thus, the lower diagonal trace of $\tilde M$ measures a linear connection between $\tilde{X}_t$ and $\tilde X_{t-1}$. We can also notice that $\frac{1}{T}\sum_{t}\tilde{X}_t'\tilde{X}_{t}= \E\tilde{X}_t'\tilde{X}_{t}=K$, since we are dealing with $K$-dimensional vectors. In time series settings we tend to work with data where the regressors are autocorrelated. This way we should expect $\sum_{t}\tilde{M}_{tt-1}\approx -\rho K$, where $\rho$ is the average (over different regressors) scalar measure of regressor predictability. 

We can notice that $\bar S \geq \tilde X'\tilde X $ which leads to Assumption \ref{Ass: OLS}\ref{Ass: OLS 3} implying $r' \bar S^{-1} \alpha=O_p(1/T)$. Let us momentarily assume $Tr' \bar S^{-1} \alpha$ converges to a constant $r_\alpha$. When $K$ grows proportionally to the sample size while $\rho$ remains bounded away from zero, the leading bias term becomes approximately $\sigma^2 r_\alpha\cdot\frac{\rho K}{T}$. This renders the OLS estimator for $r'\beta$ inconsistent. Notably, the potential bias term is more pronounced for regressors with higher first-order autocorrelation and a larger number of regressors. 

Although $\frac{1}{T}\sum\nolimits_{t=2}^{T} \tilde M_{tt-1}$ (the lower diagonal trace of $\tilde M$ over the sample size) is typically unavailable in practice as the strictly exogenous part of the regressors is unobservable, the asymptotically equivalent lower diagonal trace of $M=I-X(X'X)^{-1}X'$ over the sample size can be calculated from available data. If this quantity is non-negligible in an application, even a small violation of strict exogeneity may result in substantial biases.

Lastly, the extent of bias depends on the alignment between the contrast $r$ and the feedback direction $\alpha$, making the feedback direction the most affected contrast direction. Contrasts in all directions orthogonal to $\alpha$ (using the scalar product weighted by $\bar S^{-1}$) experience only negligible bias. In our special case where only the first regressor is weakly exogenous, this observation corresponds to the OLS estimator of $\beta_1$ being the sole estimator with significant bias. Thus, for any linear contrast $r'\beta$, the bias depends on the weight placed on $\beta_1$. In real applications, the feedback direction is unknown and challenging to estimate empirically, given that $\alpha$ is a $K\times 1$ vector.

\paragraph{Simulations} We demonstrate the potential for OLS bias through a small scale simulation that varies the number of regressors and their short-term dependence. Data is generated following Assumption \ref{Ass: OLS}. The outcome vector is generated as $y = X\beta + \varepsilon$ with $\varepsilon \sim N(0,I)$ and $\beta=0$. The design matrix is generated as $x_{1t} = \tilde x_{1t} + a\varepsilon_{t-1}$ and $X_{-1}=\tilde X_{-1}$, where $\tilde X$ is generated as a rotated VAR(1) process with $\tilde X\tilde X'/T=I_K$, independent from $\varepsilon$. Specifically, we generate $V_t=\rho V_{t-1}+u_t$  with $\{u_t\}_{t=1}^T$ \emph{i.i.d.} $N(0,I_K)$ and define $\tilde X=V(V'V/T)^{-1/2},$ where the square root comes from Cholesky decomposition. Across simulations, we fix the sample size at $T=200$ and the coefficient on the feedback mechanism at $a=1.5$. Simulation results are summarized in Figure \ref{fig:bias} with the left panel showing results for the number of regressors $K$ between $4$ and $100$ (fixing $\rho$ at $0.8$).\footnote{All results are presented as sixth order polynomial fits to the actual results across $K$.} The right panel reports the results for the auto-correlation in regressors $\rho$ between $0$ and $0.98$ (fixing $K$ at $50$). We report simulated values of absolute bias and standard deviation for the first coordinate of OLS together with the mean absolute value of the ratio of the lower trace of $M$ to the sample size. Additionally, we report the same summary statistics for the new estimator proposed in Section \ref{sec: IV estimator}.

\begin{figure}[htbp]
\centering
\includegraphics[width=\columnwidth]{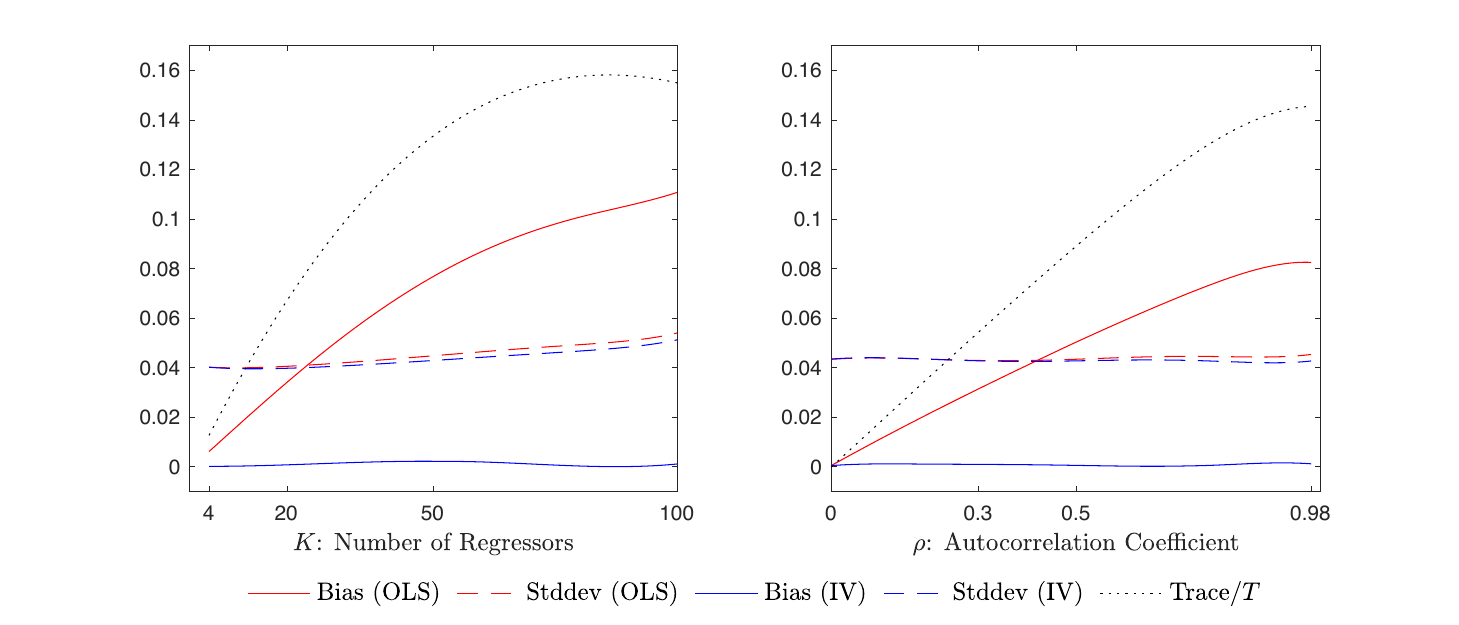}
\caption{Absolute Bias and Standard Deviation of OLS and IV with $T=200$}
\label{fig:bias}
\end{figure}

The results presented in Figure \ref{fig:bias} suggest that the bias of the OLS estimator can easily surpass its standard deviation, leading to highly unreliable statistical inferences. For example, even a regression with just 20 regressors may exhibit an OLS bias comparable in magnitude to the standard deviation in a sample of size 200 --- a very common setting in macroeconomic applications. The observable lower trace of $M$ divided by the sample size provides a highly predictive measure of the magnitude of the bias in the most affected direction. Notably, both the bias in the most affected direction and the lower diagonal trace tend to increase with the number of regressors and the first auto-correlation of the regressors.\footnote{
One may be worried that the observed biases are due to persistence in the regressors as the coefficient $\rho$ in an AR(1) process measures not only short-term dependence but also the long-run persistence. We re-run simulations generating $V_t = \rho u_{t-1} + u_t$ as an MA(1) process. The results are presented in Appendix \ref{app: simulations} and are essentially identical to those reported in Figure \ref{fig:bias}. }

Figure \ref {fig:bias4} presents results for the same simulation design but with a sample size of $T=800$, while the number of regressors varies from 16 to 400. Here, we observe that the bias reaches the same level as in Figure \ref{fig:bias} when the number of regressors is the same fraction of the sample size, while the standard deviations drops two-fold. This demonstrates the inconsistency of the OLS for the worst direction when the number of regressors $K$ grows proportionally to $T$. In essence, the estimator concentrates around an incorrect value as the sample size increases.

\begin{figure}[htbp]
\centering
\includegraphics[width=\columnwidth]{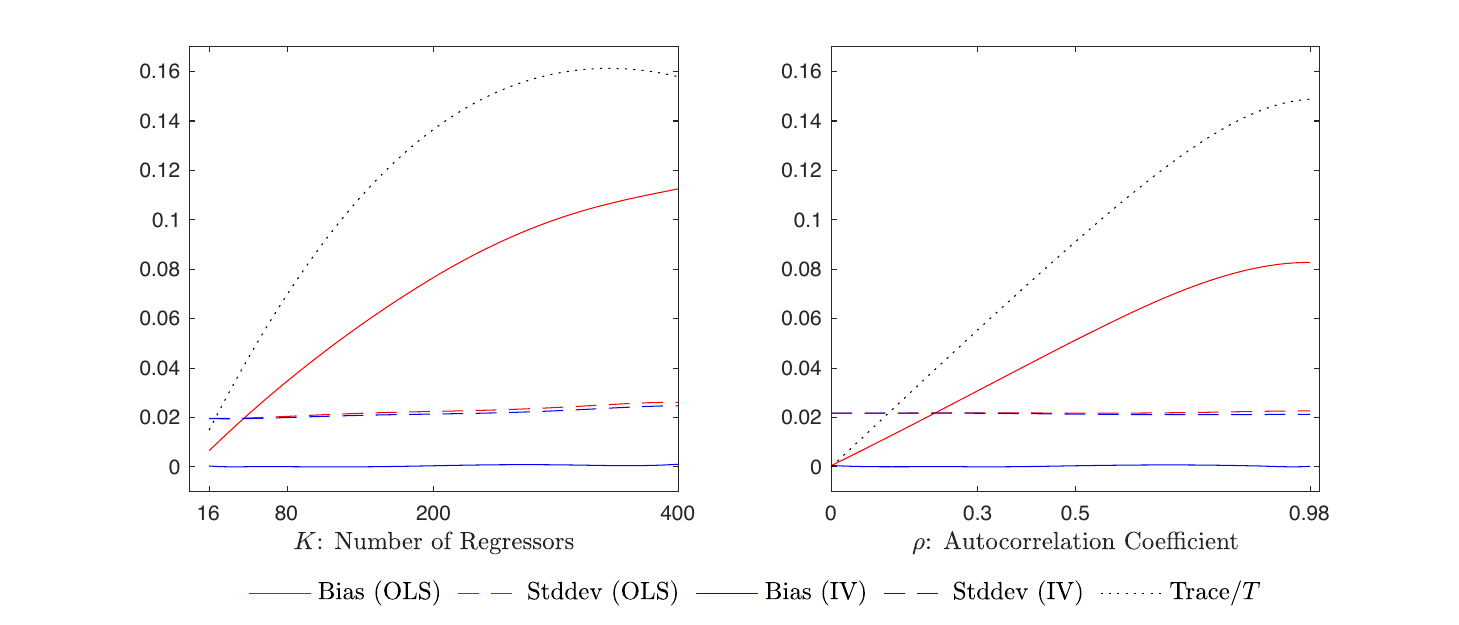}
\caption{Absolute Bias and Standard Deviation of OLS and IV with $T=800$}
\label{fig:bias4}
\end{figure}

\subsection{Variance estimator}

In order to make reliable statistical inferences in a linear regression we usually need an estimator for the variance of the error term $\sigma^2$. It is well known that in a regression with many regressors and strict exogeneity one has to adjust properly for the degrees of freedom in order to correct for over-fitting. The most commonly applied OLS estimator of the error variance uses this adjustment $\hat\sigma^2=\frac{y'My}{T-K}$. It is not obvious ex ante whether this estimator retains consistency with many regressors and only weak exogeneity. On the one hand, a large bias of the OLS estimator for coefficients raises concerns about the consistency of the variance estimator. On the other hand, the bias arises only in the direction of the feedback but not in all directions orthogonal to the feedback, which are numerous. Theorem \ref{them: inconsistency of variance} answers the question of the consistency of the OLS variance estimator.

\begin{thm}[Inconsistency of OLS variance]\label{them: inconsistency of variance}
    Suppose Assumption \ref{Ass: OLS} holds. Then,
    \begin{align}
        \frac{\hat \sigma^2}{\sigma^2} = 1 - \frac{\sigma^2 \alpha' \bar S^{-1} \alpha}{T-K} \left( \sum\nolimits_{t=2}^{T} \tilde M_{tt-1}\right)^2 + o_p(1).
    \end{align}
\end{thm}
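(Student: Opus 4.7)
The plan is to reduce the problem to a canonical coordinate system in which only one regressor carries feedback, apply the Frisch--Waugh--Lovell decomposition to $\varepsilon'M\varepsilon$, control the three resulting quadratic forms in the errors, and finally identify the leading term with the claimed expression $\sigma^2 \alpha'\bar S^{-1}\alpha\,(\sum\nolimits_{t=2}^T\tilde M_{tt-1})^2/(T-K)$. The residual variance $\hat\sigma^2 = \varepsilon'M\varepsilon/(T-K)$ depends on $X$ only through its column space, hence is invariant under $X \to XA$ for any invertible $K\times K$ matrix $A$. Such a change sends $\tilde X \to \tilde XA$, $\alpha \to A'\alpha$, $\bar S \to A'\bar SA$, and leaves $\alpha'\bar S^{-1}\alpha$, the projector $\tilde M$, and $\sum\nolimits_{t=2}^T\tilde M_{tt-1}$ invariant. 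Choosing $A$ with $A'\tilde X'\tilde XA/T = I_K$ and $A'\alpha = ae_1$ reduces the argument to the canonical case where $\tilde X_{-1} = X_{-1}$ is strictly exogenous, $X_1 = \tilde X_1 + a\varepsilon_L$ with $\varepsilon_L := (\varepsilon_0,\dots,\varepsilon_{T-1})'$, and $\tilde X_1'\tilde X_1 = T$ with $\tilde X_1'\tilde X_{-1} = 0$; Sherman--Morrison then gives $\alpha'\bar S^{-1}\alpha = a^2/D$ with $D := T + a^2\sigma^2(T-K)$.

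With $M_{-1} := I - \tilde X_{-1}(\tilde X_{-1}'\tilde X_{-1})^{-1}\tilde X_{-1}'$, partialling out $X_{-1}$ via Frisch--Waugh--Lovell yields
\begin{align}
    \varepsilon'M\varepsilon = \varepsilon'M_{-1}\varepsilon - \frac{(\varepsilon'M_{-1}X_1)^2}{X_1'M_{-1}X_1}.
\end{align}
Conditioning on $\tilde X$, the i.i.d.\ structure of $\{\varepsilon_t\}_{t=0}^T$ in Assumption \ref{Ass: OLS}\ref{Ass: OLS 2} kills most cross-moments: $\E[\varepsilon'M_{-1}\tilde X_1\mid\tilde X] = \E[\tilde X_1'M_{-1}\varepsilon_L\mid\tilde X] = 0$, while the only nonzero cross-lag mean is $\E[\varepsilon'M_{-1}\varepsilon_L\mid\tilde X] = \sigma^2 \sum\nolimits_{t=2}^T(M_{-1})_{tt-1} =: \sigma^2 m$. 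This is the same cross-lag mechanism that drives the OLS inconsistency in Theorem \ref{thm- OLS}.

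Chebyshev combined with the bounded fourth conditional moment then delivers the concentration bounds
\begin{align}
    \varepsilon'M_{-1}\varepsilon = \sigma^2(T-K+1) + O_p(\sqrt T),\; \varepsilon'M_{-1}X_1 = a\sigma^2 m + O_p(\sqrt T),\; X_1'M_{-1}X_1 = D + O_p(\sqrt T).
\end{align}
A first-order expansion of the ratio in the perturbed denominator $D + O_p(\sqrt T)$ yields
\begin{align}
    \frac{(\varepsilon'M_{-1}X_1)^2}{X_1'M_{-1}X_1} = \frac{a^2\sigma^4 m^2}{D} + O_p(\sqrt T),
\end{align}
so that, after subtracting from $\varepsilon'M_{-1}\varepsilon$ and dividing by $T-K$,
\begin{align}
    \hat\sigma^2 = \sigma^2\left(1 - \frac{\sigma^2\,\alpha'\bar S^{-1}\alpha\, m^2}{T-K}\right) + o_p(1).
\end{align}

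It remains to replace $m$ by $\sum\nolimits_{t=2}^T\tilde M_{tt-1}$. Writing $\tilde M = M_{-1} - M_{-1}\tilde X_1\tilde X_1'M_{-1}/(\tilde X_1'M_{-1}\tilde X_1)$, the difference equals $z'Lz/\|z\|^2$ for $z = M_{-1}\tilde X_1$ and the one-step lag shift $L$, hence is $O_p(1)$ since $\|L\|=1$; because $m = O_p(T)$, the replacement of $m^2$ contributes $O_p(T)/[(T-K)D] = o_p(1)$ to the theorem's right-hand side. The principal technical obstacle is stochastic-rate bookkeeping: $m$ can be of order $T$, so the Frisch--Waugh--Lovell ratio is itself of order $T$ and competes head-on with $\varepsilon'M_{-1}\varepsilon$. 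The $O_p(\sqrt T)$ fluctuations in each of the three quadratic forms must therefore be carefully propagated through the ratio expansion so that the aggregate error after dividing by $T-K$ is only $o_p(1)$; the variance bounds rest on the bounded fourth conditional moment in Assumption \ref{Ass: OLS}\ref{Ass: OLS 2}, while Assumption \ref{Ass: OLS}\ref{Ass: OLS 3}--\ref{Ass: OLS 4} ensures $D$ is of order $T$ so the expansion is well-behaved.
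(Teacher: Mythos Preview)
Your proof is correct and follows essentially the paper's approach: the paper obtains Theorem~\ref{them: inconsistency of variance} as the $\Gamma=0$ special case of Theorem~\ref{Thm: consistency of variance}, whose proof uses the same rotation + Frisch--Waugh--Lovell + quadratic-form concentration structure you employ. The only cosmetic difference is that the paper's rotation (Lemma~\ref{lem: rotation}) bundles $r$ with $\alpha$ and hence partials out $K-2$ columns, whereas you, observing that $r$ is irrelevant for $\hat\sigma^2$, partial out $K-1$.
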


Theorem \ref{them: inconsistency of variance} states that the OLS estimator of the error variance is biased downward and reports an overly optimistic measure of fit in a setting with one-period feedback and many regressors. The size of the bias can be judged by the trace of the lower diagonal of the projection matrix $M$. When the number and predictability of the regressors are high enough to imply that the OLS estimator of the linear contrast in the worst direction (the feedback direction) is inconsistent, the OLS estimator of variance is inconsistent as well. Despite this inconsistency result, the analogues biases we observe in simulations tend to be relatively minor.\looseness=-1

\section{A consistent IV estimator}
\label{sec: IV estimator}

\subsection{The idea of the proposed estimator}

Let us introduce a $T\times T$ shift matrix $D$ (or lag operator matrix) that shifts the time series index back by one; its only non-zero elements are $D_{t,t-1}=1$ for all $t$. The transpose $D'$ is the lead operator, moving the time-series index forward by one. Notice that the lower trace appearing in the bias of OLS can be written as $\sum\nolimits_{t=2}^{T} \tilde M_{tt-1} = \tr(D'\tilde M)$. Consider also a $T\times T$ matrix $\Gamma$ measurable with respect to $\tilde X$ such that $\|\Gamma\|\leq 1-c$.

We propose an IV-inspired estimator that relies on an endogenous instrument $Z=(I-\Gamma')X$: 
\begin{align}
    \hat{\beta}^\text{IV} (\Gamma)= (Z'X)\inverse Z'y=(X'(I-\Gamma )X)^{-1}X'(I-\Gamma )Y.
\end{align}
The key insight behind the estimator is that we use a deliberately invalid instrument created in a way so the `invalid instrument' bias offsets the bias in OLS. In order to eliminate the asymptotic bias we require that $\Gamma$ solves the following (non-linear) one-dimensional equation:
\begin{align}\label{eq:qform1}
    \tr\!\big[ D'(I-\Gamma )\tilde M_\Gamma \big]=0
\end{align}
where $\tilde M_\Gamma=I-\tilde{X}(\tilde X'(I-\Gamma )\tilde X)^{-1}\tilde X'(I-\Gamma )=I-\tilde X(\tilde Z'\tilde X)^{-1}\tilde Z'$ is an oblique projection off $\tilde X$ in the direction of $\tilde Z=(I-\Gamma')\tilde X$.
Further below, we establish that the new estimator $\hat \beta^\text{IV}(\Gamma)$ is a properly centered $\sqrt{T}$ asymptotically Gaussian (conditionally on $\tilde X$) estimator under Assumption \ref{Ass: OLS}. The main theoretical result about bias is obtained for a general $\Gamma$ with the leading example of $\Gamma=\gamma D$ for $|\gamma|<1-c$.

Let us give some intuition of why this approach would work. Consider again the special case when all regressors but the first are strictly exogenous. Namely $x_{1t}=\tilde x_{1t}+a \varepsilon_{t-1}$, while $X_{-1}=\tilde{X}_{-1}$ and consider the first coefficient $\beta_1$ only, as it is the most biased direction. Consider also the case of $\Gamma=\gamma D$, thus $z_t=x_t-\gamma x_{t+1}$. In this setting all but the first instruments are strictly exogenous, while the first instrument $z_{1t}=\tilde z_{1t}+\varepsilon_{t-1}-\gamma\varepsilon_t$ is endogenous as it correlates with the contemporaneous regression error. The direction of the instrument endogeneity in general coincides with the feedback direction, as the transformation $\Gamma$ mixes the timing of the observations but preserves the feedback direction.

Oblique projections underlie the geometry of IV estimation similarly to how orthogonal projections explain the geometry of OLS. An oblique projection is defined as $M_{Z,X}=I-X(Z'X)^{-1}Z'$, where $X$ and $Z$ are of the same dimension and $Z'X$ is invertible. Oblique projections satisfy idempotency, $M_{Z,X}^2=M_{Z,X}$, but not symmetry, $M_{Z,X}' \neq M_{Z,X}$. One can easily show that the Frisch-Waugh-Lowell theorem holds for oblique projections as well. Specifically, let $Z=[Z_1;Z_{-1}]$ where the dimensions of the corresponding Z's and X's coincide and all proper matrices are invertible. Then 
\begin{align}
    \hat\beta_1^\text{IV} (\Gamma) = \frac{Z_1'M_{Z_{-1},X_{-1}}Y}{Z_1'M_{Z_{-1},X_{-1}}X_1} 
    \quad \text{and} \quad
    \hat\beta_1^\text{IV} (\Gamma)-\beta_1=\frac{Z_1'M_{Z_{-1},X_{-1}}\varepsilon}{Z_1'M_{Z_{-1},X_{-1}}X_1}.
\end{align}
Notice that since $X_{-1}$ is strictly exogenous, $Z_{-1}$ is strictly exogenous as well. If we condition on $\tilde X$ we may then treat $M_{Z_{-1},X_{-1}}$ as fixed. 
We look at the numerator:
\begin{align}\label{eq: numerator IV}
\E [Z_1'M_{Z_{-1},X_{-1}}\varepsilon]=\E[\tilde{Z}_1'M_{Z_{-1},X_{-1}}\varepsilon]+ a\E[\sum_{s,t} M^*_{st}(\varepsilon_{s-1}-\gamma \varepsilon_s)\varepsilon_t]= a\sigma^2\sum_t(M_{tt-1}^*-\gamma M_{tt}^*).
\end{align}
Here we used $M_{Z_{-1},X_{-1}} = (M^*_{st})$ for shortness of notation. Our goal is to choose $\gamma$ in a manner that renders the last sum equal to zero. This should generally be feasible, given that the diagonal elements of projection matrices tend to dominate those on the lower diagonal. In the proof, we show that changing $M_{Z_{-1},X_{-1}}$ to $\tilde M_\Gamma$ in the bias expression introduces only an asymptotically negligible difference. Thus, the expectation of the numerator is asymptotically equivalent to $a\sigma^2\tr[D'(I-\Gamma)\tilde M_\Gamma]$. By selecting $\Gamma$ to solve equation \eqref{eq:qform1}, we therefore achieve an asymptotically unbiased estimator.

\subsection{Consistency of estimator}

Let $\gamma_0$ denote the solution to \eqref{eq:qform1} among matrices $\Gamma_0 = \gamma_0 D$. Since we only observe regressors $X$, knowing $\tilde{X}$ is equivalent to knowing the feedback direction $\alpha$. This makes solving equation (\ref{eq:qform1}) infeasible in practice. Let $\hat \gamma $ and $\hat\Gamma=\hat\gamma D$ be the solution to the empirically feasible equation:
\begin{align}\label{eq:qform2}
    \tr\!\big[ D'(I-\hat\Gamma) M_{\hat\Gamma} \big] =0 
    \quad \text{where} \quad 
    M_{\hat\Gamma}=I-{X}( X'(I-\hat\Gamma ) X)^{-1} X'(I-\hat\Gamma ).
\end{align}
The following Lemma gives sufficient conditions for existence and uniqueness of a solution to equation \eqref{eq:qform1}.

\begin{lemma}\label{lem: solution exists}
    Suppose that $\tilde X$ is a $T\times K$ matrix of rank $K$ and $\Gamma=\gamma D$.
    \begin{enumerate}[label=(\roman*)]
        \item If $K < T/5$, then there exists a unique $\gamma \in [-1/2,1/2]$ solving equation \eqref{eq:qform1}. \label{lem: solution exists 1}
        
        \item If $|\tr(D'\tilde M)|\leq \mu^2 K$ and $K < T/[1+(1+\mu)^2]$ for some $\mu \in [0,1]$, then there exists a unique $\gamma$ solving equation \eqref{eq:qform1} such that $ |\gamma|<\mu/(1+\mu)$. \label{lem: solution exists 2}
    \end{enumerate}
\end{lemma}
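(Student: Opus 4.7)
The plan is to treat $f(\gamma) := \tr[D'(I-\gamma D)\tilde M_\gamma]$ as a smooth scalar function of $\gamma$ and prove existence by the intermediate value theorem combined with strict monotonicity of $f$ for uniqueness. Part (i) will follow from part (ii) by taking $\mu=1$, since $|\tr(D'\tilde M)| \leq K$ holds universally: $\tr(D'\tilde M) = -\tr(D'\tilde X(\tilde X'\tilde X)^{-1}\tilde X')$ is bounded in absolute value by the rank $K$ times $\|D\|=1$. As a preliminary reduction, the oblique projection $\tilde M_\gamma$ is invariant under $\tilde X \mapsto \tilde X R$ for any invertible $K\times K$ matrix $R$, so I would normalize $\tilde X'\tilde X = I_K$ without loss. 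Under this normalization, $B(\gamma) := \tilde X'(I-\gamma D)\tilde X = I_K - \gamma \tilde X'D\tilde X$ is invertible with $\|B(\gamma)^{-1}\|\leq 1/(1-|\gamma|)$ for $|\gamma|<1$, since $\|\tilde X'D\tilde X\|\leq 1$, and hence $f$ is smooth on this range.

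The central identity, established by direct computation using $\tilde M\tilde X = 0$, is
\begin{align*}
    \tilde M_\gamma = \tilde M + \gamma\, \tilde X B(\gamma)^{-1}\tilde X'D\tilde M.
\end{align*}
Substituting this into the definition of $f$ yields
\begin{align*}
    f(\gamma) = \tr(D'\tilde M) - \gamma\tr(D'D\tilde M) + \gamma\tr(B^{-1}H) - \gamma^2 \tr(B^{-1}J),
\end{align*}
where $H := \tilde X'D\tilde M D'\tilde X$ and $J := \tilde X'D\tilde M D'D\tilde X$. The critical structural observations are: $H$ is positive semidefinite with $\tr(H)\leq K$; writing $D'D = I - e_Te_T'$ and using $\tilde M\tilde X=0$ gives $J = -(\tilde X'D\tilde M e_T)(\tilde X'e_T)'$, which is rank-one with $\|J\|\leq 1$; and $\tr(D'D\tilde M)$ lies in $[T-1-K, T-K]$ since it equals $T-1-K$ plus the $(T,T)$ diagonal entry of the orthogonal projection onto $\tilde X$.

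For existence, I would evaluate $f$ at $\gamma_\pm = \pm\mu/(1+\mu)$, where $1/(1-|\gamma_\pm|) = 1+\mu$. Using the PSD bound $|\tr(B^{-1}H)|\leq \|B^{-1}\|\tr(H)\leq K(1+\mu)$ and the rank-one bound $|\tr(B^{-1}J)|\leq \|B^{-1}\|\leq 1+\mu$, together with the hypothesis $|\tr(D'\tilde M)|\leq \mu^2 K$, the threshold $T > K[1+(1+\mu)^2]$ (up to absolute-constant additive terms) forces $f(\gamma_+) < 0 < f(\gamma_-)$, producing a root by the intermediate value theorem.

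For uniqueness, I would differentiate $f$ using $dB^{-1}/d\gamma = B^{-1}(\tilde X'D\tilde X)B^{-1}$ and bound each derivative term with the same tools to get $f'(\gamma) \leq -(T-K) + K(1+\mu)^2 + O(1) < 0$ throughout the interval, so $f$ is strictly decreasing and the root is unique. The main obstacle is the delicate bookkeeping needed to match the sharp constants $K < T/[1+(1+\mu)^2]$; the decisive trick is exploiting the rank-one structure of $J$ (inherited from $\tilde M\tilde X = 0$ combined with $D'D = I - e_Te_T'$) to keep $\tr(D'D\tilde M_\gamma)$ within $O(1)$ of $T-K$, rather than shifting by $O(K)$, which would otherwise spoil the threshold.
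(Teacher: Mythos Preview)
Your argument is essentially correct but takes a genuinely different route from the paper. The paper recasts equation~\eqref{eq:qform1} as a fixed-point problem $\gamma = g(\gamma) := \tr(D'\tilde M_\Gamma)/(T-K)$, shows that $g$ maps $[-\mu/(1+\mu),\mu/(1+\mu)]$ into itself and is a contraction with Lipschitz constant strictly below $\mu$, and then invokes the Banach fixed-point theorem. Your approach instead treats $f(\gamma) = \tr[D'(I-\gamma D)\tilde M_\gamma]$ directly, uses the intermediate value theorem at the endpoints for existence, and strict monotonicity $f'(\gamma)<0$ for uniqueness. Both proofs rest on the same identity---the paper writes it as $\tilde M_\Gamma = \tilde M + \gamma\tilde P D\tilde M_\Gamma$, you write it as $\tilde M_\gamma = \tilde M + \gamma\tilde X B(\gamma)^{-1}\tilde X'D\tilde M$---and on the same operator-norm bounds $\norm{\tilde M_\Gamma}\le 1/(1-\abs{\gamma})$. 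The paper's contraction argument is somewhat slicker and, importantly, is constructive: it yields an iterative algorithm for computing $\gamma$ with a geometric rate, which the paper exploits later in proving Theorem~\ref{Thm: IV}\ref{Thm: IV 2}. Your approach is more elementary and makes the rank-one structure of $J$ explicit, which is a nice observation, but it requires explicit differentiation and more term-by-term bookkeeping.

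One caveat worth flagging: you honestly note that your endpoint and derivative estimates match the threshold $K[1+(1+\mu)^2]<T$ only ``up to absolute-constant additive terms,'' coming from the boundary correction $D'D = I - e_Te_T'$. Strictly speaking this means your argument establishes the conclusion under a marginally stronger hypothesis like $K[1+(1+\mu)^2]+C<T$ for some absolute $C$. The paper's proof sidesteps this by writing $\tr(D'D\tilde M_\Gamma)=T-K$, which tacitly drops the same $O(1)$ term $(\tilde M_\Gamma)_{TT}$; so neither proof is perfectly sharp at the stated threshold, and your acknowledgement of the slack is appropriate.
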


One may state an analog of Lemma \ref{lem: solution exists} for equation \eqref{eq:qform2} as well due to their analog structure. According to Lemma \ref{lem: solution exists}, equation \eqref{eq:qform1} typically can be solved in a one-dimensional family of transformations $\Gamma=\gamma D$. It is possible to search for a solution in other classes of matrices; we leave the question of finding an optimal class of transformations $\Gamma$ to future research. The proof of Lemma \ref{lem: solution exists} reveals that though equation \eqref{eq:qform1} is non-linear, its solution can be found relatively fast as a fixed point of a contraction.

One may think that $K<T/5$ in Part (i) of Lemma \ref{lem: solution exists} is a restrictive condition. It arises because we do not put any restriction on $\tilde{X}$ other than full rank. This accommodates an extensive range of idiosyncrasies in the generation of regressors, surpassing those encountered in typical macroeconomic time series data sets. Placing some restrictions on regressors may weaken the restriction on the sample size significantly. For example, putting a bound on the average auto-correlation of the regressors, $\mu$, eases this restriction considerably as shown in Part (ii) of Lemma \ref{lem: solution exists}. We also notice that if the original potential for bias is small, then the $\gamma$ that removes the bias is small as well. Specifically, if $\tr[ D'\tilde M ]=0$ so the original OLS has no (asymptotic) bias, our estimator defaults back to OLS.

The following theorem establishes the asymptotic bias of IV for a generic choice of $\Gamma$ and shows consistency for the specific choice of $\hat \Gamma = \hat \gamma D$. 

\begin{thm}\label{Thm: IV}
    Suppose Assumption \ref{Ass: OLS} holds.
    \begin{enumerate}[mode=unboxed, label=(\roman*)]
        \item If $\Gamma$ is $\tilde X$-measurable and $\|\Gamma\|<1-c$, then
        \begin{align}
            r'\hat \beta^\emph{IV}(\Gamma) -r'\beta = \sigma^2r'\bar{S}_\Gamma^{-1}\alpha \tr\!\big[ D'(I-\Gamma)\tilde M_\Gamma \big] + o_p(1),
        \end{align} 
        where $\bar S_\Gamma = \tilde X'(I-\Gamma)\tilde X + \sigma^2 \alpha \alpha' \tr\!\big[(I-\Gamma )\tilde M_\Gamma \big]$. Specifically, it follows that $r'\hat \beta^\emph{IV}(\Gamma)$ is consistent for $r'\beta$ when $\Gamma$ solves equation \eqref{eq:qform1}. \label{Thm: IV 1}

        \item If $K < T/5$, then 
        $\hat\gamma -\gamma_0 = O_p(\frac{1}{T}),$ and $r'\hat \beta^\emph{IV}(\hat\Gamma) -r'\hat \beta^\emph{IV}(\Gamma_0) =o_p(\frac{1}{\sqrt{T}})$. \label{Thm: IV 2}
    \end{enumerate}
\end{thm}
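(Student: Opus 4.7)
The plan for Part (i) starts from the identity $r'\hat \beta^\emph{IV}(\Gamma) - r'\beta = r'(X'(I-\Gamma)X)^{-1} X'(I-\Gamma)\varepsilon$ and the decomposition $X = \tilde X + e\alpha'$, where $e$ is the $T$-vector with $t$th entry $\varepsilon_{t-1}$. Substituting this into both factors turns the numerator into $\tilde X'(I-\Gamma)\varepsilon + \alpha\,e'(I-\Gamma)\varepsilon$ and the denominator into $B + UV'$, with $B := \tilde X'(I-\Gamma)\tilde X$ and a rank-two correction $UV'$ built from $\tilde X'(I-\Gamma)e$, $\alpha$, and $e'(I-\Gamma)e$. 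I would then apply Sherman-Morrison-Woodbury to express $(X'(I-\Gamma)X)^{-1}$ as a perturbation of $B^{-1}$ controlled by the $2\times 2$ inverse $(I_2 + V'B^{-1}U)^{-1}$, whose entries are bilinear in $\varepsilon$.

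The remainder of Part (i) is essentially bookkeeping. All bilinear and quadratic forms arising in the expansion are of the form $\varepsilon'A\varepsilon$ (conditional mean $\sigma^2\tr(A)$) or $e'A\varepsilon$ (conditional mean $\sigma^2\tr(D'A)$), so conditional expectations of the numerator, the denominator, and the Woodbury cross-terms can be written in closed form. Chebyshev-type bounds using Assumption \ref{Ass: OLS}\ref{Ass: OLS 2} show each form concentrates around its mean with remainder that is negligible after proper scaling. The key algebraic point is that, once the expectations are substituted in and the Woodbury inverse is expanded, the $B^{-1}$ sandwich collapses with the $\alpha\alpha'$ terms to produce the oblique projector $\tilde M_\Gamma = I - \tilde X(\tilde X'(I-\Gamma)\tilde X)^{-1}\tilde X'(I-\Gamma)$. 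In particular $\bar S_\Gamma$ emerges as the effective denominator and $\sigma^2\alpha\,\tr(D'(I-\Gamma)\tilde M_\Gamma)$ as the leading numerator bias, yielding the stated formula. Consistency when $\Gamma$ solves \eqref{eq:qform1} is immediate since the trace then vanishes and only $o_p(1)$ remains.

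For Part (ii), let $f(\gamma) := \tr(D'(I-\gamma D)M_{\gamma D})$ and $\tilde f(\gamma) := \tr(D'(I-\gamma D)\tilde M_{\gamma D})$, so $f(\hat\gamma) = 0$ and $\tilde f(\gamma_0) = 0$. A mean value expansion gives $\hat\gamma - \gamma_0 = -(f(\gamma_0) - \tilde f(\gamma_0))/f'(\gamma^*)$ for some intermediate $\gamma^*$. Two estimates are required: (a) $|f(\gamma) - \tilde f(\gamma)| = O_p(1)$ uniformly near $\gamma_0$; and (b) $|f'(\gamma)|$ of order $T$ uniformly near $\gamma_0$. Estimate (b) follows from the contraction argument underlying Lemma \ref{lem: solution exists} (to leading order $f$ is affine in $\gamma$ with slope $-\tr(D'D M_{\gamma D})$ of order $T$). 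Combining (a) and (b) gives $\hat\gamma - \gamma_0 = O_p(1/T)$. For the second assertion, I would differentiate the Woodbury expression for $(X'(I-\gamma D)X)^{-1}$ to show that $\gamma \mapsto r'\hat \beta^\emph{IV}(\gamma D)$ has derivative $O_p(1)$ on the neighborhood (using Assumption \ref{Ass: OLS}\ref{Ass: OLS 3}); a mean value bound then yields $|r'\hat \beta^\emph{IV}(\hat\Gamma) - r'\hat \beta^\emph{IV}(\Gamma_0)| = O_p(|\hat\gamma - \gamma_0|) = O_p(1/T) = o_p(1/\sqrt T)$.

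The hard part will be estimate (a). Since $f$ and $\tilde f$ are individually of order $T$, a naive bound on their difference would give only $O_p(\sqrt T)$, hence only $\hat\gamma - \gamma_0 = O_p(1/\sqrt T)$, which is insufficient. The decisive observation is that $X$ and $\tilde X$ differ by the rank-one matrix $e\alpha'$, so the column spaces of $X$ and $\tilde X$ (and the corresponding directions of projection under $I - \gamma D$) differ by at most one dimension each; consequently $M_{\gamma D} - \tilde M_{\gamma D}$ has rank at most four. Under $|\gamma|$ bounded away from one this difference also has bounded operator norm, so the nuclear-norm inequality $|\tr(AB)| \le \|A\|\,\|B\|_*$ applied with $A = D'(I-\gamma D)$ delivers $|f(\gamma) - \tilde f(\gamma)| = O_p(1)$. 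Making this precise for oblique (non-symmetric) projectors, and uniform over $\gamma$ in a neighborhood of $\gamma_0$, is the technical crux of the whole argument.
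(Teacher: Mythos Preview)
Your proposal is correct on both parts, though Part~(i) follows a genuinely different route from the paper. You expand $r'(X'(I-\Gamma)X)^{-1}X'(I-\Gamma)\varepsilon$ by Sherman--Morrison--Woodbury around $B=\tilde X'(I-\Gamma)\tilde X$, reducing the random content to a handful of scalar and $2\times2$ bilinear forms whose concentration you control term by term. The paper instead first applies a linear rotation (Lemma~\ref{lem: rotation}) that places $r$ and $\alpha$ in the first two coordinates with $\tilde Z_2'\tilde X_1=0$, and then invokes a generalized Frisch--Waugh--Lovell identity (Lemma~\ref{lem: Frisch-Waugh}\ref{lem: Frisch-Waugh 1}) to obtain $r'\hat\beta^{\text{IV}}(\Gamma)-r'\beta=r_*'(Z_1'M_2X_1)^{-1}Z_1'M_2\varepsilon$, where the oblique projection $M_2$ off the strictly exogenous block $X_2=\tilde X_2$ is \emph{non-random} conditional on $\tilde X$; the $2\times2$ denominator and $2\times1$ numerator concentrate directly to $\bar S_2$ and $\sigma^2\alpha_*R_{2,1}$ via Lemma~\ref{lem: order of terms infinite lag}, and a short second step swaps $M_2$ for $\tilde M_\Gamma$ and $\bar S_2$ for $\bar S_\Gamma$ using the low-rank trace bound of Lemma~\ref{lem: Frisch-Waugh}\ref{lem: Frisch-Waugh 3}. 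Your Woodbury expansion is algebraically equivalent in principle but stays in the original coordinates, so the ``collapse'' to $(I-\Gamma)\tilde M_\Gamma$ and $\bar S_\Gamma$ must be extracted by hand from the interaction of the $B^{-1}$ sandwiches with the $2\times2$ Woodbury block (the key identity being $e'(I-\Gamma)\tilde XB^{-1}\tilde X'(I-\Gamma)=e'(I-\Gamma)\tilde P_\Gamma$); this is correct but the rotation makes it automatic and scales cleanly to the multi-period case of Theorem~\ref{thm: consistency of multi-period version}. For Part~(ii) the two arguments coincide in substance: your direct rank-$\le4$ bound on $M_{\gamma D}-\tilde M_{\gamma D}$ is exactly what the paper obtains after rotation via two applications of Lemma~\ref{lem: Frisch-Waugh}\ref{lem: Frisch-Waugh 3} with $K_1=2$, and the paper packages the $O_p(1/T)$ conclusion through the contraction of Lemma~\ref{lem: contraction} rather than a mean-value expansion; the second assertion is handled via the representation \eqref{eq: equivalence3} and the bound $\|A_{\hat\Gamma}-A_{\Gamma_0}\|=O_p(|\hat\gamma-\gamma_0|)$, which is the same Lipschitz estimate you propose via differentiation.
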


The result of Theorem \ref{thm- OLS} is a special case of Part (i) of Theorem \ref{Thm: IV} for $\Gamma=0$. Using $\Gamma$ that solves equation (\ref{eq:qform1}) produces a consistent estimator for any reasonable contrast, and the proposed solution does not depend on the contrast of interest. The ideal $\Gamma_0$ solving equation (\ref{eq:qform1}) can be random but is strictly exogenous, as it depends only on the strictly exogenous part of the regressors $\tilde{X}$. Solution $\hat\Gamma$ to equation (\ref{eq:qform2}) is random and depends on error term $\varepsilon$. However, as stated in part (ii) of Theorem \ref{Thm: IV} the feasible estimator $r'\hat\beta^\text{IV}(\hat\Gamma)$ is consistent and has the same asymptotic distribution as the infeasible one using the ideal $\Gamma_0$.

An appealing feature of our proposed solution is that our estimator is linear in the outcome variable. Our proposal eliminates bias by working only with the regressors. The same bias correction would work for any outcome variables that satisfy the weak exogeneity assumption with one-period feedback to regressors. Our solution does not require one to estimate the direction of feedback $\alpha $. 

Figure \ref{fig:bias} shows that in the simulations the proposed IV estimator fixes the bias of OLS with essentially no increase in the standard deviation of the estimator.

\subsection{Consistency of variance estimator}

Let us introduce $T-K_\Gamma = \tr\big[(I-\Gamma )\tilde M_\Gamma \big]$ and define an estimator for variance $ \hat \sigma^2(\Gamma)=\frac{y'(I-\Gamma)M_\Gamma y}{T-K_\Gamma}$ for some matrix $\Gamma$. Using equation \eqref{eq: equivalence1} in the Appendix, we find that the value of $\hat \sigma^2(\Gamma)$ is invariant to the value of $\beta$ and that $\hat \sigma^2(\Gamma)=\frac{\varepsilon'(I-\Gamma)M_\Gamma \varepsilon}{T-K_\Gamma}$. Part (ii) of Lemma \ref{lem: equivalence} from the Appendix implies that $\hat \sigma^2(\Gamma)$ possesses a very desirable property for a variance estimator, namely, it is non-negative for any realization of the data. If $\Gamma$ solves equation \eqref{eq:qform1} then $\hat \sigma^2(\Gamma)$ is consistent for $\sigma^2$.

\begin{thm}\label{Thm: consistency of variance}
    Suppose Assumption \ref{Ass: OLS} holds, $\Gamma$ is $\tilde X$-measurable, and $\|\Gamma\|<1-c$. Then
    \begin{align}
        \frac{\hat \sigma^2(\Gamma)}{\sigma^2} = 1 - \frac{\sigma^2 \alpha' \bar{S}_\Gamma^{-1}\alpha }{T-K_\Gamma}\tr\!\big[D'(I-\Gamma)\tilde M_\Gamma]\tr\!\big[D(I-\Gamma)\tilde M_\Gamma] + o_p(1).
    \end{align}
    Specifically, it follows that $\hat \sigma^2(\Gamma)$ is consistent for $\sigma^2$ when $\Gamma$ is such that equation \eqref{eq:qform1} holds.
\end{thm}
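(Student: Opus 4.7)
The plan is to mirror the strategy used in the proofs of Theorems \ref{thm- OLS} and \ref{Thm: IV}: decompose $X = \tilde X + v\alpha'$ with $v = D\varepsilon$, apply the Sherman-Morrison-Woodbury formula to $(X'(I-\Gamma)X)^{-1}$, and then compute the leading conditional moments of the resulting quadratic form in $\varepsilon$ given $\tilde X$. The starting point is the identity referenced from Lemma \ref{lem: equivalence} in the Appendix, which gives $\hat\sigma^2(\Gamma)(T-K_\Gamma) = \varepsilon'(I-\Gamma)M_\Gamma\varepsilon$, removing $\beta$ and the outcome $y$ from the analysis.

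With $p_1 = \tilde X'(I-\Gamma)v$, $p_2 = \tilde X'(I-\Gamma')v$, and $k = v'(I-\Gamma)v$, the $K \times K$ matrix $X'(I-\Gamma)X = \tilde X'(I-\Gamma)\tilde X + p_1 \alpha' + \alpha p_2' + k \alpha\alpha'$ is a rank-$2$ update of $A = \tilde X'(I-\Gamma)\tilde X$. Applying Woodbury and collecting the $\tilde X$-measurable pieces from $\E[k \mid \tilde X]$ together with the contributions from $p_1$ and $p_2$ that get routed through the $\tilde M_\Gamma$-subspace, the effective ``denominator'' in the $\alpha$-direction collapses to exactly $\bar S_\Gamma = A + \sigma^2 \alpha\alpha' \tr[(I-\Gamma)\tilde M_\Gamma]$, with residual fluctuations of order $O_p(\sqrt{T})$ that will eventually be absorbed into $o_p(1)$.

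Substituting this expansion into $\varepsilon'(I-\Gamma)M_\Gamma\varepsilon$ yields three groups of terms. The leading piece $\varepsilon'(I-\Gamma)\tilde M_\Gamma\varepsilon$ has conditional mean $\sigma^2\tr[(I-\Gamma)\tilde M_\Gamma] = \sigma^2(T-K_\Gamma)$, producing the $1$ in the stated display after division. The correction piece couples $\bar S_\Gamma^{-1}\alpha$ with cross-moments of $\varepsilon$ and $v = D\varepsilon$ routed through $\tilde M_\Gamma$; using $\E[v'(I-\Gamma)\tilde M_\Gamma\varepsilon \mid \tilde X] = \sigma^2\tr[D'(I-\Gamma)\tilde M_\Gamma]$ on one side and $\E[\varepsilon'(I-\Gamma)\tilde M_\Gamma v \mid \tilde X] = \sigma^2\tr[D(I-\Gamma)\tilde M_\Gamma]$ on the other produces the two trace factors in the theorem statement. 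All remaining terms are $o_p(T-K_\Gamma)$ by standard variance bounds for quadratic forms in $\varepsilon$ using Assumption \ref{Ass: OLS}\ref{Ass: OLS 2} together with the norm/rank controls in Assumption \ref{Ass: OLS}\ref{Ass: OLS 3}-\ref{Ass: OLS 4}. Dividing by $\sigma^2(T-K_\Gamma)$ gives the claim, and when $\Gamma$ solves \eqref{eq:qform1}, the factor $\tr[D'(I-\Gamma)\tilde M_\Gamma] = 0$, so $\hat\sigma^2(\Gamma)/\sigma^2 = 1 + o_p(1)$.

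The main obstacle is the same one that creates the bias in the first place: the inverse $(X'(I-\Gamma)X)^{-1}$ is coupled to the quadratic form in $\varepsilon$ through the shared factor $v = D\varepsilon$, so naive conditioning on $X$ is unavailable. The delicate bookkeeping lies in identifying which combinations of terms from the Woodbury expansion combine to give precisely $\bar S_\Gamma$ (as opposed to, say, $A + \sigma^2\alpha\alpha'\tr[D'(I-\Gamma)D]$, which would yield the wrong constant), and in tracking which remainders vanish at the fast rate $o_p(T)$ rather than merely $O_p(T)$. This step is where the proof must be most careful, and it is essentially the same difficulty already resolved in the derivations underlying Theorems \ref{thm- OLS} and \ref{Thm: IV}.
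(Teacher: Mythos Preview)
Your outline correctly pins down the target pieces—the leading term $\varepsilon'(I-\Gamma)\tilde M_\Gamma\varepsilon$ with mean $\sigma^2(T-K_\Gamma)$, and the two cross-moment traces $\tr[D'(I-\Gamma)\tilde M_\Gamma]$ and $\tr[D(I-\Gamma)\tilde M_\Gamma]$—but the mechanism you propose for extracting them differs from the paper's. You attack $(X'(I-\Gamma)X)^{-1}$ by a direct rank-two Sherman--Morrison--Woodbury expansion in the full $K\times K$ matrix; the paper instead applies the rotation of Lemma~\ref{lem: rotation} so that $\alpha$ and $r$ live in the first two coordinates, which forces $X_2=\tilde X_2$ and hence $M_2=\tilde M_2$ to be $\tilde X$-measurable. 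The oblique Frisch--Waugh identity of Lemma~\ref{lem: Frisch-Waugh}\ref{lem: Frisch-Waugh 2} then gives the exact splitting
\[
\hat\sigma^2(\Gamma)=\frac{\varepsilon'(I-\Gamma)\tilde M_2\varepsilon}{T-K_\Gamma}-\frac{1}{T-K_\Gamma}\,\varepsilon'(I-\Gamma)\tilde M_2 X_1(Z_1'\tilde M_2 X_1)^{-1}Z_1'\tilde M_2\varepsilon,
\]
where the first piece has a non-random kernel (handled by Lemma~\ref{lem: order of terms infinite lag}\ref{lem: order of terms infinite lag 2}) and the second is a $2$-dimensional object whose analysis is literally the content of \eqref{eq: denominator with projection}--\eqref{eq: numirator with projection} from the proof of Theorem~\ref{thm: consistency of multi-period version}. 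In particular $\bar S_\Gamma$ appears as the conditional mean of the $2\times 2$ matrix $Z_1'\tilde M_2 X_1$ (up to the $\tilde M_\Gamma$ vs.\ $\tilde M_2$ discrepancy controlled by Lemma~\ref{lem: Frisch-Waugh}\ref{lem: Frisch-Waugh 3}), not from tracking Woodbury remainders.

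Your Woodbury route can in principle be made to work, but the step you yourself flag as delicate is genuinely so and is not carried out in your sketch: the $2\times 2$ block you must invert has random entries ($p_1'A^{-1}\alpha$, $p_2'A^{-1}\alpha$, $p_1'A^{-1}p_2$, $k$), so a further linearization around their means is required before any moments can be taken, and it is there that one must verify the leading pieces recombine into $\bar S_\Gamma$ rather than, say, $A+\sigma^2\alpha\alpha'T$. The paper's rotation confines all randomness to low-dimensional factors multiplying a fixed high-dimensional projector, which is what lets it reuse the Theorem~\ref{thm: consistency of multi-period version} machinery verbatim and keep the proof to a few lines.
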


As with Theorem \ref{Thm: IV}, the theoretically desirable $\Gamma$ that solves \eqref{eq:qform1} is not known, but one can search for an empirically feasible value $\hat\Gamma$ that solves \eqref{eq:qform2}. One can easily use Part (ii) of Theorem 3 to extend the argument of Theorem \ref{Thm: consistency of variance} and also show that $\hat \sigma^2(\hat\Gamma)$ is consistent.

\section{Inference}
\label{sec: inference}

In this section we show that under some additional assumptions the proposed IV estimator is asymptotically Gaussian conditionally on $\tilde X$. We also suggest standard errors that can be paired with our estimator in order to achieve asymptotically valid inferences, that is, to construct confidence sets and/or to produce reliable t-tests.

There are two theoretical and practical challenges we encounter. The first is a need to correctly account for the asymptotic importance of a quadratic form. Specifically, the bias of OLS arises due to presence of a quadratic form in errors, which has a non-trivial mean, as was noticed in \cite{sawa1978exact}. We construct our estimator to guarantee a zero mean of its corresponding quadratic form and thus eliminate the bias asymptotically. The quadratic form in error terms with zero mean is asymptotically Gaussian when its rank is growing to infinity under a condition of eigenvalue negligibility. We refer the reader to \cite{de1987central,chao2012asymptotic,anatolyev2019many,solvsten2020robust,kline2020leave} for examples of the Central Limit Theorems for quadratic forms. The naive standard errors tend to improperly account for asymptotic uncertainty of a quadratic form. The asymptotic importance of such quadratic forms has appeared previously in the literature on linear models with many instruments and/or many regressors. For example, \cite{hansen2008estimation} show the importance to adjust standard errors for presence of a quadratic form in many instrument settings. See also \cite{anatolyev2019many} for a comprehensive survey of the issue. The second challenge is that the quadratic form we end up with has non-zero diagonal elements that are on average zero. This makes the asymptotic variance depend on skewness and kurtosis of errors that are hard to estimate. This is similar to the issue of proper inference for LIML-type estimators with many instruments (as in \cite{hansen2008estimation}). Below we consider two instances when we can handle the first challenge with relative ease and ignore the second challenge.

\subsection[Inference with moderately many regressors]{Inference when $K/T\to 0$}

\begin{thm}\label{thm: inference for smaller K}
    Suppose Assumption \ref{Ass: OLS} holds, 
    $\max_t \norm*{(\tilde X'\tilde X)^{-1/2}\tilde x_t} = o_p(1)$. Then, as $T\to\infty$,
    \begin{align}
        \frac{r'\hat\beta^\emph{IV}(\hat \Gamma)-r'\beta}{\sqrt{\hat\varSigma_T}}\Rightarrow N(0,1)
    \end{align}
    where $\hat\varSigma_T=\hat\sigma^2(\hat \Gamma)\|r'(X'(I-\hat \Gamma)X)^{-1}X'(I-\hat \Gamma)\|^2.$
\end{thm}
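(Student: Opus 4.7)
The plan is to reduce the problem to a central limit theorem for the sum of a linear form and a centered quadratic form in $\varepsilon$ conditional on $\tilde X$, and then to show that $\hat\varSigma_T$ consistently estimates the conditional variance.

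By part (ii) of Theorem \ref{Thm: IV}, $r'\hat\beta^{\text{IV}}(\hat\Gamma) - r'\hat\beta^{\text{IV}}(\Gamma_0) = o_p(T^{-1/2})$; a parallel continuity argument (using $\hat\gamma - \gamma_0 = O_p(T^{-1})$) lets me also replace $\hat\Gamma$ by $\Gamma_0$ in $\hat\varSigma_T$, so throughout I work with the infeasible $\tilde X$-measurable $\Gamma_0$. Substituting $X = \tilde X + D\varepsilon\alpha'$ and using concentration arguments analogous to those underlying Theorems \ref{Thm: IV} and \ref{Thm: consistency of variance}, I Taylor-expand $(X'(I-\Gamma_0)X)^{-1}$ around $\bar S_{\Gamma_0}^{-1}$ and decompose $r'\hat\beta^{\text{IV}}(\Gamma_0) - r'\beta$ into a linear form $r'\bar S_{\Gamma_0}^{-1}\tilde X'(I-\Gamma_0)\varepsilon$ plus a quadratic-form contribution that, once the Taylor correction is combined in, has conditional mean proportional to $\tr[D'(I-\Gamma_0)\tilde M_{\Gamma_0}]$. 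By the defining equation \eqref{eq:qform1} for $\Gamma_0$, this mean is zero, so the leading term is $a_T'\varepsilon + \varepsilon'B_T\varepsilon - \sigma^2\tr(B_T) + o_p(T^{-1/2})$ for $\tilde X$-measurable $a_T$ and $B_T$.

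The condition $\max_t \|(\tilde X'\tilde X)^{-1/2}\tilde x_t\| = o_p(1)$ implies the Lindeberg-type condition $\max_t a_{T,t}^2/\|a_T\|^2 \xrightarrow{p} 0$ for the linear part and controls the operator norm of $B_T$ relative to its Frobenius norm --- the eigenvalue-negligibility condition needed for a quadratic-form CLT. Combined with the bounded fourth moments from part (ii) of Assumption \ref{Ass: OLS}, these are the inputs to a CLT for sums of linear and centered quadratic forms in i.i.d.\ errors, e.g., \cite{chao2012asymptotic} or \cite{kline2020leave}, yielding conditional asymptotic Gaussianity with some conditional variance $\varSigma_T$.

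Finally, $\hat\sigma^2(\hat\Gamma) \xrightarrow{p} \sigma^2$ by Theorem \ref{Thm: consistency of variance} together with the replacement argument. For the matrix-norm factor, I expand $\|r'(X'(I-\Gamma_0)X)^{-1}X'(I-\Gamma_0)\|^2 = r'(Z'X)^{-1}(Z'Z)(X'Z)^{-1}r$ with $Z = (I-\Gamma_0')X$, substitute $X = \tilde X + D\varepsilon\alpha'$, and argue that $Z'X$ and $Z'Z$ concentrate around deterministic replacements whose combined contribution reproduces $\varSigma_T/\sigma^2$ at leading order. The main obstacle is precisely this last verification: the sandwich formula does not in general capture the skewness- and kurtosis-dependent components of the quadratic form's contribution to $\varSigma_T$, and showing that these missed components are asymptotically negligible is where the restriction $K/T \to 0$ (implied by the max-leverage condition) does the real work, sidestepping the ``second challenge'' flagged in the opening of Section \ref{sec: inference}.
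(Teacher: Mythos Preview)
Your overall strategy matches the paper's: replace $\hat\Gamma$ by $\Gamma_0$ via Theorem \ref{Thm: IV}\ref{Thm: IV 2}, decompose the centered IV estimator into a linear plus a mean-zero quadratic form in $\varepsilon$ with $\tilde X$-measurable weights, invoke a linear-plus-quadratic CLT, and then check that the sandwich $\hat\varSigma_T$ recovers the limiting variance because $K/T\to 0$ kills the diagonal (skewness/kurtosis) contributions. Your final paragraph correctly isolates the role of $K/T\to 0$.

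The implementation, however, differs. You propose a Taylor expansion of the full $K\times K$ inverse $(X'(I-\Gamma_0)X)^{-1}$ around $\bar S_{\Gamma_0}^{-1}$. The paper instead applies the rotation of Lemma \ref{lem: rotation} (so that $r$ and $\alpha$ live in the first two coordinates and $X_2=\tilde X_2$ is strictly exogenous) together with the oblique Frisch--Waugh--Lovell identity of Lemma \ref{lem: Frisch-Waugh}\ref{lem: Frisch-Waugh 1}. This reduces the problem to the $2\times 2$ matrix $Z_1'M_2 X_1$ and the $\tilde X$-measurable oblique projection $M_2$, yielding the exact decomposition $r'\hat\beta^{\text{IV}}(\Gamma_0)-r'\beta=(1+o_p(1))\big(w'\varepsilon+r_*'\bar S_2^{-1}\alpha_*\,\varepsilon'B\varepsilon\big)$ with the \emph{explicit} matrix $B=D'(I-\Gamma_0)M_2$. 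The CLT conditions of \cite{solvsten2020robust} are then verified for this specific $B$ via a dedicated Lemma \ref{lem: conditions for CLT}, which shows $\sum_t B_{tt}^2=O(K)$, $\tr(B^2)=O(K)$, and $\tr(B'B)\asymp T-K$. Your route can in principle succeed because the $\varepsilon$-dependent part of $X'(I-\Gamma_0)X-\bar S_{\Gamma_0}$ is low-rank (it always carries a factor of $\alpha$), but you leave $B_T$ unspecified and would still need to prove analogues of Lemma \ref{lem: conditions for CLT} for it; controlling the Taylor remainder uniformly in $K$ without the rotation/FWL reduction is a genuine technical burden that the paper's approach sidesteps entirely. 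One small correction: the eigenvalue negligibility for the quadratic part follows from $\|B\|=O(1)$ combined with $\tr(B'B)\asymp T-K$ (Lemma \ref{lem: conditions for CLT}\ref{lem: conditions for CLT 3}), not from the max-leverage condition; the latter is used only for the Lindeberg condition on the linear term $w'\varepsilon$.
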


Condition $\max_t \norm*{(\tilde X'\tilde X)^{-1/2}\tilde x_t} = o_p(1)$ is a relatively standard negligibility condition often invoked in order to obtain asymptotic Gaussianity of OLS using Lindeberg CLT \citep[see, e.g.,][page 334]{koenker1999gmm}. It implies among other things that the maximal diagonal element $\tilde P_{tt}$ of the projection matrix $\tilde P$ is asymptotically negligible. At the same time, the average of these diagonal elements is equal to $K/T$, and thus, this condition can hold only when we have a moderately large number of regressors ($K/T\to 0$). When the number of regressors is moderately large, both challenges described at the beginning of this section are asymptotically negligible. The standard errors suggested by Theorem \ref{thm: inference for smaller K} look like the usual IV-type standard errors with one important change, they use the newly proposed estimator of variance $\hat\sigma^2(\hat\Gamma)$.

\subsection{Inference with Gaussian errors}

\begin{thm}\label{thm: inference for Gaussian errors}
    Suppose Assumption \ref{Ass: OLS} holds and $\varepsilon_1$ is Gaussian conditionally on $\tilde X$. Assume that $\Gamma$ solves equation \eqref{eq:qform1} and $\norm{\Gamma} < 1-c$. 
    Then, as $T\to\infty$,
    \begin{align}
        \frac{r'\hat\beta^\emph{IV}(\Gamma)-r'\beta}{\sqrt{\varSigma_T}}\Rightarrow N(0,1)
    \end{align}
    where $\varSigma_T$ is measurable with respect to $\tilde X$. With probability asymptotically approaching one, $\varSigma_T\leq (1+\psi)\hat\sigma^2( \Gamma)\|r'(X'(I- \Gamma)X)^{-1}X'(I- \Gamma)\|^2,$ where $\psi=\frac{|\tr(B^2)|}{\tr(B'B)}$ and $ B=D'(I-\Gamma) \tilde M_\Gamma.$
\end{thm}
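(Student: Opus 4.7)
The plan is to decompose the estimation error into a linear and a quadratic form in $\varepsilon$ conditional on $\tilde X$, apply a conditional Lindeberg CLT via a spectral representation, and compare the exact conditional variance against the feasible standard error. Starting from $r'\hat\beta^\text{IV}(\Gamma) - r'\beta = r'(X'(I-\Gamma)X)^{-1}X'(I-\Gamma)\varepsilon$ and substituting $X = \tilde X + \check\varepsilon\alpha'$ with $\check\varepsilon = (\varepsilon_{t-1})_{t=1}^{T}$, I would apply Sherman--Morrison--Woodbury to expand $(X'(I-\Gamma)X)^{-1}$ around $\bar S_\Gamma^{-1}$, in the manner of the proof of Theorem \ref{Thm: IV}. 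Collecting leading terms and using that $\Gamma$ solves \eqref{eq:qform1}, i.e., $\tr B = 0$ for $B = D'(I-\Gamma)\tilde M_\Gamma$, yields
\begin{align}
r'\hat\beta^\text{IV}(\Gamma) - r'\beta = L + \lambda Q + R,
\end{align}
where $L = r'\bar S_\Gamma^{-1}\tilde X'(I-\Gamma)\varepsilon$ is linear in $\varepsilon$ (given $\tilde X$), $\lambda = r'\bar S_\Gamma^{-1}\alpha$ is a $\tilde X$-measurable scalar, $Q = \varepsilon'B_s\varepsilon$ is the symmetric quadratic form with $B_s = (B+B')/2$ (so $\tr B_s = \tr B = 0$), and $R = o_p(\sqrt{\Sigma_T})$ under Assumption \ref{Ass: OLS}\ref{Ass: OLS 3} and $\norm{\Gamma}<1-c$.

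Next I would establish conditional asymptotic Gaussianity of $L + \lambda Q$ via a spectral representation. Diagonalize $B_s = U\Lambda U'$ and set $Z = U'\varepsilon/\sigma$, which is $N(0,I_T)$ conditional on $\tilde X$ by the Gaussian-errors assumption, so
\begin{align}
L + \lambda Q = \sigma \sum_{i=1}^{T} c_i Z_i + \lambda\sigma^2 \sum_{i=1}^{T} \lambda_i(B_s)\bigl(Z_i^2 - 1\bigr)
\end{align}
is a sum of conditionally independent centered variables (using $\sum_i\lambda_i(B_s) = 0$). Lindeberg's CLT gives conditional convergence to $N(0,\Sigma_T)$ whenever $\max_i[c_i^2 + 2\lambda^2\sigma^2\lambda_i(B_s)^2]$ is negligible relative to the total; the bound $\norm{\Gamma} < 1-c$ keeps $\norm{B_s}$ bounded while $\tr(B_s^2)$ may grow with $T$, delivering the required ratio (and if $\tr(B_s^2)$ does not grow, $Q$ is stochastically negligible and the exactly Gaussian $L$ drives the limit). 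Standard Gaussian algebra then gives $\V(L\mid\tilde X) = \sigma^2\norm{r'\bar S_\Gamma^{-1}\tilde X'(I-\Gamma)}^2$ and $\V(Q\mid\tilde X) = 2\sigma^4\tr(B_s^2) = \sigma^4[\tr(B'B) + \tr(B^2)]$, while the covariance of a linear and a quadratic form in a centered Gaussian vector vanishes by oddness of third moments, so $\Sigma_T = \V(L\mid\tilde X) + \lambda^2\V(Q\mid\tilde X)$ is manifestly $\tilde X$-measurable.

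For the upper bound, the same Sherman--Morrison expansion applied to the feasible standard-error formula, combined with Theorems \ref{Thm: IV} and \ref{Thm: consistency of variance} and concentration of the $\alpha\check\varepsilon'(I-\Gamma)$-type cross terms, gives
\begin{align}
\hat\sigma^2(\Gamma)\norm{r'(X'(I-\Gamma)X)^{-1}X'(I-\Gamma)}^2 = \sigma^2\norm{r'\bar S_\Gamma^{-1}\tilde X'(I-\Gamma)}^2 + \lambda^2\sigma^4\tr(B'B) + o_p(\Sigma_T).
\end{align}
Since $|\tr(B^2)| \leq \tr(B'B)$ by Cauchy--Schwarz (so $\psi \in [0,1]$), the numerator of the ratio $\Sigma_T/\hat\sigma^2(\Gamma)\norm{\cdot}^2$ differs from its denominator only by replacing $\tr(B'B)$ with $\tr(B'B) + \tr(B^2) \leq \tr(B'B)(1+\psi)$, and the elementary inequality $(A + B(1+\psi))/(A+B) \leq 1 + \psi$ for $A,B \geq 0$ delivers the stated bound.

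The main obstacle is the $R = o_p(\sqrt{\Sigma_T})$ bookkeeping: the Sherman--Morrison remainders produce third- and fourth-order forms in $\varepsilon$ whose magnitude must be dominated by $\sqrt{\Sigma_T}$, a quantity that can itself be small when the feedback-induced quadratic piece is not much larger than the linear piece. Controlling them requires carefully combining $\norm{\Gamma} < 1-c$, Assumption \ref{Ass: OLS}\ref{Ass: OLS 3}, and concentration for Gaussian quadratic forms, in the same spirit as the proofs of Theorems \ref{thm- OLS} and \ref{Thm: IV}. The Lindeberg verification is a secondary technicality that likewise relies on the operator-norm control of $B_s$ implied by $\norm{\Gamma} < 1-c$.
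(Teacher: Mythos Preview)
Your overall architecture---linear plus quadratic decomposition, conditional Gaussianity, and the variance comparison yielding the $(1+\psi)$ bound---matches the paper's, and your Sherman--Morrison route to the decomposition is a legitimate alternative to the paper's rotation-plus-Frisch--Waugh reduction (Lemma~\ref{lem: rotation} and Lemma~\ref{lem: Frisch-Waugh}). Your variance formula and upper-bound inequality are correct.

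The gap is in the Lindeberg step. After diagonalizing $B_s=U\Lambda U'$ and writing the sum over independent summands $\sigma c_iZ_i+\lambda\sigma^2\lambda_i(Z_i^2-1)$, you need $\max_i[c_i^2+2\lambda^2\sigma^2\lambda_i^2]/\varSigma_T\to 0$. Bounded $\|B_s\|$ controls the $\lambda_i$ piece, but nothing in the hypotheses forces $\max_i c_i^2$ to be small relative to $\|c\|^2$: the rotated weights $c=U'w$ can concentrate on a single coordinate, since Theorem~\ref{thm: inference for Gaussian errors} deliberately drops the leverage condition $\max_t\norm{(\tilde X'\tilde X)^{-1/2}\tilde x_t}=o_p(1)$ used in Theorem~\ref{thm: inference for smaller K}. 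In the relevant regime both $\sigma^2\|w\|^2$ and $\lambda^2\sigma^4\tr(B_s^2)$ are of the same order, so neither of your two cases (``$\tr(B_s^2)$ grows so Lindeberg holds'' or ``$Q$ is negligible so $L$ alone suffices'') covers the situation.

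The paper resolves this not by Lindeberg on the combined sum but by an orthogonal split of $\varepsilon$ itself: with $P_w=ww'/\|w\|^2$ and $M_w=I-P_w$, the linear piece $w'\varepsilon$ is \emph{exactly} Gaussian and \emph{exactly} independent of $M_w\varepsilon$, while $\varepsilon'B\varepsilon-\varepsilon'M_wBM_w\varepsilon=O_p(\|B\|)=O_p(1)$ is negligible once divided by $\sqrt{\tr(B^2)+\tr(B'B)}$ (which is of order $\sqrt{T}$ by Lemma~\ref{lem: conditions for CLT}). Thus the quadratic form is asymptotically a function of $M_w\varepsilon$ alone, and the CLT for the quadratic part (via the eigenvalue-negligibility condition $\|B\|^2/\tr(B'B)\to 0$) combines with exact Gaussianity of the linear part to give joint asymptotic normality without any smallness requirement on individual linear weights. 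Your spectral approach can be repaired by the same device---peel off the component of $Z$ along $c/\|c\|$ before applying the quadratic-form CLT---but as written the Lindeberg verification does not go through.
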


Theorem \ref{thm: inference for Gaussian errors} allows the number of regressors to grow proportionally with the sample size. The assumption that errors are Gaussian is used in several ways. First, it allows us to avoid imposing the Lindeberg-type negligibility condition, as any linear combination of errors with weights depending on $\tilde X$ is conditionally Gaussian in finite-samples. Secondly, it removes the need to estimate skewness and kurtosis. 

However, in the case when $K$ grows proportionally to the sample size, the standard errors stated in Theorem \ref{thm: inference for smaller K} do not correctly account for the uncertainty induced by the quadratic form. The missing term in the asymptotic variance formula depends on the importance of the feedback and is hard to estimate. We instead provide an upper bound on the asymptotic variance that makes the confidence sets asymptotically valid but conservative. In our simulation we noticed that the correction $\psi$ tends to be tiny and does not change the standard errors much. We advise the calculation of it as a robustness check.

\section{Extension to multiple periods}
\label{sec: multi period}

In the previous sections we maintained the assumption that the violation of strict exogeneity happens for one period only. Some results can be extended to feedback lasting a fixed finite number of periods.

\begin{ass}\label{ass: dgp}
    \begin{enumerate}[label=(\roman*)]
                
        \item The observed regressors $x_t$ can be decomposed as $x_t = \tilde x_t + \sum_{\ell=1}^L \alpha_\ell \varepsilon_{t-\ell}$ where the $T\times K$ matrix $\tilde{X} = [\tilde x_1,\dots,\tilde x_T]'$ has full rank.
        
        \item The errors $\{ \varepsilon_t \}_{t=1-L}^T$ are i.i.d. conditional on $\tilde{X}$ with $\E[\varepsilon_t|\tilde X]=0$, $c<\sigma^2 := \E[\varepsilon_t^2|\tilde X] <C$, and $\E[\varepsilon_t^4|\tilde X] < C$ almost surely.
        
        \item The non-random vectors $\alpha_1,\dots,\alpha_L, r \in \R^K$ satisfy ${\alpha_\ell'(\frac{1}{T}\tilde X'\tilde X)^{-1}\alpha_\ell}=O_p(1)$ for $\ell=1,\dots,L$ and $r'(\frac{1}{T}\tilde X'\tilde X)^{-1}r=O_p(1)$.
        
        \item $L$ is fixed and $ \frac{K}{T} < 1-c$.
       
    \end{enumerate}
\end{ass}

The vector $\alpha_\ell$ describes how the shock to the outcome variable affects the regressors $\ell$ periods later. The direction of the feedback may vary freely with the lag as the regressors differ in the speed of reaction/adjustment. However, we probably should expect that the size of the $\ell$-th period feedback measured as ${\alpha_\ell'(\frac{1}{T}\tilde X'\tilde X)^{-1}\alpha_\ell}$ should become negligible for large enough $\ell$ in typical macroeconomic settings.

\begin{thm}\label{thm: consistency of multi-period version}
    Suppose Assumption \ref{ass: dgp} holds, $\Gamma$ is $\tilde X$-measurable, and $\|\Gamma\|<1-c$. Then,
    \begin{align} \label{eq: some bias}
        r'\hat\beta^\emph{IV}(\Gamma)-r'\beta = \sigma^2 \sum_{\ell=1}^L r'\bar S_\Gamma^{-1}\alpha_\ell \tr\!\big[ (D')^\ell(I-\Gamma)\tilde M_\Gamma \big] + o_p(1),
      \end{align}
    where $\bar S_\Gamma = \tilde X'(I-\Gamma )\tilde X+\sigma^2\sum_{j,\ell=1}^L \alpha_j\alpha_\ell' \tr\!\big[ (D')^j(I-\Gamma )\tilde M_\Gamma D^\ell \big]$.
\end{thm}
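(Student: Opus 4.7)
My plan is to extend the argument behind Theorem \ref{Thm: IV} by tracking $L$ feedback directions simultaneously through a Sherman--Morrison--Woodbury inversion. First, I rewrite the regressor matrix as $X = \tilde X + EA'$, where $A = [\alpha_1,\dots,\alpha_L]$ is $K\times L$ and $E = [D\varepsilon,\dots,D^L\varepsilon]$ is the $T\times L$ matrix whose $\ell$-th column is $D^\ell\varepsilon$. Setting $\tilde\Xi = \tilde X'(I-\Gamma)\tilde X$, this yields the rank-at-most-$2L$ decomposition $X'(I-\Gamma)X = \tilde\Xi + \mathcal{U}\mathcal{M}\mathcal{V}$ with
\[
\mathcal{U} = [A,\,\tilde X'(I-\Gamma)E], \qquad \mathcal{M} = \begin{pmatrix} E'(I-\Gamma)E & I_L \\ I_L & 0 \end{pmatrix}, \qquad \mathcal{V} = \begin{pmatrix} A' \\ E'(I-\Gamma)\tilde X \end{pmatrix}.
\]
Woodbury then gives $(X'(I-\Gamma)X)\inverse = \tilde\Xi\inverse - \tilde\Xi\inverse\mathcal{U}\mathcal{W}\inverse\mathcal{V}\tilde\Xi\inverse$ with $\mathcal{W} = \mathcal{M}\inverse + \mathcal{V}\tilde\Xi\inverse\mathcal{U}$.

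The crucial algebraic observation is that the bottom-right block of $\mathcal{V}\tilde\Xi\inverse\mathcal{U}$ equals $E'(I-\Gamma)(I - \tilde M_\Gamma)E$, because $\tilde X\tilde\Xi\inverse\tilde X'(I-\Gamma) = I - \tilde M_\Gamma$. Combined with the $-E'(I-\Gamma)E$ block of $\mathcal{M}\inverse$, the $(2,2)$ block of $\mathcal{W}$ collapses to $-E'(I-\Gamma)\tilde M_\Gamma E$, whose conditional expectation has $(j,\ell)$ entry $-\sigma^2\tr[(D')^j(I-\Gamma)\tilde M_\Gamma D^\ell]$ -- exactly the object appearing inside $\bar S_\Gamma$. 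The other blocks of $\mathcal{W}$ are the deterministic $A'\tilde\Xi\inverse A = O_p(1)$ (top-left) and mean-zero linear-in-$\varepsilon$ blocks such as $I_L + A'\tilde\Xi\inverse\tilde X'(I-\Gamma)E$ (off-diagonal), each of size $O_p(1)$ by a Hanson--Wright-type bound under Assumption \ref{ass: dgp}(ii)--(iii).

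Next, I expand the numerator $X'(I-\Gamma)\varepsilon = \tilde X'(I-\Gamma)\varepsilon + AE'(I-\Gamma)\varepsilon$. Two channels contribute to the leading behavior of $r'(X'(I-\Gamma)X)\inverse X'(I-\Gamma)\varepsilon$: (a) the direct path $r'\tilde\Xi\inverse AE'(I-\Gamma)\varepsilon$ carries the naive trace $\sigma^2\sum_\ell r'\tilde\Xi\inverse\alpha_\ell\tr[(D')^\ell(I-\Gamma)]$; (b) the Woodbury-correction path routes the mean-zero $\tilde X'(I-\Gamma)\varepsilon$ through the top-right block of $\mathcal{W}\inverse$ back into the feedback directions, producing additional deterministic cross-moments with $E'(I-\Gamma)\varepsilon$. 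After conditioning on $\tilde X$ and using $\tilde X\tilde\Xi\inverse\tilde X'(I-\Gamma) = I - \tilde M_\Gamma$ once more, the two channels combine so that the naive trace $\tr[(D')^\ell(I-\Gamma)]$ is replaced by the projected trace $\tr[(D')^\ell(I-\Gamma)\tilde M_\Gamma]$, while a second Sherman--Morrison step applied to the deterministic part of $\mathcal{W}$ converts $\tilde\Xi\inverse$ into $\bar S_\Gamma\inverse$. The remaining stochastic pieces -- the linear form $r'\tilde\Xi\inverse\tilde X'(I-\Gamma)\varepsilon = O_p(1)$, mean-zero quadratic-form fluctuations of order $O_p(\sqrt T)$, and the $O_p(1)$ random cross-blocks of $\mathcal{W}\inverse$ -- combine with the $O_p(1/T)$ scaling implied by Assumption \ref{ass: dgp}(iii) to give an $o_p(1)$ remainder, which delivers \eqref{eq: some bias}.

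\paragraph{Main obstacle.}
The hardest step is making the Woodbury inversion rigorous: I must verify that $\mathcal{W}$ is invertible with probability approaching one uniformly in $T$ and $K$, obtain a usable bound on $\|\mathcal{W}\inverse\|$, and then track the cancellations between the direct and Woodbury-correction channels that simultaneously produce the $\tilde M_\Gamma$ inside each trace and promote $\tilde\Xi$ to $\bar S_\Gamma$. This requires simultaneous concentration control of the $L^2$ quadratic forms $\varepsilon'(D')^j(I-\Gamma)\tilde M_\Gamma D^\ell\varepsilon$ and the $L^2$ bilinear forms $\alpha_j'\tilde\Xi\inverse\tilde X'(I-\Gamma)D^\ell\varepsilon$, uniformly in $T$, using only the rank and moment conditions of Assumption \ref{ass: dgp} together with $\|\Gamma\| < 1-c$.
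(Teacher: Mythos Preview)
Your Sherman--Morrison--Woodbury route is a genuinely different organization from the paper's proof, and the decomposition $X'(I-\Gamma)X=\tilde\Xi+\mathcal U\mathcal M\mathcal V$ together with your observation that the $(2,2)$ block of $\mathcal W$ collapses to $-E'(I-\Gamma)\tilde M_\Gamma E$ are both correct and to the point. The paper instead first applies a rotation (Lemma \ref{lem: rotation}) that confines $r,\alpha_1,\dots,\alpha_L$ to the first $L+1$ coordinates and makes $X_2=\tilde X_2$ strictly exogenous, then uses an oblique Frisch--Waugh--Lovell step (Lemma \ref{lem: Frisch-Waugh}) to reduce $r'\hat\beta^{\text{IV}}(\Gamma)-r'\beta$ to $r_*'(Z_1'M_2X_1)^{-1}Z_1'M_2\varepsilon$ with $M_2$ deterministic given $\tilde X$. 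This buys two things you have to work for: the cancellation producing $\tilde M_\Gamma$ and the appearance of $\bar S_\Gamma$ become \emph{structural}, because the same non-random $M_2$ sits in both numerator and denominator and differs from $\tilde M_\Gamma$ only by a rank-$(L+1)$ correction controlled by Lemma \ref{lem: Frisch-Waugh}\ref{lem: Frisch-Waugh 3}. By contrast, your approach must track those cancellations through the block inverse of a random $2L\times 2L$ matrix.

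Two places in your sketch need more than what is written. First, the channel-combination step is not automatic: channel (a) carries $r'\tilde\Xi^{-1}A\,E'(I-\Gamma)\varepsilon$, whose mean involves $\tr[(D')^\ell(I-\Gamma)]$, and for generic $\Gamma$ (e.g., $\Gamma=\gamma D$, $\ell=1$) this trace is of order $T$, so the contribution is $O_p(1)$ and must be cancelled \emph{exactly} to leading order against the Woodbury correction. That correction contains, among other things, terms of the form $\big[E'(I-\Gamma)\tilde X\tilde\Xi^{-1}A\big]\big[E'(I-\Gamma)\varepsilon\big]$, which are cubic in $\varepsilon$; you do not mention these, and their control (they are $O_p(\sqrt T)$ after centering but have nonzero means that must be absorbed) is part of what makes the cancellation work. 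Second, the ``second Sherman--Morrison step'' that promotes $\tilde\Xi$ to $\bar S_\Gamma$ requires a usable bound on $\Vert\mathcal W^{-1}\Vert$ and an argument that replacing $\mathcal W$ by its conditional mean perturbs the answer only by $o_p(1)$; you flag this as the main obstacle but do not indicate how to resolve it. The paper sidesteps both issues: after FWL, the denominator $Z_1'M_2X_1/T$ is an $(L+1)\times(L+1)$ matrix whose mean is (up to the trace correction of Lemma \ref{lem: Frisch-Waugh}\ref{lem: Frisch-Waugh 3}) exactly $\bar S_2/T$, and the required inverse bound \eqref{eq: norm of denominator with projection} follows directly from positive semi-definiteness of $(I-\Gamma)M_2+M_2'(I-\Gamma')$.
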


Theorem \ref{thm: consistency of multi-period version} is a direct generalization of Part (i) of Theorem \ref{Thm: IV}. The special case of $\Gamma=0$ shows that the bias of OLS is a linear combination of $L$ terms involving the lower diagonal traces of the projection matrix $\tilde M$. Lower diagonal traces, $\tr[(D')^\ell \tilde M]$, correspond to average measures of the regressors' auto-correlations of order $\ell$, and are expected to decay for stationary regressors. Combined with the expected decrease in the size of $\alpha_\ell$, the first few terms in the bias formula should capture most of the bias in stationary applications.

The bias formula \eqref{eq: some bias} is derived for any IV-type estimator for an $\tilde X$-measurable $T\times T$ matrix $\Gamma$ with $\|\Gamma\|<1-c$. Theorem \ref{thm: consistency of multi-period version} suggests that if $\Gamma$ solves a system of $L$ equations $\tr\big[(D')^\ell(I-\Gamma)\tilde M_\Gamma \big]=0$ for $\ell=1,...,L$ then $r'\hat\beta^\text{IV}(\Gamma)$ is a consistent estimator. A natural suggestion is to search for $\Gamma$ in the class of matrices $\Gamma=\sum_{\ell=1}^L\gamma_\ell D^\ell$ where the $L$ parameters $\{ \gamma_\ell \}_\ell$ solve the system of equations. We leave the questions of considering other classes of matrices as well as establishing guarantees for existence of a solution to future research.\looseness=-1

Theorem \ref{thm: consistency of multi-period version} can be generalized further to the case of infinite feedback (under a summability condition on the feedback magnitudes). This generalization can be achieved through the use of an argument that approximates a model with infinite lags using a model with a finite but slowly growing number of lags. Specifically, Theorem \ref{thm: consistency of multi-period version} generalizes to a model where a lagged dependent outcome enters as the regressor (which yields a model with infinite feedback and geometrically decaying feedback magnitudes).\footnote{These results are available from the authors upon request, but are not included here as they would constitute a separate paper.}

The formula for the OLS  bias in time series has been derived before in special cases. Specifically, \cite{sawa1978exact} and \cite{nankervis1988exact} derived the exact finite-sample bias in an AR(1) model with normal errors and argued that the bias is sizable when the sample size is double-digit. \cite{kiviet1999alternative} considered a model with Gaussian innovations that has a small number of strictly exogenous regressors and derived a formula for asymptotic ($1/T$) bias from using the lagged outcome as a regressor. They argued that the bias can be significant in small samples. \cite{stambaugh1999predictive} showed that the weak exogeneity bias is large when a single weakly exogenous regressor in persistent. All these results are special cases of our formula when applied to infinite geometrically decaying feedback.   

\section{Simulations}
\label{sec: mcs}

The goal of this section is to assess the size of the OLS bias  in a `typical' regression using US macroeconomic data. We use the data set from \cite{stock2016dynamic} containing quarterly observations from 1964 to 2013 ($T=200$) on 108 US macro indicators. This data set is largely  similar to the \cite{mccracken2021fred} FRED-QD data set. The data set includes a  broad class of variables with diverse time series properties.

Many macro and financial indicators tend to be very persistent and may be integrated up to the second order, that is, be stationary, I(1), or I(2) processes. A prevailing (but not uniformly accepted) practice is  to transform all variables to stationary before running a regression in order to avoid issues of co-integration and near co-integration or biases related to persistent  regressors (see \cite{stambaugh1999predictive} on such biases). The way applied researchers transform variables to stationary or make decisions about variable stationarity varies widely across the literature with many  such decisions based both on statistical tests and expert judgements. Given that a large fraction of regressions is aimed at business-cycle parameters and in order to not take a stand and to unify the pre-treatment of variables we apply \cite{hamilton2018you} transformation to all variables in the data set. Specifically, for each variable we define its cyclical component to be a two-year-ahead forecast error to this variable based on a univariate AR(4) regression. According to \cite{hamilton2018you}, this filtering transforms many types of stationary and up to second order integrated variables into stationary ones and extracts their business cycle component.

For each $K$ in $\{5, 15, 25, \dots, 85, 100\}$ we perform 100 experiments where we randomly draw $K$ distinct variables from the  transformed data set, denote them $X_\text{r}$ and an additional variable $y_\text{r}$. We calculate through simulations, the biases and standard deviations of the OLS and of our proposed estimator (referred to as IV) for the linear contrast in the feedback direction in the  regression of $y_\text{r}$ on $X_\text{r}$ under the assumption of one-period violation of strict exogeneity. For this, we simulate $N=1000$ samples from a data generating process satisfying Assumption \ref{Ass: OLS} that preserves the time series behavior of regressors and the feedback size/direction of the observed $(y_\text{r},X_\text{r})$. Specifically, we use as true parameters the empirical OLS values $\beta = (X_\text{r}'X_\text{r})\inverse X_\text{r}' y_\text{r}$, $\sigma^2 = e'e/(T-K)$ for $e = y_\text{r} -X_\text{r}\beta$, and $\alpha = X_\text{r}' D' e/(e'e)$. We simulate samples as $X = X_\text{r} + D'\varepsilon \alpha'$ and $y = X \beta + \varepsilon$ where $\varepsilon \sim N(0,\sigma^2 I)$.  We calculate the bias and standard deviation for the OLS and IV estimators of the linear contrast with $\theta=\alpha'\beta$.\looseness=-1

The left panel of Figure \ref{fig:biasFRED} depicts the results of the experiments (for different $K$) that fall into the 10th percentile of the OLS bias. For those experiments, we report the OLS bias and standard deviation and the IV bias and standard deviation in the feedback direction alone along with the normalized lower trace of $M_\text{r}=I-X_\text{r}(X_\text{r}'X_\text{r})^{-1}X_\text{r}'$, that is $\tr(D'M_\text{r})/T$. The right panel contains the results of the experiments that fall into the 90th percentile of the bias.\looseness=-1

\begin{figure}[htbp]
\centering
\includegraphics[width=\columnwidth]{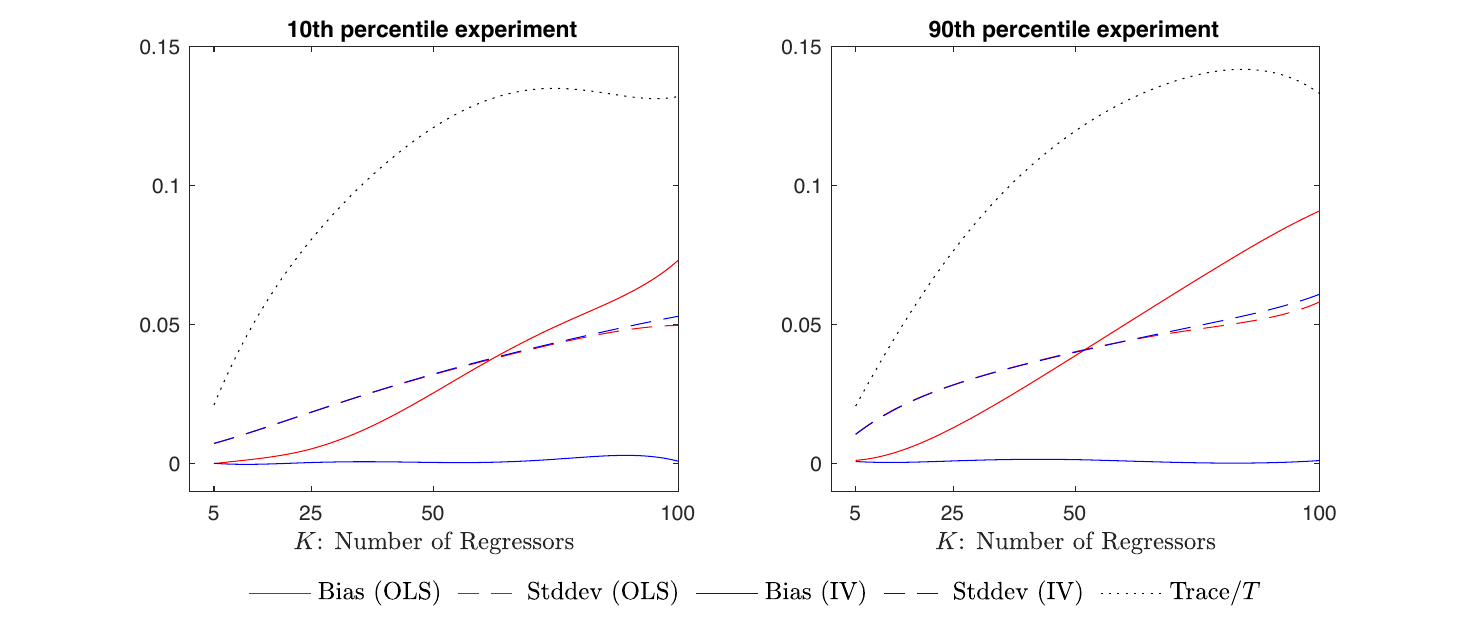}
\caption{Absolute Bias and Standard Deviation of OLS and IV with $T=200$}
\label{fig:biasFRED}
\end{figure}

Figure \ref{fig:biasFRED} shows that a typical macroeconomic data set demonstrates enough time series dependence or  short-term linear predictability that it creates a potential for substantial OLS biases. A typical size/direction of one-period feedback  for a randomly picked-up regression using macro indicators is such that for a sample with 200 time periods in a regression with 25 regressors  the OLS bias in the feedback direction is about half of the standard deviation and approximately equal to the standard deviation with 50 regressors. This leads to invalid statistical inferences when relying on the OLS. Figure \ref{fig:sizeFRED} reports the size distortions in the experiments described above for the 5\% tests about the linear contrast in the direction of the feedback. Our new proposed estimator completely corrects the bias without any significant change to the standard deviation and restores the correct size for statistical tests.\looseness=-1

\begin{figure}[htbp]
\centering
\includegraphics[width=\columnwidth]{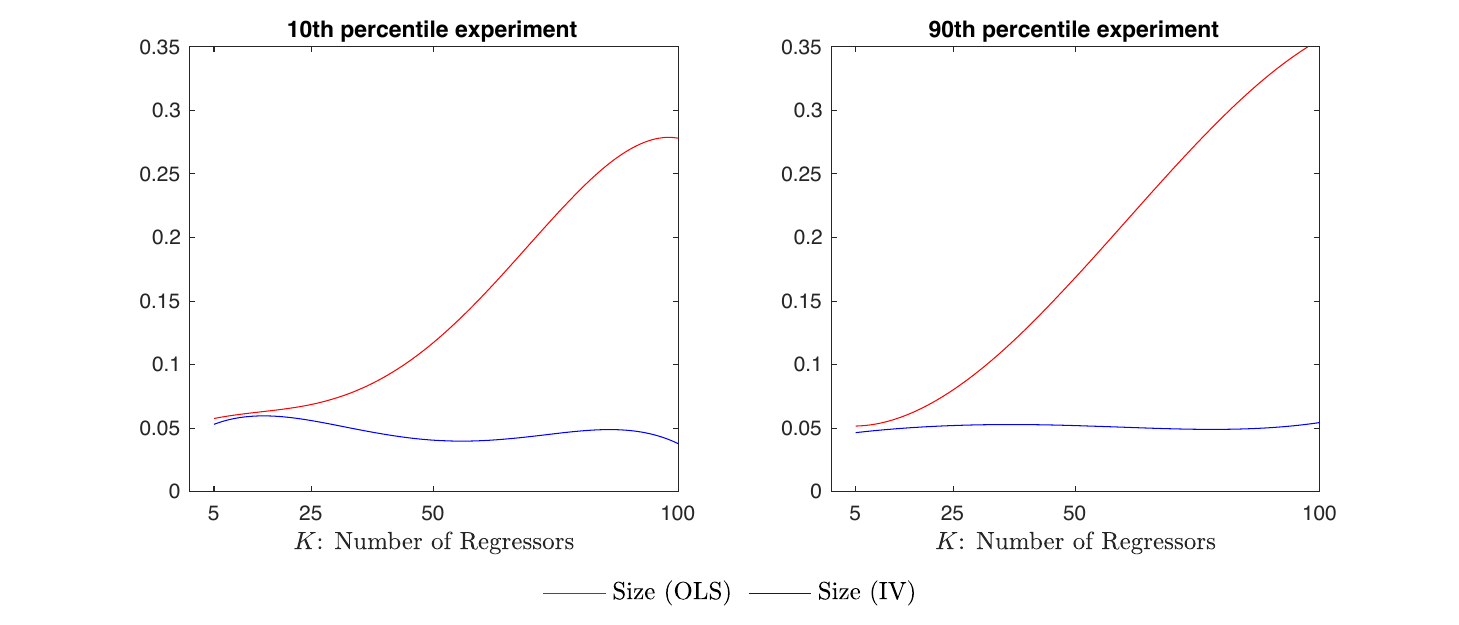}
\caption{Size of Nominal 5\% two-sided tests using OLS and IV with $T=200$}
\label{fig:sizeFRED}
\end{figure}

Another observation from Figure \ref{fig:biasFRED} is that the ratio of the lower trace of the regressor projection matrix to the sample size is highly indicative of the size of the worst bias. Applied researchers should be worried when this indicator exceeds 5--10\%. 
It is worth pointing out that in all of our experiments and simulations we encountered no problems of finding the solution to equation \eqref{eq:qform2}, which supports our assertion that the sample size requirement imposed in Lemma \ref{lem: solution exists} is sufficient but not necessary for the existence of the solution.

Finally, we note that the results presented in Figure \ref{fig:biasFRED} are both qualitatively and quantitatively similar to the left panels of Figures \ref{fig:bias} and  \ref{fig:biasMA}.  The difference between figures is that in Figures \ref{fig:bias} and  \ref{fig:biasMA}  we simulated artificial regressors following a simple AR(1) or MA(1) process. Our theoretical results are quite agnostic about the time series properties of the regressors, specifically, we make no assumptions about stationarity or origin of the regressors. The data generating processes underlying Figure \ref{fig:biasFRED} mimics the time series behavior and feedback size/direction of a 'typical' macroeconomic application to the best of our ability.

\begin{figure}[htbp]
\centering
\includegraphics[width=\columnwidth]{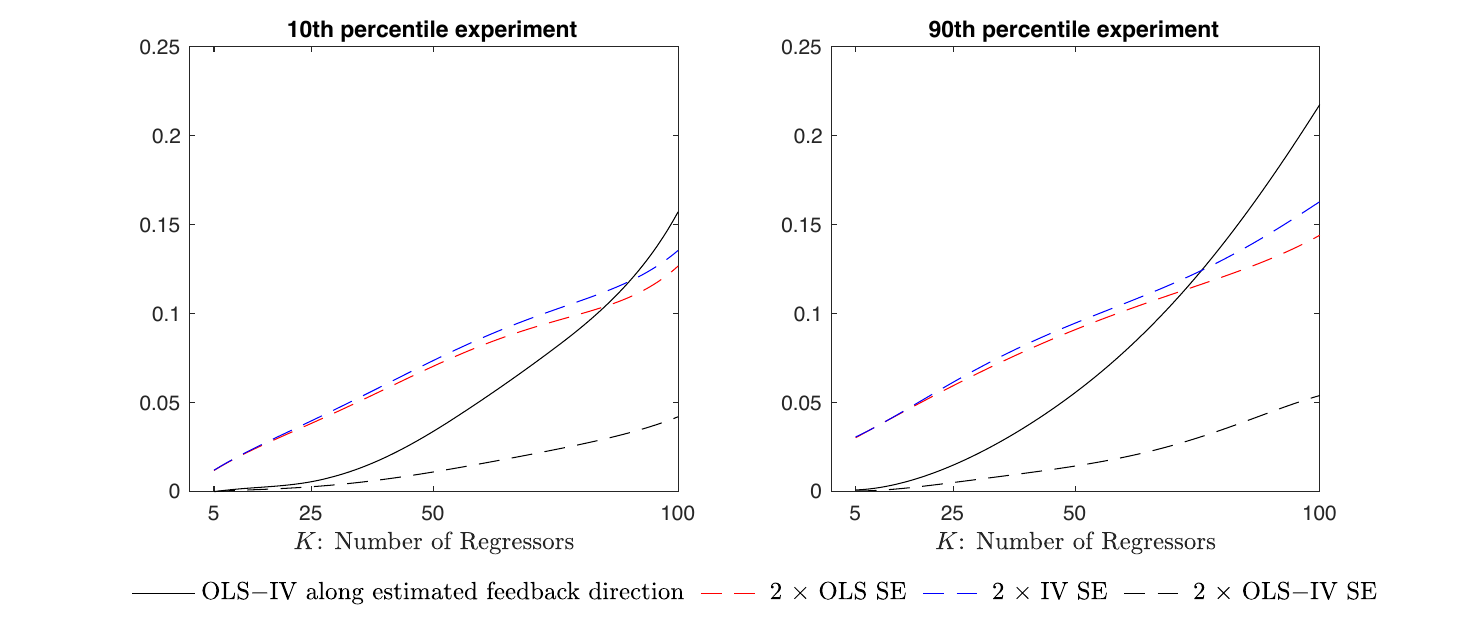}
\caption{Difference between IV and OLS along estimated feedback direction, $T=200$}
\label{fig:difFRED}
\end{figure}

Figure \ref{fig:difFRED} answers the question of how different the results of the OLS and our new proposed estimators are in a randomly selected regression based on typical macroeconomic data. As in the experiments described above we randomly select $(y_\text{r},X_\text{r})$ but rather than evaluate the theoretical bias in simulations we calculate the realization  of the difference between two estimators for the contrast in the estimated feedback direction. We report the experiment corresponding to 10th percentile  of the absolute difference on the left panel and 90th percentile on the right. For the described experiments we report the absolute value of the difference between the OLS and the IV estimators, double of the OLS and IV standard errors, and finally double the standard deviations of the difference.\footnote{One can derive the distribution of the difference under assumption of strict exogeneity using ideas analogous to those stated in Section \ref{sec: inference}.} We report doubled standard errors to directly relate the results to the corresponding $t$-statistics.

As we see from Figure \ref{fig:difFRED}, the difference between the OLS and IV estimators is statistically significant in  all experiments we report. Given that the validity of OLS with many regressors is only known for strictly exogenous case, which is a special case of Assumption \ref{Ass: OLS} with $\alpha=0$,  the observed difference between estimators are mostly due to the biases and cannot be explained by randomness in realizations. An alternative explanation is that  a hypothesis of no feedback is rejected in almost all regressions we considered. This is a sign of prevalence of feedback mechanisms in most macroeconomic applications. Comparing the difference between the two estimators to the doubled standard errors shows that when the number of regressors is above 70 the IV estimate falls outside of the standard OLS confidence set and the OLS estimate falls outside the IV confidence set. This demonstrates the potential for drastic disagreement between the two estimation methods. Finally, we notice that the standard deviation of the OLS versus IV difference is much smaller than the standard error of either the OLS or the IV estimators, in some cases more than five times smaller. This shows that stochastic deviations in two estimators are greatly aligned, and that most of the difference between the two estimators comes from the bias, not the variance.

\begin{table}[htbp]
    \centering
    {\small
        \caption{Statistical significance of the differences in OLS and IV coefficients}
        \label{tab:...}
        \begin{threeparttable}
            \begin{widetable}{.98\columnwidth}{lr rrr rrr rrr r}
                \toprule
                K && 5 & 15 & 25 & 35 & 45 & 55 & 65 & 75 & 85 & 100 \\
                \midrule
                $100*\text{ave}(t_{\Delta} > 1.96)$ && 21.20& 13.93& 11.72& 11.03& 10.47&  8.93&  8.12&  7.64&  6.91&  6.41 \\
                \bottomrule
            \end{widetable}
            \begin{tablenotes}
                \footnotesize
               \item \textsc{NOTE}: $t_{\Delta} = \abs{\hat \beta^\text{LS} - \hat \beta^\text{IV}}/se(\hat \beta^\text{LS}- \hat \beta^\text{IV})$. The average is taken over all coefficients in 100 randomly chosen models for each value of $K$.
            \end{tablenotes}
        \end{threeparttable}
    }
\end{table}

While Figure \ref{fig:difFRED} reports the difference between the two estimators in the most affected direction, one may also ask how different coefficients on individual regressors are. The average bias of individual coefficients is a counter-play of  two forces.  On one side, a larger number of regressors leads to a larger lower diagonal trace $\tr(D'M_\text{r})/T$ and thus to a larger bias in the most affected direction. At the same time, when the dimensionality of regressors is large, (a randomly selected) feedback direction is on average less aligned with any coordinate direction. Thus, the same size of the worst bias results in a smaller average individual coefficient bias when the number of regressors increases, as it spreads out among many individual coefficients.\looseness=-1

In Table \ref{tab:...}, we report the average fraction of coefficients that display a statistically significant difference between the OLS and the IV estimates. The average is taken over the $K$ coefficients in each regression and over the 100 random regressions with $K$ regressors drawn from the macroeconomic data base described above. While the fraction of coefficients with a statistically significant difference between OLS and IV is declining with $K$, the absolute number of such coefficients increases. We conclude that while most directions/coefficients are immune to the biases, a non-trivial fraction of coefficients are significantly affected. \looseness=-1


\bibliographystyle{chicago}
\bibliography{literature.bib}

\begin{appendices}

\section{Proofs}
\label{sec: proofs}

\paragraph{Notation} For brevity, the expectation $\E[\cdot]$ is used to denote the conditional expectation $\E[\cdot|\tilde X]$. The proofs use some well known identities involving matrix traces. For matrices of compatible dimensions, $\tr(ABD)=\tr(BDA)$, $\tr(A)=\tr(A')=\tr(A+A')/2$. As in the main paper, let $P=I-M=X(X'X)^{-1}X'$ and $P_\Gamma=I-M_\Gamma=X(X'(I-\Gamma)X)^{-1}X'(I-\Gamma).$

\subsection{Auxilliary lemmas}

The results of the next lemma are well-known, but included here for an ease of reference.

\begin{lemma}\label{lem: trace}
    \begin{enumerate}[label=(\roman*)]
        \item For a symmetric matrix $A$ and positive semi-definite (psd) matrix $B$, we have the bounds $\lambda_{\min}(A) \tr(B) \leq \tr(AB) \leq \lambda_{\max}(A) \tr(B)$. \label{lem: trace 1}
        \item For any square matrix $A$, let $\lambda$ be the smallest eigenvalue of $\frac{A+A'}{2}$. If $\lambda > 0$, then $\norm{A^{-1}} \leq 1/\lambda$. \label{lem: trace 2}
        \item For any compatible matrices $A$ and $B$, we have $\abs{\tr(AB)} \le \norm{A}_F \norm{B}_F$. \label{lem: trace 3}
    \end{enumerate}
\end{lemma}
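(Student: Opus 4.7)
All three parts are standard matrix-analysis facts, so my plan is to dispatch each one individually with a short argument rather than setting up any unified machinery. None of the three requires the probabilistic structure from Assumption \ref{Ass: OLS}; they are purely linear-algebraic, which is why they appear in an auxiliary lemma.

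For part \ref{lem: trace 1}, the approach is to diagonalize the psd matrix $B$. Write the spectral decomposition $B = \sum_i \mu_i v_i v_i'$ with $\mu_i \geq 0$ and orthonormal $\{v_i\}$. The cyclic property of the trace gives $\tr(AB) = \sum_i \mu_i \tr(A v_i v_i') = \sum_i \mu_i\, v_i' A v_i$. Since $A$ is symmetric, the Rayleigh-quotient bound $\lambda_{\min}(A) \leq v_i' A v_i \leq \lambda_{\max}(A)$ applies to each term, and the nonnegativity of the weights $\mu_i$ together with $\tr(B)=\sum_i \mu_i$ delivers the inequality.

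For part \ref{lem: trace 2}, the plan is to show the lower bound $\|Ax\| \geq \lambda \|x\|$ for every $x$, from which invertibility of $A$ and the operator-norm bound on $A^{-1}$ both follow. The key observation is that $x'Ax = x'\tfrac{A+A'}{2}x \geq \lambda \|x\|^2$ because $(A+A')/2$ is symmetric with smallest eigenvalue $\lambda>0$. Combining this with the Cauchy--Schwarz bound $|x'Ax| \leq \|Ax\|\|x\|$ yields $\|Ax\|\|x\| \geq \lambda \|x\|^2$, hence $\|Ax\| \geq \lambda \|x\|$. This forces $A$ to be injective (and thus invertible on the finite-dimensional space), and substituting $y = A x$ gives $\|A^{-1}y\| \leq \|y\|/\lambda$, which is the desired operator-norm bound.

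For part \ref{lem: trace 3}, the approach is to recognize $\tr(AB)$ as an inner product on the space of matrices and apply Cauchy--Schwarz. Expanding, $\tr(AB) = \sum_{i,j} A_{ij} B_{ji}$, and the ordinary Cauchy--Schwarz inequality applied to the entries gives $|\tr(AB)| \leq \big(\sum_{i,j} A_{ij}^2\big)^{1/2}\big(\sum_{i,j} B_{ji}^2\big)^{1/2} = \|A\|_F \|B\|_F$. There is no real obstacle to the lemma; the only point worth being careful about is the use of the symmetrization $(A+A')/2$ in part \ref{lem: trace 2}, which requires noting that $x'Ax$ is a scalar and hence equals its own transpose, so only the symmetric part of $A$ contributes to the quadratic form.
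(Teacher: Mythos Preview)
Your proof is correct and essentially matches the paper's argument. The only cosmetic difference is in part \ref{lem: trace 1}: you diagonalize the psd matrix $B$ and bound the Rayleigh quotients $v_i'Av_i$, while the paper diagonalizes the symmetric matrix $A$ and uses nonnegativity of the diagonal of $UBU'$; the two are entirely equivalent. Parts \ref{lem: trace 2} and \ref{lem: trace 3} are identical to the paper's proofs.
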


\begin{proof}[Proof of Lemma \ref{lem: trace}.]
    \begin{enumerate*}[mode=unboxed]
        \item[\ref{lem: trace 1}] As $A$ is symmetric, there exists $U$ with $U'U=I$ such that $D = UAU'$ is a diagonal matrix with eigenvalues of $A$ along its diagonal. Let $F = UBU'$. Note that $F_{ii}\geq 0$ since $B$ is psd and $\tr(B) =\tr(F) = \sum_i F_{ii}$. Now,
        \begin{align}
            \tr(AB)=\tr(UAU'UBU')=\sum\nolimits_i D_{ii} F_{ii} \leq \max_i D_{ii}\sum\nolimits_i F_{ii} = \lambda_{\max}(A) \tr(B),    
        \end{align}
        and $\sum_i D_{ii} F_{ii} \geq \min_{i} D_{ii} \sum_i F_{ii} = \lambda_{\min}(A) \tr(B)$.
    
        \item[\ref{lem: trace 2}] Now, $\lambda \norm{x}^2 \leq x'(A+A')x/2 = x'Ax \leq \norm{x}\norm{Ax}$. As $\lambda > 0$, it follows that $Ax \neq 0$ when $x \neq 0$, so $A$ is invertible. For $x \neq 0$, we then have $\lambda \leq \norm{Ax}/\norm{x} = \norm{y}/\norm{A\inverse y}$ where $y = Ax$. Thus, $\norm{A\inverse} = \sup_{y \neq 0} \norm{A\inverse y}/\norm{y} \le 1/\lambda$. 
    
        \item[\ref{lem: trace 3}] $\abs{\tr(AB)} = \abs*{\sum\nolimits_{ij} A_{ij} B_{ij}} \leq \sqrt{\sum\nolimits_{ij} A_{ij}^2} \sqrt{\sum\nolimits_{ij} B_{ij}^2} = \norm{A}_F \norm{B}_F.$ \qedhere
    \end{enumerate*}
\end{proof}

\begin{lemma}\label{lem: equivalence}
    Suppose $X \in \R^{T \times K}$ has full rank and $\Gamma\in \R^{T \times T}$ has $\|\Gamma\|<1-c$. Then
    \begin{enumerate*}[label=(\roman*), mode=unboxed, itemjoin={{}}]
        \item the matrices $I-\Gamma$, $I - P\Gamma$, $X'(I-\Gamma)X$ and $I+A_\Gamma M$ are invertible where $A_\Gamma=(I-\Gamma)\inverse \Gamma$; \label{lem: equivalence inv}

        \item the following identities hold: \label{lem: equivalence 2}
	\begin{align}
	     (I-\Gamma) M_{\Gamma} &= M(I+A_\Gamma M)\inverse , \label{eq: equivalence1}\\
	   P_\Gamma &= (I-P\Gamma)\inverse P(I-\Gamma), \label{eq: equivalence4}\\
	   M_\Gamma &= (I-P\Gamma)\inverse M, \label{eq: equivalence2}\\
          \hat\beta^\emph{IV}(\Gamma) &=(X'X)^{-1}X'(I+A_\Gamma M)^{-1}y. \label{eq: equivalence3}
	\end{align}

        \item the matrix $(I-\Gamma)M_\Gamma+M_{\Gamma}'(I-\Gamma')$ is positive semi-definite and $c<\frac{T-K_\Gamma}{T-K}<C$ where $T-K_\Gamma=\tr[ (I-\Gamma) M_{\Gamma}]$; \label{lem: equivalence 1}
    \end{enumerate*}
\end{lemma}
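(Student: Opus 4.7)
For part (i), the Neumann series gives invertibility of $I-\Gamma$ and $I-P\Gamma$ directly, since $\|P\Gamma\| \leq \|\Gamma\| < 1-c$. For $X'(I-\Gamma)X$, its symmetric part is $X'[I-(\Gamma+\Gamma')/2]X$, whose smallest eigenvalue is at least $c\,\lambda_{\min}(X'X) > 0$ because $X$ has full column rank and $\|\Gamma\| < 1-c$; Lemma \ref{lem: trace}\ref{lem: trace 2} then delivers invertibility. For $I + A_\Gamma M$ I would derive the workhorse identity
\begin{align}
    (I-\Gamma)(I + A_\Gamma M) = (I-\Gamma) + \Gamma M = I - \Gamma P,
\end{align}
using $(I-\Gamma)A_\Gamma = \Gamma$ and $\Gamma M = \Gamma - \Gamma P$. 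Invertibility of both factors on the left forces invertibility of $I + A_\Gamma M$, and this identity will be reused throughout parts (ii) and (iii).

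For part (ii), I would verify the four identities in the order \eqref{eq: equivalence4}, \eqref{eq: equivalence2}, \eqref{eq: equivalence3}, \eqref{eq: equivalence1}. For \eqref{eq: equivalence4}, expand $(I-P\Gamma)P_\Gamma$ and spot $X'X - X'\Gamma X = X'(I-\Gamma)X$ in the middle. Subtracting from the tautology $(I-P\Gamma)^{-1}(I-P\Gamma) = I$ and simplifying $I - P\Gamma - P + P\Gamma = M$ yields \eqref{eq: equivalence2}. For \eqref{eq: equivalence3}, it suffices to verify $X'X(X'(I-\Gamma)X)^{-1}X'(I-\Gamma)(I+A_\Gamma M) = X'$; substituting $(I-\Gamma)(I+A_\Gamma M) = I-\Gamma P$ from part (i) and using the rewriting $X'(I-\Gamma P) = X'(I-\Gamma)X(X'X)^{-1}X'$ makes the left-hand side telescope. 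Lastly, \eqref{eq: equivalence1} is equivalent to $(I-\Gamma)M_\Gamma(I+A_\Gamma M) = M$; expanding and reusing the same two substitutions collapses the left-hand side to $I - \Gamma P - (I-\Gamma)P = M$.

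For part (iii), observe that $v'[(I-\Gamma)M_\Gamma + M_\Gamma'(I-\Gamma')]v = 2\,v'(I-\Gamma)M_\Gamma v$ as a scalar identity, so the psd claim reduces to $v'(I-\Gamma)M_\Gamma v \geq 0$. Set $B = I+A_\Gamma M$, $w = B^{-1}v$, and $u = Mw$. Then $v = w + A_\Gamma u$ and $Mw = u$, and a short calculation yields
\begin{align}
    v'(I-\Gamma)M_\Gamma v = v'Mw = u'u + u'A_\Gamma u = u'(I-\Gamma)^{-1}u,
\end{align}
where the last equality uses the elementary identity $I + A_\Gamma = (I-\Gamma)^{-1}$. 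Because $\|\Gamma\| < 1-c$, the symmetric part of $I - \Gamma$ has smallest eigenvalue at least $c$; a short argument in the spirit of Lemma \ref{lem: trace}\ref{lem: trace 2} then shows the symmetric part of $(I-\Gamma)^{-1}$ has smallest eigenvalue bounded below by $c/4$ and operator norm bounded above by $1/c$. Hence $u'(I-\Gamma)^{-1}u \in [(c/4)\|u\|^2, (1/c)\|u\|^2]$, which proves the psd claim.

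For the trace inequality, plug $v = e_i$ and sum to get $T - K_\Gamma = \tr[(I-\Gamma)M_\Gamma] = \sum_i u_i'(I-\Gamma)^{-1}u_i$ with $u_i = MB^{-1}e_i$; the sandwich above yields $(c/4)\|MB^{-1}\|_F^2 \leq T - K_\Gamma \leq (1/c)\|MB^{-1}\|_F^2$, and $\|MB^{-1}\|_F^2 = \tr[MB^{-1}(B^{-1})']$. From part (i), $B^{-1} = (I-\Gamma P)^{-1}(I-\Gamma)$, so both $\|B\|$ and $\|B^{-1}\|$ are controlled in terms of $c$, putting the eigenvalues of the psd matrix $B^{-1}(B^{-1})'$ in $[c^2, 4/c^2]$. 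Combined with $(T-K)\lambda_{\min}(A) \leq \tr(MA) \leq (T-K)\lambda_{\max}(A)$ for psd $A$ and rank-$(T-K)$ orthogonal projection $M$, this yields both sides of $c < (T-K_\Gamma)/(T-K) < C$. I expect the main subtlety in (iii) to be finding the substitution that converts $v'(I-\Gamma)M_\Gamma v$ into $u'(I-\Gamma)^{-1}u$; once this is in hand, the bookkeeping is routine, but a naive attempt to bound $\tr(\Gamma M_\Gamma)$ via Frobenius estimates or to apply Neumann directly to $(I-\Gamma)^{-1}$ leaks an uncontrolled $\sqrt{T/(T-K)}$ factor and cannot deliver the uniform constant lower bound.
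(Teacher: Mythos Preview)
Your proposal is correct and essentially the same as the paper's proof: part (i) via Neumann/Lemma~\ref{lem: trace}\ref{lem: trace 2} plus the identity $(I-\Gamma)(I+A_\Gamma M)=I-\Gamma P$, part (ii) by direct algebraic verification in a slightly different order, and part (iii) via the reduction to $u'(I-\Gamma)^{-1}u$---your elementwise substitution $w=B^{-1}v$, $u=Mw$ is exactly the paper's matrix factorization $(I+MA_\Gamma')^{-1}M\{(I-\Gamma')^{-1}+(I-\Gamma)^{-1}\}M(I+A_\Gamma M)^{-1}$ applied to a vector, and your trace bound mirrors the paper's use of Lemma~\ref{lem: trace}\ref{lem: trace 1} on $\tr[M(I+A_\Gamma M)^{-1}]$. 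One small wording slip in (i): ``invertibility of both factors on the left forces invertibility of $I+A_\Gamma M$'' is circular (that \emph{is} one of the factors); what you need is that the right-hand side $I-\Gamma P$ is invertible (by Neumann, $\|\Gamma P\|<1$), which together with invertibility of $I-\Gamma$ gives the conclusion---the paper makes exactly this step explicit.
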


\begin{proof}[Proof of Lemma \ref{lem: equivalence}.]
    \begin{enumerate*}[mode=unboxed, itemjoin={{}}]
    
        \item[\ref{lem: equivalence inv}] Lemma \ref{lem: trace}\ref{lem: trace 2}, the triangle inequality, and $\norm{\Gamma} < 1$ yields invertibility of $I-\Gamma$, $I-P\Gamma$, $X'(I-\Gamma)X$, and $I-\Gamma P$. $I + A_\Gamma M$ is invertible since
        \begin{align}\label{eq: relation between AM and P}
            I+A_\Gamma M=(I-\Gamma)^{-1}(I-\Gamma+\Gamma M)=(I-\Gamma)^{-1}(I-\Gamma P). 
        \end{align}
        
        \item[\ref{lem: equivalence 2}] Next, we note that
        \begin{align}
            \big(X'X\big)\inverse X' &= \big(X'X\big)\inverse \big[ X'(I-\Gamma)X \big(X'(I-\Gamma)X\big)\inverse \big] X' \\
            &= \big(X'X\big)\inverse X'(I-\Gamma)P_\Gamma (I-\Gamma)\inverse. \label{eq: Sgamma} 
        \end{align}
        Pre-multiplying \eqref{eq: Sgamma} by $X$ and using $P_\Gamma = P P_\Gamma$ gives us $P=(I-P\Gamma)P_\Gamma(I-\Gamma)^{-1}$ and hence \eqref{eq: equivalence4}. Therefore, we also have \eqref{eq: equivalence2}:
        \begin{align}
            M_\Gamma=I-P_\Gamma=(I-P\Gamma)^{-1}(I-P\Gamma-P(I-\Gamma))=(I-P\Gamma)^{-1}M.
        \end{align}
        Using the ``push-through'' identity $(I-P\Gamma)\inverse P= P(I-\Gamma P)\inverse $ on \eqref{eq: equivalence4} similarly yields \eqref{eq: equivalence1}:
        \begin{align}
            M_\Gamma &= I-P(I+A_\Gamma M)\inverse = (I+A_\Gamma)M(I+A_\Gamma M)\inverse = (I-\Gamma)\inverse M(I+A_\Gamma M)\inverse. 
        \end{align}
        Reusing that $P_\Gamma = P(I+A_\Gamma M)\inverse$ we get \eqref{eq: equivalence3} from $\hat\beta^\text{IV}(\Gamma)=\big(X'X\big)\inverse X'P_\Gamma y$ and $XP=X$. \
    
        \item[\ref{lem: equivalence 1}] Positive semi-definiteness comes from \eqref{eq: equivalence1}: 
        \begin{align}
            (I-\Gamma)M_\Gamma+M_{\Gamma}'(I-\Gamma')= (I+MA_\Gamma')^{-1}M\left\{ (I-\Gamma')\inverse + (I-\Gamma)\inverse \right\} M(I+A_\Gamma M)^{-1} 
        \end{align}
        and observing that $(I-\Gamma')^{-1}+(I-\Gamma)^{-1}$ is psd. The rate condition follows from $\tr(M)=T-K$, Lemma \ref{lem: trace}\ref{lem: trace 1},
        \begin{align}
            T-K_\Gamma=&\tr[ M(I+A_\Gamma M)\inverse] =\tr\!\left[\tfrac{(I+A_\Gamma M )\inverse+(I+MA_\Gamma')\inverse }{2}M\right]\!,
        \end{align}
        and that the eigenvalues of $\frac{(I+A_\Gamma M )\inverse+(I+MA_\Gamma')\inverse }{2}$ are in $\frac{1-\norm{\Gamma}}{1+\norm{\Gamma}}$ to $\frac{1+\norm{\Gamma}}{1-\norm{\Gamma}}$ with $\norm{\Gamma} < 1-c$.\qedhere
    \end{enumerate*}
\end{proof}

\begin{lemma}\label{lem: rotation}
    Suppose $y_t = x_t'\beta + \varepsilon_t$, $x_t = \tilde x_t + \sum_{\ell=1}^L \alpha_\ell \varepsilon_{t-\ell}$ where the $T\times K$ matrix $\tilde{X} = [\tilde x_1,\dots,\tilde x_T]'$ has full rank, and $\alpha_1, \dots, \alpha_L, r$ are $K\times 1$ vectors. Then, there exists an invertible $K\times K$ matrix $\Theta$ mapping $\{\tilde X,r,\beta,\{\alpha_l\}_{l=1}^L \}$ to $\{\tilde X\Theta,\Theta'r,\Theta\inverse\beta,\{\Theta'\alpha_l\}_{l=1}^L \}$, that satisfies: 
    \begin{enumerate}[label=(\roman*)]
        \item $\Theta'r$ and $\{\Theta'\alpha_\ell\}_{\ell=1}^L$ are spanned by the first $L+1$ basis vectors so that $r'\Theta = (r'_*,\mathbf{0}_{K-L-1}')$ and $\alpha_\ell'\Theta = (\alpha'_{*,\ell},\mathbf{0}_{K-L-1}')$ with $r_*,\alpha_{*,1},\dots,\alpha_{*,L} \in \R^{L+1}$; \label{lem: rotation 1}
        
        \item $\Theta'\tilde X_{1}'\tilde X_1\Theta/T=I_{L+1}$ and $\Theta'\tilde Z_2'\tilde X_1\Theta=0$ for $\tilde X = [\tilde X_1, \tilde X_2]$, and $\tilde Z=[\tilde Z_1, \tilde Z_2] = (I-\Gamma')\tilde X\Theta$, where $\tilde X_1$ and $ \tilde Z_1$ each have $L+1$ columns. \label{lem: rotation 2}
    \end{enumerate}
\end{lemma}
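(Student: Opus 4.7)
The plan is to construct $\Theta$ as a block matrix $\Theta = [U_1, U_2]$ with $U_1 \in \R^{K \times (L+1)}$ and $U_2 \in \R^{K \times (K-L-1)}$, choosing the two column spans to decouple the two halves of the conditions. Without loss of generality (padding with auxiliary directions if necessary) assume $W := [r,\alpha_1,\dots,\alpha_L]$ has full column rank $L+1$, and set $A := \tilde X'(I-\Gamma)\tilde X$. The matrix $A$ is invertible: its symmetric part $\tilde X'\bigl(I-(\Gamma+\Gamma')/2\bigr)\tilde X$ is positive definite because the eigenvalues of $I-(\Gamma+\Gamma')/2$ are bounded below by $1-\|\Gamma\| > c > 0$ and $\tilde X$ has full column rank.

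To meet condition (i), which asks that the last $K-L-1$ entries of $r'\Theta$ and $\alpha_\ell'\Theta$ vanish, I pick the columns of $U_2$ to lie in the Euclidean orthogonal complement $W^\perp$; any basis of $W^\perp$ works. To then force the cross-block vanishing $\tilde Z_2'\tilde X_1 = U_2'AU_1 = 0$ in condition (ii), I set $\mathrm{span}(U_1) = V_1 := A^{-1}\mathrm{span}(W)$, so the columns of $AU_1$ sit inside $\mathrm{span}(W)$ and the cross term is killed automatically by $U_2 \perp W$. Within $V_1$, I run Gram-Schmidt in the inner product $\langle u,v\rangle_S = u'\tilde X'\tilde X v/T$ (positive definite by full rank of $\tilde X$) to obtain an $S$-orthonormal basis, which yields the normalization $\tilde X_1'\tilde X_1/T = U_1'SU_1 = I_{L+1}$.

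The main obstacle is showing that $\Theta = [U_1,U_2]$ is invertible, equivalently that $V_1 \cap W^\perp = \{0\}$; this is the step where the bound $\|\Gamma\|<1-c$ is essential. Suppose $v \in V_1 \cap W^\perp$. Since $Av \in \mathrm{span}(W)$ and $v \in W^\perp$, we get $v'Av = 0$. Setting $w = \tilde X v$ rewrites this as $w'w = w'\Gamma w$, and Cauchy-Schwarz gives $|w'\Gamma w|\le \|\Gamma\|\,\|w\|^2 \le (1-c)\|w\|^2$; comparing, $c\|w\|^2 \le 0$, so $w=0$ and then $v=0$ by the full column rank of $\tilde X$. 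Since $\dim V_1 + \dim W^\perp = (L+1) + (K-L-1) = K$, trivial intersection upgrades to a direct sum $V_1 \oplus W^\perp = \R^K$, so $\Theta$ is invertible, and both (i) and (ii) then hold by construction.
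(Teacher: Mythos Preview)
Your argument is correct and proceeds along a genuinely different route from the paper. The paper builds $\Theta=\Theta_0^{-1}\Theta_1$ in one shot: $\Theta_0$ is the symmetric square root of the Gram matrix $[\tilde X_1,\tilde Z_2]'[\tilde X_1,\tilde Z_2]/T$, and $\Theta_1$ is the orthogonal factor from the QR decomposition of $\Theta_0^{-1}[r,\alpha_1,\dots,\alpha_L]$; condition (i) then falls out of the upper-triangular QR structure, and condition (ii) from $\Theta'[\tilde X_1,\tilde Z_2]'[\tilde X_1,\tilde Z_2]\Theta/T=I_K$. You instead assemble $\Theta=[U_1,U_2]$ column-block by column-block, letting condition (i) dictate $\mathrm{span}(U_2)=W^\perp$, letting the cross-block part of (ii) dictate $\mathrm{span}(U_1)=A^{-1}\mathrm{span}(W)$, and then running Gram--Schmidt in the $\tilde X'\tilde X/T$ inner product to pin down the normalization. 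What your route buys is transparency about where the structural hypotheses enter: your direct-sum check $V_1\cap W^\perp=\{0\}$ isolates the bound $\|\Gamma\|<1-c$ as exactly the ingredient forcing $v'Av>0$ for $v\neq 0$, whereas in the paper this assumption is buried inside the invertibility of $\Theta_0$. One minor caveat: the lemma as stated does not list $\|\Gamma\|<1-c$ among its hypotheses, but every invocation of the lemma in the paper is under that standing assumption, so your use of it is harmless.
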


\begin{proof}[Proof of Lemma \ref{lem: rotation}.] 
    Let $\Theta = \Theta_0\inverse \Theta_1$, where $K\times K$ matrix $\Theta_0$ is the symmetric square root of $[\tilde X_1,\tilde Z_{2}]'[\tilde X_1,\tilde Z_{2}]/T$ and $\Theta_1 R_1$ is the QR decomposition of $\Theta_0^{-1}[r,\alpha_1,\dots,\alpha_L]$. Specifically, $\Theta_1' = \Theta_1\inverse$ is a $K \times K$ matrix and $R_1$ is spanned by the first $L+1$ basis vectors. We then have that $\Theta'[r,\alpha_1,\dots,\alpha_L] = \Theta_1'\Theta_0\inverse [r,\alpha_1,\dots,\alpha_L] = \Theta_1' \Theta_1 R_1 = R_1$ while we also have $\Theta'[\tilde X_1,\tilde Z_{2}]'[\tilde X_1,\tilde Z_{2}]\Theta/T = \Theta_1' \Theta_0\inverse \Theta_0^2 \Theta_0\inverse \Theta_1 = I_K$. 
    \qedhere
\end{proof}

\begin{lemma}\label{lem: Frisch-Waugh}
    Suppose $X$ and $Z$ are $T\times K$ matrices with $Z'X$ invertible. Let $X=[X_1,X_2]$ and $Z=[Z_1,Z_2]$, where $X_\ell$ and $Z_\ell$ are $T\times K_\ell$ with $K_1+K_2=K$. Define the oblique projections $P_Z = X(Z'X)^{-1}Z' = I - M_Z$ and $P_2 = X_2(Z_2'X_2)^{-1}Z_2' = I-M_2$.
    \begin{enumerate}[label=(\roman*)]
        \item (Generalized Frisch-Waugh-Lowell) If $r=[r_1'; \mathbf{0}_{K_2}]'$ with $r_1\in \R^{K_1}$, then \label{lem: Frisch-Waugh 1}
        \begin{align}
            r'(Z'X)^{-1}Z'=r_1'(Z_1'M_2X_1)^{-1}Z_1'M_2.
        \end{align}
        
        \item $M_Z=M_2-M_2X_1(Z_1'M_2X_1)^{-1}Z_1'M_2.$ \label{lem: Frisch-Waugh 2}
    
        \item If $Z=(I-\Gamma')X$ for $T\times T$ matrix $\Gamma$ with $\|\Gamma\| < 1-c$ and $A$ is a $T\times T$ matrix, then \label{lem: Frisch-Waugh 3}
        \begin{align}
            \left|\tr(A(M_Z-M_2))\right|\leq C K_1 \|A\|.
        \end{align}
    \end{enumerate}
\end{lemma}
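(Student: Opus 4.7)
The plan is to handle the three parts sequentially: (ii) is obtained by block-matrix inversion in the same spirit as (i), and (iii) combines the explicit formula from (ii) with a rank-based trace bound and operator-norm bounds on oblique projections drawn from Lemma \ref{lem: equivalence}.

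For part (i), I would apply the standard block inverse to
\begin{align}
Z'X = \begin{pmatrix} Z_1'X_1 & Z_1'X_2 \\ Z_2'X_1 & Z_2'X_2 \end{pmatrix}
\end{align}
pivoting on $Z_2'X_2$. The relevant Schur complement is $Z_1'X_1 - Z_1'X_2(Z_2'X_2)^{-1}Z_2'X_1 = Z_1'M_2X_1$, so the top block-row of $(Z'X)^{-1}$ equals $(Z_1'M_2X_1)^{-1}\bigl[\,I_{K_1},\,-Z_1'X_2(Z_2'X_2)^{-1}\,\bigr]$. Contracting against $r=[r_1';\mathbf{0}_{K_2}]'$ on the left and $Z' = [Z_1,Z_2]'$ on the right collapses to $r_1'(Z_1'M_2X_1)^{-1}\bigl(Z_1' - Z_1'X_2(Z_2'X_2)^{-1}Z_2'\bigr) = r_1'(Z_1'M_2X_1)^{-1}Z_1'M_2$, as claimed. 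For part (ii), expanding all four blocks of $X(Z'X)^{-1}Z'$ through the same block inverse and collecting terms yields $P_Z = P_2 + (I-P_2)X_1(Z_1'M_2X_1)^{-1}Z_1'(I-P_2) = P_2 + M_2X_1(Z_1'M_2X_1)^{-1}Z_1'M_2$, so $M_Z = M_2 - M_2X_1(Z_1'M_2X_1)^{-1}Z_1'M_2$. Alternatively one may solve $P_Zv = X_1a_1 + X_2a_2$ under $Z'(v-Xa)=0$ by eliminating $a_2 = (Z_2'X_2)^{-1}Z_2'(v-X_1a_1)$ and then $a_1 = (Z_1'M_2X_1)^{-1}Z_1'M_2v$.

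For part (iii), the formula from (ii) gives $A(M_Z - M_2) = -A M_2 X_1(Z_1'M_2X_1)^{-1}Z_1'M_2$, which has rank at most $K_1$ since it factors through $\mathbb{R}^{K_1}$. The elementary bound $|\tr(B)|\leq \mathrm{rank}(B)\cdot\|B\|$ (obtained from the SVD $B = U_1\Sigma_1V_1^*$ with $r=\mathrm{rank}(B)$ orthonormal columns: $\tr(B) = \tr(V_1^*U_1\Sigma_1)$ is the trace of an $r\times r$ matrix, bounded by $r\|\Sigma_1\|=r\|B\|$) then yields
\begin{align}
\bigl|\tr\bigl(A(M_Z-M_2)\bigr)\bigr| \;\leq\; K_1\,\|A(M_Z-M_2)\| \;\leq\; K_1\,\|A\|\bigl(\|M_Z\|+\|M_2\|\bigr).
\end{align}
The main obstacle is that $M_Z$ and $M_2$ are \emph{oblique} projections and hence not contractions in general. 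I would handle this by invoking Lemma \ref{lem: equivalence}\ref{lem: equivalence 2}: since $Z = (I-\Gamma')X$, the identity $M_Z = (I-P\Gamma)^{-1}M$ together with $\|P\Gamma\|\leq\|\Gamma\|<1-c$ yields $\|M_Z\|\leq 1/c$. Applying the same identity to the sub-regression involving $X_2$ (so $Z_2 = (I-\Gamma')X_2$ with the same $\Gamma$) gives $\|M_2\|\leq 1/c$, producing the claimed bound $CK_1\|A\|$.
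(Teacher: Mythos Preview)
Parts (i) and (ii) match the paper's argument: both use the block inverse of $Z'X$ with Schur complement $Z_1'M_2X_1$ and then collapse the resulting expression.

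Part (iii) is correct but proceeds differently from the paper. The paper first normalizes so that $X_1'X_1=I_{K_1}$ and $X_2'X_1=0$ (using invariance of $M_Z,M_2$ under this change of basis), then bounds the trace of the $K_1\times K_1$ product $Z_1'M_2AM_2X_1\cdot(Z_1'M_2X_1)^{-1}$ via $\lvert\tr(PQ)\rvert\le K_1\lVert P\rVert\lVert Q\rVert$ and finally shows $\lVert(Z_1'M_2X_1)^{-1}\rVert<C$ through a symmetrization argument that exploits the normalization. Your route is more direct: the rank-$K_1$ factorization of $A(M_Z-M_2)$ together with the elementary bound $\lvert\tr(B)\rvert\le\rk(B)\lVert B\rVert$ and the triangle inequality $\lVert M_Z-M_2\rVert\le\lVert M_Z\rVert+\lVert M_2\rVert$ reduces everything to bounding the two oblique-projection norms, which Lemma~\ref{lem: equivalence}\ref{lem: equivalence 2} handles immediately (and applies verbatim to the sub-regression since $Z_2=(I-\Gamma')X_2$). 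Your argument avoids the normalization step and the separate control of $\lVert(Z_1'M_2X_1)^{-1}\rVert$; the paper's argument, by contrast, isolates that inverse bound explicitly, which is conceptually a bit more informative about where the structure $Z=(I-\Gamma')X$ is actually needed.
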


\begin{proof}[Proof of Lemma \ref{lem: Frisch-Waugh}.] 
    Letting $\Delta=(Z_1'M_2X_1)^{-1}$, the matrix block inversion formula gives
    \begin{align}\label{eq: block inversion}
        (Z'X)^{-1} = \begin{bmatrix}
                \Delta & -\Delta Z_1'X_2(Z_2'X_2)^{-1} \\
                -(Z_2'X_2)^{-1}Z_2'X_1\Delta & (Z_2'X_2)^{-1}\{I+Z_2'X_1\Delta Z_1'X_2\}
            \end{bmatrix}
    \end{align}
    \begin{enumerate*}[mode=unboxed, itemjoin={{}}]
        \item[\ref{lem: Frisch-Waugh 1}] From \eqref{eq: block inversion} we have
        \begin{align}
            r'(Z'X)^{-1}Z'=r_1'\Delta Z_1'-r_1'\Delta Z_1'X_2(Z_2'X_2)^{-1} Z_2'=r_1'\Delta Z_1'M_2.
        \end{align}

        \item[\ref{lem: Frisch-Waugh 2}] Denote $\delta=X_1\Delta Z_1'$. Using \eqref{eq: block inversion} above:
        \begin{align}
            P_Z = X(Z'X)^{-1}Z' =\delta -\delta P_2-P_2\delta +P_2+P_2\delta P_2=I-M_2+M_2\delta M_2.
        \end{align}

        \item[\ref{lem: Frisch-Waugh 3}] We impose without loss of generality that $X_2'X_1=0$ and $X_1'X_1=I_{K_1}$. This entails no loss since \eqref{eq: equivalence1} and \eqref{eq: equivalence2} yields $Z_2'M_2=0$ and $M_2X_2=0$, which in turn implies that $(M_Z,M_2)$ is invariant under the transformation $[X_1,X_2] \mapsto [M^* X_1 (X_1'M^*X_1)^{-1/2},X_2]$ where $M^* = I - X_2(X_2'X_2)\inverse X_2' = I-P^*$. From \ref{lem: Frisch-Waugh 2}, Lemma \ref{lem: trace}\ref{lem: trace 3}, and $\|\Psi\|_F\leq \|\Psi\|\sqrt{\rk(\Psi)}$:
        \begin{align}
            \left|\tr(A(M_\Gamma-M_2))\right| 
            &=\left|\tr\left(Z_1'M_2A M_2X_1(Z_1'M_2X_1)^{-1}\right)\right|\\
           &\leq K_1 \|Z_1'M_2A M_2X_1\|\cdot\|(Z_1'M_2X_1)^{-1}\| \\ 
           &\leq K_1 \|I-\Gamma\|\cdot\|M_2\|^2\cdot\|A\|\cdot\|(Z_1'M_2X_1)^{-1}\|.
        \end{align}
        Equation \eqref{eq: equivalence2} gives $M_2=(I-P^*\Gamma)^{-1}M^*$ and therefore $\|M_2\|\leq\frac{1}{1-\norm{\Gamma}} < C$. Together with $X_2'X_1=0$, \eqref{eq: equivalence2} also yields
        \begin{align}
        Z_1'M_2X_1 &= X_1'(I-\Gamma)(I-P^*\Gamma)^{-1}M^*X_1=X_1'(I-\Gamma)(I-P^*\Gamma)^{-1}X_1,
        \shortintertext{and therefore}
        \tfrac{1}{2}(Z_1'M_2X_1+X_1'M_2'Z_1) & = \tfrac{1}{2} X_1'\left\{(I-\Gamma)(I-P^*\Gamma)^{-1}+(I-\Gamma'P^*)^{-1}(I-\Gamma')\right\}X_1.
        \end{align}
        As $X_1'X_1 = I_{K_2}$, the eigenvalues of the last matrix are larger than $\frac{1-\norm{\Gamma}}{1+\norm{\Gamma}}$. Therefore, Lemma \ref{lem: trace}\ref{lem: trace 2} and $\norm{\Gamma} < 1-c$ implies that $\|(Z_1'M_2X_1)^{-1}\| < C$.
        \end{enumerate*}
\end{proof}

\begin{lemma}\label{lem: order of terms infinite lag}
    Suppose $\tilde X$ is a $T \times k$ matrix with $\tilde X'\tilde X/T=I_{k}$, $A$ is a $T\times T$ matrix that is $\tilde X$-measurable, $\alpha_0,\dots,\alpha_{k}$ are non-random $k \times 1$ vectors with $\norm{\alpha_\ell} < C$ for all $\ell$, and $\{ \varepsilon_t \}_{t=1-k}^T$ are i.i.d. conditionally on $\tilde X$ with $\E[\varepsilon_t \!\mid\! \tilde X]=0$, $0 < \sigma^2=\E[\varepsilon_t^2 \!\mid\! \tilde X] < C$ and $\E[\varepsilon_t^4 \!\mid\! \tilde X]<C$. Let $u_\ell = (\varepsilon_{1-\ell},\dots,\varepsilon_{T-\ell})'$. Then, as $T\to\infty$ while $k$ is fixed:
    \begin{enumerate*}[label=(\roman*), mode=unboxed, itemjoin={{ }}]
        \item $\norm{T^{-1/2}\tilde X'A\sum_{\ell=0}^{k} u_\ell \alpha_{\ell}'}_F = O_p\left(\norm{A}\right)$; \label{lem: order of terms infinite lag 1}
        \item $\varepsilon'A\varepsilon-\E [\varepsilon'A\varepsilon\!\mid\! \tilde X] =O_p\left(\|A\|_F\right)$. \label{lem: order of terms infinite lag 2}
    \end{enumerate*}
\end{lemma}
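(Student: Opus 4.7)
Both parts are moment bounds, so the plan is to control the conditional second moment on the left-hand side and invoke Markov/Chebyshev. Part (ii) is a standard Hanson--Wright style calculation that becomes straightforward once we exploit that $\varepsilon_t$'s are i.i.d.\ conditionally on $\tilde X$; the only real work is in part (i), where the rank structure of $\tilde X$ must be used to beat a naive Cauchy--Schwarz bound on the cross terms.

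\textbf{Part (i).} I would first expand the squared Frobenius norm as
\begin{align}
    \E\!\left[\norm*{T^{-1/2}\tilde X' A \sum_{\ell=0}^{k} u_\ell \alpha_\ell'}_F^2 \!\!\Big| \tilde X\right]
    = T^{-1}\!\!\sum_{\ell,m=0}^{k}(\alpha_\ell'\alpha_m)\,\E[u_\ell' B u_m \mid \tilde X],
\end{align}
where $B = A'\tilde X\tilde X'A$. Since $\{\varepsilon_t\}$ are i.i.d.\ conditional on $\tilde X$, $\E[u_\ell^{(t)} u_m^{(s)}\mid \tilde X]=\sigma^2 \mathbf{1}\{s=t+m-\ell\}$, so $\E[u_\ell'B u_m\mid \tilde X]=\sigma^2 \tr(S_{\ell,m}B)$ for a $T\times T$ shift matrix $S_{\ell,m}$ with $\|S_{\ell,m}\|\le 1$. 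The main step is then to bound $|\tr(S_{\ell,m}B)|$ by cycling the trace to the $k\times k$ side:
\begin{align}
    |\tr(S_{\ell,m} A'\tilde X \tilde X' A)|=|\tr(\tilde X' A S_{\ell,m} A'\tilde X)|\le k\,\|\tilde X\|^2\|A\|^2\|S_{\ell,m}\|\le kT\|A\|^2,
\end{align}
using $\tilde X'\tilde X=T I_k$ so that $\|\tilde X\|^2=T$. Combined with $\|\alpha_\ell\|\le C$ and the fact that there are only $(k+1)^2=O(1)$ terms, the expected squared Frobenius norm is $O(\|A\|^2)$, and Markov's inequality delivers the claim.

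\textbf{Part (ii).} Condition on $\tilde X$ and write $\varepsilon'A\varepsilon - \E[\varepsilon'A\varepsilon\mid \tilde X]=\sum_{t,s}A_{ts}(\varepsilon_t\varepsilon_s - \E[\varepsilon_t\varepsilon_s\mid \tilde X])$. Using only that errors are i.i.d.\ conditional on $\tilde X$ with $\E[\varepsilon_t^4\mid \tilde X]\le C$, the conditional variance equals $\sigma^4\sum_{t\ne s}(A_{ts}^2+A_{ts}A_{st})+(\E[\varepsilon_t^4\mid \tilde X]-3\sigma^4)\sum_t A_{tt}^2$, which is bounded by $C\|A\|_F^2$. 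Chebyshev's inequality then gives $\varepsilon'A\varepsilon - \E[\varepsilon'A\varepsilon\mid \tilde X]=O_p(\|A\|_F)$.

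\textbf{Main obstacle.} The only nontrivial step is the $\ell\ne m$ cross terms in part (i). A direct application of Cauchy--Schwarz gives $|\tr(S_{\ell,m}B)|\le \sqrt{T}\|B\|_F\le \sqrt{T}\cdot\sqrt{k}\|A\|^2 T$, which is $O(\|A\|^2 T^{3/2})$ and, after the $T^{-1}$ factor, would only yield the weaker rate $O_p(\|A\|\,T^{1/4})$. The right route is to move the trace onto the $k\times k$ space $\tilde X'(\cdot)\tilde X$ and use the trivial bound $|\tr(M)|\le k\|M\|$ for $k\times k$ matrices $M$; this converts the ``length'' factor $\sqrt{T}$ into a harmless $k=O(1)$ factor and is precisely where the full rank/low-dimension structure of $\tilde X$ pays off.
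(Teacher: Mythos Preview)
Your proposal is correct and follows essentially the same route as the paper: both parts bound the conditional second moment and invoke Markov, and in part (i) both exploit the rank-$k$ structure of $\tilde X$ by pushing the trace onto the $k\times k$ side (the paper phrases this via Lemma~\ref{lem: trace}\ref{lem: trace 1} with $\tr(\tilde X'\tilde X/T)=k$, which is exactly your $|\tr(M)|\le k\|M\|$ device). One harmless slip: in (ii), with the off-diagonal sum restricted to $t\ne s$ the diagonal coefficient should be $\E[\varepsilon_t^4\mid\tilde X]-\sigma^4$ rather than $-3\sigma^4$, but the $C\|A\|_F^2$ bound is unaffected.
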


\begin{proof}[Proof of Lemma \ref{lem: order of terms infinite lag}.]
    \begin{enumerate*}[mode=unboxed, itemjoin={{ }}]
        \item[\ref{lem: order of terms infinite lag 1}] Note that:
        \begin{align}
            \E\norm*{\frac{\tilde X'A}{\sqrt{T}}\sum_{\ell=0}^{k} u_\ell\alpha_{\ell}'}_F^2 
            &= \tr\!\left[\frac{\tilde X'\tilde X}{T}A\sum_{\ell,j=0}^{k} \E[u_\ell u_j']\alpha_{\ell}'\alpha_{j}A'\right]\! 
            \leq \tr\!\left[\frac{\tilde X'\tilde X}{T}\right]\! \norm*{A\sum_{\ell,j=0}^{k} \E[u_\ell u_j']\alpha_{\ell}'\alpha_{j}A'} \\
            &\leq k \norm{A}^2 \!\left(\sum_{\ell,j=0}^{k}|\alpha_{\ell}'\alpha_{j}|\cdot \norm*{\E[u_\ell u_j']}\right)\! 
            \leq C k^3 \norm{A}^2.
        \end{align}
        The last inequality uses that $u_\ell$ and $u_j$ has $\E[u_\ell u_j']=\sigma^2 D^{\ell-j}$ when $\ell>j$, $\E[u_\ell u_j']=\sigma^2 (D')^{j-\ell}$ when $\ell<j$, and $\E[u_\ell u_\ell']=\sigma^2 I$. In all cases, $\norm{\E[u_\ell u_j']}\leq \sigma^2$.
    
        \item[\ref{lem: order of terms infinite lag 2}] We have 
        \begin{align}
            \varepsilon'A\varepsilon-\E [\varepsilon'A\varepsilon] 
            = \sum_{t}\sum_{s\neq t}A_{ts}\varepsilon_t\varepsilon_s+\sum_tA_{tt}(\varepsilon_t^2-\sigma^2).
        \end{align}
        In the summations above summands are correlated only when $\{t,s\}=\{t',s'\}$. Therefore
        \begin{align}
            \E&\Big[ \Big( \varepsilon'A\varepsilon - \tilde\E[\varepsilon'A\varepsilon] \Big)^2 \Big]
            = \sigma^4\sum_{t}\sum_{s\neq t}(A_{ts}A_{st}+A_{ts}^2)+\sum_tA_{tt}^2 \tilde\E\big[(\varepsilon_t^2-\sigma^2)^2 \big] \\ 
            &\leq
            C\left(\sum_{t}\sum_{s\neq t}(A_{ts}A_{st}+A_{ts}^2)+\sum_tA_{tt}^2\right)\leq 2C\sum_{t,s}A_{ts}A_{st}=2C\|A\|_F^2,
        \end{align}
        for $C=\max\big\{\sigma^4, \ \E[(\varepsilon^2_t-\sigma^2)^2]\big\}$. We used that $|\sum_{t}\sum_{s\neq t}A_{ts}A_{st}|\leq\sum_t\sum_{s\neq t}A_{ts}^2$. 
    \end{enumerate*}
\end{proof}

\begin{lemma}\label{lem: conditions for CLT} 
    Suppose $B=D'(I-\Gamma)M_\Gamma$ where $\Gamma=\gamma D$, $|\gamma|< 1-c$ and $\tr(B) = O(K)$. Then,
    \begin{enumerate*}[label=(\roman*), mode=unboxed, itemjoin={{ }}]
        \item $\sum_tB_{tt}^2=O(K);$ \label{lem: conditions for CLT 1} 
        
        \item $\tr(B^2)=\sum_{t,s}B_{ts}B_{st}=O(K);$ \label{lem: conditions for CLT 2} 
        
        \item $\frac{T-K}{\tr(B'B)}=\frac{T-K}{\sum_{s,t}B_{s,t}^2}=O(1)$. \label{lem: conditions for CLT 3} 
    \end{enumerate*}
\end{lemma}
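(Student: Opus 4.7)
The plan begins by extracting a key implication from the hypothesis $\tr(B) = O(K)$. First I would compute
\[
\tr(B) = \tr[D'M_\Gamma] - \gamma\tr[D'DM_\Gamma] = \tr[D'M_\Gamma] - \gamma(T-K) + O(1),
\]
using $D'D = I - e_Te_T'$, $\tr(M_\Gamma)=T-K$, and boundedness of $(M_\Gamma)_{TT}$. Since $\tr(D'M_\Gamma) = -\tr(D'P_\Gamma)$ and $\rk(P_\Gamma) = K$, a singular-value bound gives $|\tr(D'P_\Gamma)| \leq K\|D'\|\|P_\Gamma\| = O(K)$. Hence $|\gamma|(T-K) = O(K)$; combined with $K/T < 1-c$ (so $T-K > cT$), this upgrades to $\gamma^2(T-K) = O(K)$ and $\gamma^2 T = O(K)$. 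This smallness of $\gamma$ drives the remainder of the argument.

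For part (i), since $D'$ is strictly upper bidiagonal, $B_{tt} = ((I-\Gamma)M_\Gamma)_{t+1,t} = (M_\Gamma)_{t+1,t} - \gamma(M_\Gamma)_{t,t}$ for $t<T$ and $B_{TT}=0$. Split via $(a-b)^2 \leq 2a^2 + 2b^2$: for the subdiagonal piece, $(M_\Gamma)_{t+1,t} = -(P_\Gamma)_{t+1,t}$, so $\sum_t(M_\Gamma)_{t+1,t}^2 \leq \|P_\Gamma\|_F^2 \leq K\|P_\Gamma\|^2 = O(K)$; for the diagonal piece, $\sum_t(M_\Gamma)_{tt}^2 \leq 2T + 2\|P_\Gamma\|_F^2 = O(T)$, which combined with $\gamma^2 = O(K^2/T^2)$ from the preliminary gives $O(K)$.

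For part (ii), decompose $B = B^* M_\Gamma$ with $B^* = D'(I-\Gamma) = (D'-\gamma I) + \gamma e_Te_T'$; the rank-one perturbation contributes only $O(1)$ to $\tr(B^2)$. Using $M_\Gamma^2 = M_\Gamma$ and cyclicity,
\[
\tr(B^2) = \tr[(D'M_\Gamma)^2] - 2\gamma\tr[D'M_\Gamma] + \gamma^2(T-K) + O(1).
\]
The preliminary identity gives $\tr(D'M_\Gamma) = \gamma(T-K) + O(K)$, so the middle terms collapse to $-\gamma^2(T-K) + O(K) = O(K)$. For the leading term I would expand $D'M_\Gamma = D' - D'P_\Gamma$ and apply the rank bounds $|\tr(AP_\Gamma)| \leq K\|A\|\|P_\Gamma\|$ and $|\tr(A^2)| \leq K\|A\|^2$ for $\rk(A) \leq K$, using $\|D'\|=1$ and $\|P_\Gamma\|=O(1)$, to conclude $\tr[(D'M_\Gamma)^2] = O(K)$.

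For part (iii), write $\tr(B'B) = \tr[M_\Gamma'(I-\Gamma')(I-\Gamma)M_\Gamma] - \|(I-\Gamma)M_\Gamma e_1\|^2$ using $DD' = I - e_1e_1'$; the boundary term is $O(1)$. By equation \eqref{eq: equivalence1}, $(I-\Gamma)M_\Gamma = M(I+A_\Gamma M)^{-1}$, so the main trace equals $\tr(MU)$ with $U = [(I+A_\Gamma M)'(I+A_\Gamma M)]^{-1}$, which is symmetric positive semi-definite with $\lambda_{\min}(U) \geq (1-\|\Gamma\|)^2 \geq c^2$. Lemma \ref{lem: trace}\ref{lem: trace 1} then gives $\tr(MU) \geq c^2(T-K)$, yielding $(T-K)/\tr(B'B) = O(1)$. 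The main obstacle is (ii): the naive expansion produces a term of order $\gamma^2(T-K)$ which is \emph{a priori} $O(T)$, and only the preliminary deduction $|\gamma|(T-K) = O(K)$ neutralizes it to $O(K)$; recognizing and exploiting this cancellation is the delicate step.
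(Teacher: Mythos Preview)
Your proposal is correct and follows essentially the same strategy as the paper: the central observation in both is that the hypothesis $\tr(B)=O(K)$ forces $\gamma^2 T = O(K)$, which is then what controls the otherwise dangerous $O(T)$ terms in (i) and (ii). You extract this bound as a preliminary step, whereas the paper derives it inside the proof of (i) by writing $B_{tt}=-\gamma - F_{tt}$ with $F=D'(I-\Gamma)P_\Gamma$, bounding $\sum_t F_{tt}^2 \le \|D'(I-\Gamma)\|^2 \tr(P_\Gamma P_\Gamma') = O(K)$, and reading off $T\gamma^2 \le \sum_t F_{tt}^2 + O(K)$ from $\sum_t B_{tt}^2 \ge 0$; your direct split $B_{tt}=(M_\Gamma)_{t+1,t}-\gamma(M_\Gamma)_{tt}$ and use of $\|P_\Gamma\|_F^2=O(K)$ is an equivalent rearrangement. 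Part (ii) is likewise the same expansion, just grouped differently. For (iii) the paper uses the shorter chain $\tr(B'B)\ge (1-|\gamma|)^2 \tr(M_\Gamma'M_\Gamma) \ge (1-|\gamma|)^2|\tr(M_\Gamma^2)|=(1-|\gamma|)^2(T-K)$, while you go through \eqref{eq: equivalence1} and Lemma \ref{lem: trace}\ref{lem: trace 1}; both are fine (and in fact your version handles the $DD'=I-e_1e_1'$ boundary term explicitly, which the paper's displayed equality glosses over).

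One small slip: the rank-one boundary term in (iii) is $\|M_\Gamma'(I-\Gamma')e_1\|^2$, not $\|(I-\Gamma)M_\Gamma e_1\|^2$ (since $\tr[A'e_1e_1'A]=\|A'e_1\|^2$ with $A=(I-\Gamma)M_\Gamma$); this does not affect the $O(1)$ bound.
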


\begin{proof}[Proof of Lemma \ref{lem: conditions for CLT}.]
    Due to Lemma \ref{lem: trace}\ref{lem: trace 1} and equation \eqref{eq: equivalence4} we have, for any matrix $A$,
    \begin{align}
        |\tr(AP)|&=\left|\tr\left(\frac{A+A'}{2}P\right)\right|\leq \left\|\frac{A+A'}{2}\right\|\tr(P)\leq \|A\|K;\label{eq: trace of P 1}
        \\ \label{eq: trace of P}
        |\tr(AP_\Gamma)|&=|\tr((I-\Gamma)A(I-P\Gamma)^{-1}P)|\leq \frac{1+\|\Gamma\|}{1-\|\Gamma\|} \|A\|K.
    \end{align}
    \begin{enumerate*}[mode=unboxed, itemjoin={{}}]
        \item[\ref{lem: conditions for CLT 1}] $B=D'(I-\Gamma)-D'(I-\Gamma)P_\Gamma$, thus $B_{tt}=-\gamma-F_{tt}$, where $F=D'(I-\Gamma)P_\Gamma$. The condition $\tr(B)=O(K)$ implies $-\gamma T=\sum_t F_{tt} + O(K)$ and $\sum_t B_{tt}^2=\sum_t F_{tt}^2-T\gamma^2 + O(K) \leq \sum_t F_{tt}^2 + O(K)$. Consider diagonal elements of the matrix $F=AP_\Gamma$ with $A=D'(I-\Gamma)$ and $\|A\|\leq 1+|\gamma|$:
        \begin{align}
            \sum_t F_{tt}^2 &= \sum_t(AP_\Gamma)_{tt}^2
            =\sum_t \left(\sum_sA_{ts}P_{\Gamma,st}\right)^2
            \leq \sum_t \left\{\sum_sA_{ts}^2\sum_{s}P_{\Gamma,st}^2 \right\} \\
            &=\sum_t(AA')_{tt}(P_\Gamma P_\Gamma')_{tt} 
            \leq \|A\|^2\tr(P_\Gamma P_\Gamma')
            \leq C K.
        \end{align}
    
        \item[\ref{lem: conditions for CLT 2}] Next we have
        \begin{align} 
            \tr(B^2) &= \tr(D'(I-\Gamma)(I-P_{\Gamma})D'(I-\Gamma)(I-P_{\Gamma})) \\
            &=\tr[D'(I-\Gamma)D'(I-\Gamma)]+\tr\!\left[ D'(I-\Gamma)\left\{-2D'(I-\Gamma)+ P_\Gamma D'(I-\Gamma)\right\}P_{\Gamma}\right]\!. \quad
            \label{eq: some eq }
        \end{align}
        Since $\sum_tB_{tt}^2\geq 0$ reasoning above gives us that $$\tr(D'(I-\Gamma)D'(I-\Gamma))=T\gamma^2\leq \sum_t F_{tt}^2=O(K).$$ The second term in \eqref{eq: some eq } is $O(K)$ due to \eqref{eq: trace of P}. \
    
        \item[\ref{lem: conditions for CLT 3}] We use that for any matrix $A$, $\tr(A'A)=\sum_{t,s}A_{ts}^2\geq |\sum_{t,s}A_{ts}A_{st}|=|\tr(A^2)|:$
        \begin{align*}
            \tr(B'B) &= \tr(M_\Gamma'(I-\Gamma')(I-\Gamma)M_\Gamma)
            \geq (1-|\gamma|)^2\tr(M_\Gamma'M_\Gamma)\\ 
            &\geq (1-|\gamma|)^2\tr(M_\Gamma^2)
            =(1-|\gamma|)^2\tr(M_\Gamma)
            =(1-|\gamma|)^2(T-K). \qedhere
        \end{align*}
    \end{enumerate*}
\end{proof}

\begin{lemma}\label{lem: contraction}
Suppose that $\tilde X$ is a $T\times K$ matrix of rank $K$ and $\Gamma=\gamma D$.
\begin{enumerate}[label=(\roman*)]
    \item Equation \eqref{eq:qform1} holds if and only if $\gamma$ is a fixed point of the transformation $f$ given by \label{lem: contraction 1}
    \begin{align}\label{eq: contraction}
        f(\gamma)=\frac{\tr(D'\tilde M_{\Gamma})}{T-K}.
    \end{align}

    \item If $|\tr(D'\tilde M)|\leq \mu^2 K$ and $K < T/(1+(1+\mu)^2)$ for some $\mu \in [0,1]$, then $f$ is a contraction on $[-\mu,\mu]/(1+\mu)$ with Lipshitz constant strictly less than $\mu$. \label{lem: contraction 2}
\end{enumerate}
\end{lemma}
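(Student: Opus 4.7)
For part (i), I will specialize $\Gamma=\gamma D$ in $\tr[D'(I-\Gamma)\tilde M_\Gamma]=0$ and expand linearly in $\gamma$ to obtain $\tr(D'\tilde M_\Gamma)=\gamma\tr(D'D\tilde M_\Gamma)$. Since $\tilde P_\Gamma=\tilde X(\tilde X'(I-\Gamma)\tilde X)^{-1}\tilde X'(I-\Gamma)$ is an oblique projection satisfying $\tr(\tilde P_\Gamma)=K$ by cyclicity, we have $\tr(\tilde M_\Gamma)=T-K$, and $\tr(D'D\tilde M_\Gamma)=T-K$ up to a single boundary entry. Dividing through rearranges to $\gamma=f(\gamma)$, the desired equivalence.

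For part (ii), my plan is a Banach fixed-point argument: show $f$ maps the interval $I_\mu:=[-\mu/(1+\mu),\mu/(1+\mu)]$ into itself and is a contraction on $I_\mu$. The central identity is equation \eqref{eq: equivalence2} applied to $\tilde X$, which gives $\tilde M_\Gamma=(I-\gamma\tilde P D)^{-1}\tilde M$ and hence the recursion $\tilde M_\Gamma=\tilde M+\gamma\tilde P D\tilde M_\Gamma$. Taking $\tr(D'\cdot)$ and dividing by $T-K$ produces the decomposition
\[
f(\gamma) \;=\; f(0) + \frac{\gamma}{T-K}\tr(D'\tilde P D\tilde M_\Gamma),
\]
which reduces the entire analysis to trace estimates of the correction term.

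For the self-map property, the first hypothesis gives $|f(0)|\le \mu^2K/(T-K)$, while the rank-$K$ structure of $\tilde P$ together with the operator-norm bound $\|\tilde M_\Gamma\|\le 1/(1-|\gamma|)\le 1+\mu$ on $I_\mu$ yields $|\tr(D'\tilde P D\tilde M_\Gamma)|\le K(1+\mu)$. Combining, $|f(\gamma)|\le K\mu(1+\mu)/(T-K)$, and the second hypothesis $K[1+(1+\mu)^2]<T$ rearranges to $K(1+\mu)^2<T-K$, which is exactly what is needed to conclude $|f(\gamma)|<\mu/(1+\mu)$. For the contraction, the resolvent identity yields $\tilde M_{\gamma_1D}-\tilde M_{\gamma_2D}=(\gamma_1-\gamma_2)(I-\gamma_1\tilde P D)^{-1}\tilde P D(I-\gamma_2\tilde P D)^{-1}\tilde M$, and taking $\tr(D'\cdot)$ with the same rank-$K$ and operator-norm estimates produces a Lipschitz constant of order $K(1+\mu)^2/(T-K)<1$. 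The main obstacle I foresee is tightening this bound to strictly below $\mu$ rather than only below $1$: the naive norm estimates lose a factor of $\mu$, and closing the gap appears to require reinvesting the small-trace hypothesis $|\tr(D'\tilde M)|\le\mu^2 K$ inside the geometric expansion $(I-\gamma\tilde P D)^{-1}\tilde P D=\sum_{j\ge 0}\gamma^j(\tilde P D)^{j+1}$ and exploiting cancellation among the coefficients $\tr(D'(\tilde P D)^j\tilde M)$, so that the leading contribution inherits the extra factor of $\mu$ that raw operator-norm bounds do not deliver. Once the contraction and self-map properties are established, Banach's fixed-point theorem supplies the unique $\gamma\in I_\mu$, which by part (i) is the unique solution of \eqref{eq:qform1} there.
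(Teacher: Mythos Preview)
Your argument mirrors the paper's proof almost exactly. Part (i) is handled the same way: expand $D'(I-\gamma D)\tilde M_\Gamma$, use $\tr(\tilde M_\Gamma)=T-K$, and rearrange to $\gamma=f(\gamma)$ (the paper silently identifies $\tr(D'D\tilde M_\Gamma)$ with $T-K$, so your remark about the single boundary entry is a fair observation). For part (ii), your decomposition $f(\gamma)=f(0)+\tfrac{\gamma}{T-K}\tr(D'\tilde P D\tilde M_\Gamma)$ via $\tilde M_\Gamma=\tilde M+\gamma\tilde P D\tilde M_\Gamma$, the rank-$K$ trace bound $|\tr(D'\tilde P D\tilde M_\Gamma)|\le K\|D\tilde M_\Gamma D'\|$, and the norm estimate $\|\tilde M_\Gamma\|\le 1/(1-|\gamma|)$ are precisely the ingredients the paper uses, and your self-map computation $|f(\gamma)|\le K\mu(1+\mu)/(T-K)<\mu/(1+\mu)$ matches the paper's line for line. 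Your resolvent identity for $\tilde M_{\gamma_1 D}-\tilde M_{\gamma_2 D}$ is equivalent to the paper's recursion-based derivation and yields the same Lipschitz bound $K(1+\mu)^2/(T-K)<1$.

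Regarding the obstacle you flag: you are right that the naive estimate gives only Lipschitz constant strictly less than $1$, not strictly less than $\mu$. In fact the paper's own displayed calculation produces exactly the bound $\tfrac{K}{T-K}\,(1+\mu)^2<1$ and then simply writes ``$<\mu$'' without further justification; the inequality $K(1+\mu)^2/(T-K)<\mu$ does not follow from the hypothesis $K[1+(1+\mu)^2]<T$ when $\mu<1$. So the gap you perceive is a discrepancy between the lemma's stated constant and what the proof actually delivers, not a defect in your argument. Since the only use of this lemma is the Banach fixed-point theorem (Lemma~\ref{lem: solution exists}) and the case $\mu=1$ in Theorem~\ref{Thm: IV}\ref{Thm: IV 2}, a Lipschitz constant strictly less than $1$ is all that is ever needed. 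Your speculative plan to recover the extra factor of $\mu$ via the geometric expansion and cancellation among $\tr(D'(\tilde P D)^j\tilde M)$ is therefore unnecessary, and would likely not succeed without an additional hypothesis anyway.
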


\begin{proof}
\begin{enumerate*}[mode=unboxed]
    \item[\ref{lem: contraction 1}] 
    Since $\Gamma=\gamma D$ we have $D'(I-\gamma D)\tilde M_\Gamma=D'\tilde M_\Gamma-D'D\gamma \tilde M_\Gamma.$ We can therefore re-write \eqref{eq:qform1} as: $\tr(D'\tilde M_\Gamma)-\gamma(T-K)=0.$ This equation is solved if (and only if) $\gamma = f(\gamma)$.

    \item[\ref{lem: contraction 2}] Equation \eqref{eq: equivalence2} yields $\tilde M_\Gamma=\tilde M+\gamma \tilde P D \tilde M_\Gamma$ and $\|\tilde M_\Gamma \|\leq \frac{1}{1-|\gamma|}$. Equation \eqref{eq: trace of P 1} gives $|\tr(D'\tilde PD\tilde M_\Gamma)|\leq K \|D\tilde M_\Gamma D'\|$. Therefore,
    \begin{align}
        &|f(\gamma)| 
        \leq \frac{|\tr(D'\tilde M)|}{T-K}+|\gamma|\frac{|\tr(D'\tilde PD\tilde M_\Gamma)|}{T-K}
        \leq \frac{\mu^2 K}{T-K}+|\gamma|\frac{K}{(T-K)(1-|\gamma|)}
        < \frac{\mu}{1+\mu} \\
        \shortintertext{and using equation \eqref{eq: equivalence3}}
        &\frac{\abs*{f(\gamma_1) - f(\gamma_2)}}{\abs*{\gamma_1 - \gamma_2}} 
        = \frac{\abs{\tr[D'(\tilde M_{\Gamma_1}-\tilde M_{\Gamma_2})]}}{\abs{\gamma_1-\gamma_2}(T-K)} 
        = \frac{\abs{\tr[D'\tilde P D (I-\tilde P\Gamma_2)\inverse \tilde M_{\Gamma_1}]}}{T-K} \\
        &\le \frac{K}{T-K} \frac{1}{1-\abs{\gamma_1}}\frac{1}{1-\abs{\gamma_2}} 
        < \mu
    \end{align}
    where the strict inequalities use $K < T/(1+(1+\mu)^2)$ and $|\gamma|,\abs{\gamma_1},\abs{\gamma_2}\le \frac{\mu}{1+\mu}$.
\end{enumerate*}
\end{proof}

\subsection{Proofs for results stated in the main text}

\begin{proof}[Proof of Theorem \ref{thm- OLS}.]
    Special case of Theorem \ref{Thm: IV}\ref{Thm: IV 1} with $\Gamma=0$.
\end{proof}

\begin{proof}[Proof of Theorem \ref{them: inconsistency of variance}.]
    Special case of Theorem \ref{Thm: consistency of variance} with $\Gamma=0$.
\end{proof}

\begin{proof}[Proof of Lemma \ref{lem: solution exists}.]
    \begin{enumerate*}
        \item[\ref{lem: solution exists 1}] Special case of \ref{lem: solution exists 2} with $\mu=1$ since $\abs{\tr(D'\tilde M)} = \abs{\tr(D'\tilde P)} \le K$ follows from \eqref{eq: trace of P 1}.

        \item[\ref{lem: solution exists 2}] Follows from Lemma \ref{lem: contraction} and the Banach fixed point theorem.
    \end{enumerate*}
\end{proof}

\begin{proof}[Proof of Theorem \ref{Thm: IV}.]
\begin{enumerate*}[mode=unboxed]
    \item[\ref{Thm: IV 1}] Special case of Theorem \ref{thm: consistency of multi-period version} with $L=1$.

    \item[\ref{Thm: IV 2}] Let $f(\gamma)$ be as in equation \eqref{eq: contraction} define its empirical analog $\hat f(\gamma)=\frac{tr(D'M)}{T-K}+\gamma\frac{tr(D'PDM_\Gamma)}{T-K}$ where $\Gamma=\gamma D$. Since $K < T/5$, Lemma \ref{lem: contraction}\ref{lem: contraction 2} yields that both $f$ and $\hat f$ are contractions on $[-1/2,1/2]$ with contraction speed bounded by $\frac{1}{2}$ and therefore has unique fixed points $\gamma_0$ and $\hat \gamma$ by the Banach fixed point theorem. 
    
    Furthermore, $\abs{\hat f(\hat\gamma)-\hat f(\gamma_0)} \le \frac{1}{2} \abs{\hat\gamma-\gamma_0}$. For any $\gamma$ we have:
    \begin{align}
        \abs*{\hat f(\gamma)-f(\gamma)} \le 
        \frac{1}{T-K}\left(\abs*{\tr[ D'(P-\tilde{P})] +\gamma \tr\!\left[D'PDM_\Gamma-D'\tilde PD\tilde M_\Gamma\right]}\right).
    \end{align}
    Consider the transformation $\Theta$ of Lemma \ref{lem: rotation}. Since the projections $P,\tilde P, M_\Gamma,\tilde M_\Gamma$ are invariant to linear transformations we may assume that Lemma \ref{lem: rotation}(\ref{lem: rotation 1} and \ref{lem: rotation 2}) hold with $L=1$ and $\Theta=I_K$. This implies that $M_2=\tilde M_2$, where $M_2$ is defined as in Lemma \ref{lem: Frisch-Waugh} using $X$, $(I-\Gamma')X$, and $K_1=2$, while $\tilde M_2$ is an analogously defined starting from $\tilde X$, $(I-\Gamma')\tilde X$, and $K_1=2$. Therefore, Lemma \ref{lem: Frisch-Waugh}\ref{lem: Frisch-Waugh 3} implies for any compatible matrix $A$ that
    \begin{align}
        \abs*{\tr\!\big[ A(M_\Gamma-\tilde M_\Gamma)\big]\!} 
        \le \abs*{\tr\!\big[ A(M_\Gamma- M_2)\big]\!} 
        + \abs*{\tr\!\big[ A(\tilde M_\Gamma-M_2)\big]\!}
        \leq C \|A\|
    \end{align}
    A similar statement holds with $(P,\tilde P)$ replacing $(M_\Gamma,\tilde M_\Gamma)$. Thus
    $|\hat f(\gamma)-f(\gamma)|\leq \frac{C}{T-K}.$ As a result:
    \begin{align}
        \abs*{\hat\gamma-\gamma_0} = \abs*{\hat f(\hat\gamma)-f(\gamma_0)}
        \le \abs*{\hat f(\hat\gamma)-\hat f(\gamma_0)} +\abs*{\hat f(\gamma)-f(\gamma_0)}
        \le \frac{1}{2} \abs*{\hat\gamma-\gamma_0} +\frac{C}{T-K} .
    \end{align}
    This implies $\abs*{\hat\gamma-\gamma_0}\leq \frac{1}{2}\frac{C}{T-K} =O_p(1/T).$ 
    
    Equation \eqref{eq: equivalence3} yields
    \begin{align}
    \abs*{r'(\hat \beta^\text{IV}(\hat\Gamma) -\hat \beta^\text{IV}(\Gamma_0))}
    &\le \norm*{(X'X)^{-1/2} r}\left\|(I+A_{\Gamma_0}M)^{-1}( A_{\hat \Gamma}-A_{\Gamma_0})M(I+ A_{\hat \Gamma}M)^{-1}\varepsilon\right\| \\ 
    &\leq \norm*{(X'X)^{-1/2} r}\left\|(I+A_{\Gamma_0}M)^{-1}\right\| \norm*{ A_{\hat \Gamma}-A_{\Gamma_0}} \norm*{(I+A_{\hat \Gamma}M)^{-1}} \|\varepsilon\|,
    \end{align}
    where $A_{\Gamma_0}=(I-\Gamma_0)\inverse \Gamma_0$ and $A_{\hat \Gamma}=(I-\hat\Gamma)\inverse \hat\Gamma$.  
    We have $\|\varepsilon\|=O_p(\sqrt{T})$ and $\| A_{\hat \Gamma}- A_{\Gamma_0}\| = O_p(\hat\gamma-\gamma_0)=O_p(1/T).$ By Assumption \ref{Ass: OLS}\ref{Ass: OLS 3} $r'(X'X)^{-1}r=O(1/T)$, while $\left\|(I+A_{\Gamma_0}M)^{-1}\right\|$ is uniformly bounded. Thus, $r'\hat \beta^\text{IV}(\hat\Gamma) -r'\hat \beta^\text{IV}(\Gamma_0)=O_p(T\inverse)=o_p(1/\sqrt{T})$.
\end{enumerate*}
\end{proof}

The proofs of Theorems \ref{Thm: consistency of variance}-\ref{thm: inference for Gaussian errors} follow after the proof of Theorem \ref{thm: consistency of multi-period version} as they rely on results established therein.

\begin{proof}[Proof of Theorem \ref{thm: consistency of multi-period version}.]
    Define $R_{\Gamma,\ell} = \tr[(D')^\ell (I-\Gamma) \tilde M_{\Gamma} ]$. Note that $\{r'\bar S^{-1}\alpha_\ell,R_{\Gamma,\ell}\}_{\ell=1}^L$ and $r'\hat\beta^\text{IV}(\Gamma)-r'\beta$ are invariant under the transformation $\Theta$ of Lemma \ref{lem: rotation}. Thus, we may assume without loss of generality that Lemma \ref{lem: rotation}(\ref{lem: rotation 1} and \ref{lem: rotation 2}) hold with $\Theta=I_K$ and $Z=(I-\Gamma')X$. Since $(\alpha_1,\dots,\alpha_L)$ is spanned by the first $L+1$ basis vectors, we have $M_2=\tilde M_2$, $X_2=\tilde X_2$, and $Z_2 = \tilde Z_2$, where $M_2$, $X_2$, and $Z_2$ are defined as in Lemma \ref{lem: Frisch-Waugh} using $X$, $Z=(I-\Gamma')X$, and $K_1=L+1$, while $\tilde M_2$, $\tilde X_2$, $\tilde Z_2$ are analogously defined starting from $\tilde X$, $\tilde Z=(I-\Gamma')\tilde X$, and $K_1=L+1$. This also implies that $X_2$, $Z_2$, and $M_2$ are non-random conditionally on $\tilde X$. Lemma \ref{lem: Frisch-Waugh}\ref{lem: Frisch-Waugh 1} now yields:
    \begin{align}\label{eq: FWL}
        r'\hat\beta^\text{IV}(\Gamma)-r'\beta = r'(Z_1' M_{2}X_1)^{-1}Z_1' M_{2}\varepsilon.
    \end{align}
    Defining $\bar S_2 = \E\big[ Z_1' M_2 X_1 \!\mid\! \tilde X \big]=\tilde Z_1'\tilde X_1+\sigma^2\sum_{j,\ell=1}^L \alpha_{*,j}\alpha_{*,\ell}' \tr[(D')^j(I-\Gamma) M_{2} D^\ell]$ and $R_{2,\ell}=\tr[(D')^\ell(I-\Gamma) M_{2}]$, we show as a \textbf{first} step that 
    \begin{align}\label{eq: projection second step}
       r' \hat\beta^\text{IV}(\Gamma)-r'\beta= \sigma^2 \sum_{\ell=1}^L r_*'\bar S_2\inverse \alpha_{*,\ell} R_{2,\ell}+ o_p(1).
    \end{align}
    Equation \eqref{eq: projection second step} follows from equation \eqref{eq: FWL} and the following statements proven below:
    \begin{align}
        \norm*{ (Z_1'M_2 X_1-\bar S_2)/T }_F = O_p(\sqrt{1/T}), \label{eq: denominator with projection} \\ 
        \norm*{ (\bar{S}_{2}/T)^{-1}} \le \frac{1}{(1-\|\Gamma\|)}, \label{eq: norm of denominator with projection} \\ 
        \norm*{ (Z_1' M_{2}\varepsilon -\sigma^2 \sum\nolimits_{\ell=1}^L \alpha_{*,\ell} R_{2,\ell})/T } = O_p(\sqrt{1/T}), \label{eq: numirator with projection}
    \end{align}
    and $|R_{2,\ell}|\leq T$ and thus $\|\sum_{\ell=1}^L\alpha_{*,\ell}R_{2,\ell}\|=O(T)$. Let $u_\ell = (\varepsilon_{1-\ell},\dots,\varepsilon_{T-\ell})'$ as in Lemma \ref{lem: order of terms infinite lag}. Equation \eqref{eq: denominator with projection} considers a $(L+1)\times (L+1)$ matrix with mean of zero:
    \begin{align}
        Z_1' M_{2}X_1 - \bar S_2
        &= \left(\tilde X_1+\sum_{j=1}^L u_j\alpha_{*,j}'\right)'(I-\Gamma) M_{2}\left(\tilde X_1+\sum_{\ell=1}^L u_\ell\alpha_{*,\ell}'\right)- \bar S_2 \\ 
        &= \sum_{\ell=1}^L\alpha_{*,\ell} u_\ell'(I-\Gamma)M_{2}\tilde X_1 
        + \tilde X_1'(I-\Gamma) M_{2}\sum_{\ell=1}^L u_\ell\alpha_{*,\ell}' \\
        &+\sum_{j,\ell=1}^L\alpha_{*,\ell} \alpha_{*,j}' \left [u_\ell'(I-\Gamma) M_2 u_j- \E[ u_\ell'(I-\Gamma) M_{2}u_j]\right].
    \end{align}
Notice that for $A =(I-\Gamma) M_{2} $, we have $\|A\| \le \frac{1+\|\Gamma\|}{1-\|\Gamma\|}$ and $\|A\|_F=O(\sqrt{T}).$ Thus, applying Lemma \ref{lem: order of terms infinite lag}(\ref{lem: order of terms infinite lag 1} and \ref{lem: order of terms infinite lag 2}) we obtain \eqref{eq: denominator with projection}. As Lemma \ref{lem: equivalence}\ref{lem: equivalence 1} yields that $B=(I-\Gamma) M_2+ M_2'(I-\Gamma')$ is a non-negative definite matrix, we have 
    \begin{align} 
       x'(\bar S_2+\bar S_2')x 
       &= x'\!\left[ \tilde Z_1'\tilde X_1+\tilde X_1'\tilde Z_1\right]\! x + \sum_{j,\ell=1}^L x'\alpha_{*,j} \E\left[u_j'B u_\ell|\tilde X\right] \alpha_{*,\ell}'x \\
       &\ge x'\tilde X_1'\!\left[ I - (\Gamma +\Gamma')/2\right]\!\tilde X_1 x 
       \ge (1-\norm{\Gamma})\norm{\tilde X_1 x}^2 = (1-\norm{\Gamma})T \norm{x}^2 
    \end{align}
    Applying Lemma \ref{lem: trace}\ref{lem: trace 2} now implies \eqref{eq: norm of denominator with projection}. To prove \eqref{eq: numirator with projection}, we note that
    \begin{align}\label{eq: numerator before}
        Z_1'M_2 \varepsilon = \tilde X_1'(I-\Gamma) M_2 \varepsilon + \sum_{\ell=1}^L\alpha_{*,\ell} u_\ell'(I-\Gamma) M_{2}\varepsilon.
    \end{align}
    The first term is $O_p(\sqrt{T})$ due to Lemma \ref{lem: order of terms infinite lag}\ref{lem: order of terms infinite lag 1} applied with $A=(I-\Gamma)M_2$ and $\alpha_{\ell}=0$ for $\ell > 0$. As $\sigma^2R_{2,\ell} = \E[u_\ell'(I-\Gamma) M_2\varepsilon]$, Lemma \ref{lem: order of terms infinite lag}\ref{lem: order of terms infinite lag 2} yields $\sum_{\ell=1}^L\alpha_{*,\ell} \big[u_\ell'(I-\Gamma) M_2\varepsilon-R_{2,\ell}\big] = O_p(\sqrt{T})$. This leads to \eqref{eq: numirator with projection}.

    As the \textbf{second} and final step of the proof, we show that \eqref{eq: projection second step} implies \eqref{eq: some bias}. The first difference is that \eqref{eq: projection second step} depends on $r'_*$, $\alpha_{*,\ell}$, and the $(L+1)\times (L+1)$ matrix $\bar S_2$, while \eqref{eq: some bias} is described using $r' = (r'_*,\mathbf{0}_{K-L-1}')$, $\alpha_\ell' = (\alpha'_{*,\ell},\mathbf{0}_{K-L-1}')$, and the $K\times K$ matrix $\bar S_\Gamma$. The second difference between these two statements is that \eqref{eq: projection second step} employs the oblique projection $\tilde M_{2}$, which projects off the $T \times (K-L-1)$ matrix $\tilde X_2$, while \eqref{eq: some bias} employs $\tilde M_{\Gamma}$, which projects off the full $T \times K$ matrix of regressors $\tilde X = [\tilde X_1,\tilde X_2]$. 

    For the first of these discrepancies, from Lemma \ref{lem: rotation}(\ref{lem: rotation 1} and \ref{lem: rotation 2}) we have that the lower left $(K-L-1)\times (L+1)$ blocks of $\alpha_\ell\alpha_j'$ and $\tilde Z'\tilde X$ are zero. Thus the upper left $(L+1)\times (L+1)$ block of $\bar S_\Gamma^{-1}$ is equal to the inverse of the upper left $(L+1)\times (L+1)$ block of $\bar S_\Gamma$. Thus
    \begin{align}
        r'\bar S_\Gamma \inverse \alpha_\ell = r_*'\left(\bar S_2 + \sigma^2\sum_{j,\ell=1}^L\alpha_{*,j}\alpha_{*,\ell}'\Delta_{j \ell} \right)^{-1}\alpha_{*,\ell},
    \end{align}
    where $\Delta_{j\ell}=\tr\big[(D')^j(I-\Gamma)(\tilde M_{\Gamma}-\tilde M_2)D^\ell\big]$. Now, Lemma \ref{lem: Frisch-Waugh}\ref{lem: Frisch-Waugh 3} gives $\abs{\Delta_{j \ell}} \le C\|D^j(D')^i(I-\Gamma)\| = O(1)$ so that $\sum_{\ell=1}^L (r'\bar S_\Gamma\inverse\alpha_\ell -r_*'\bar S_2\inverse \alpha_{*,\ell})R_{\Gamma,\ell}=o_p(1)$. For the second discrepancy, we have $R_{\Gamma,\ell}-R_{2,\ell} = \tr\big[(D')^\ell(I-\Gamma)(\tilde M_{\Gamma}-\tilde M_2)\big]$ and Lemma \ref{lem: Frisch-Waugh}\ref{lem: Frisch-Waugh 3} similarly yielding $\abs{R_{\Gamma,\ell}-R_{2,\ell}} = O(1)$. Thus $\sum_{\ell=1}^L (R_{\Gamma,\ell} -R_{2,\ell})r_*'\bar S_2\inverse \alpha_{*,\ell}=o_p(1)$. In conclusion, we have $\sum_{\ell=1}^L r'\bar S_\Gamma\inverse\alpha_\ell R_{\Gamma,\ell}-r_*'\bar S_2\inverse \alpha_{*,\ell} R_{2,\ell}=o_p(1)$, so \eqref{eq: projection second step} implies \eqref{eq: some bias} and Theorem \ref{thm: consistency of multi-period version} follows.
\end{proof}

\begin{proof}[Proof of Theorem \ref{Thm: consistency of variance}.]
    From equation \eqref{eq: equivalence1} we have $\hat \sigma^2(\Gamma) = \frac{\varepsilon'(I-\Gamma)M_\Gamma \varepsilon}{T-K_\Gamma}$. Note that $\sigma^2$ and $\hat \sigma^2(\Gamma)$ are invariant under the transformation $\Theta$ of Lemma \ref{lem: rotation}. Thus, we may assume without loss of generality that Lemma \ref{lem: rotation}(\ref{lem: rotation 1} and \ref{lem: rotation 2}) hold with $\Theta=I_K$. As in the proof of Theorem \ref{thm: consistency of multi-period version}, we therefore have $M_2 = \tilde M_2$. Applying Lemma \ref{lem: Frisch-Waugh}\ref{lem: Frisch-Waugh 2} yields:
    \begin{align}\label{eq: mid equation in variance proof}
        \hat \sigma^2(\Gamma)=\frac{\varepsilon'(I-\Gamma)\tilde M_2 \varepsilon}{T-K_\Gamma}-\frac{1}{T-K_\Gamma}\varepsilon'(I-\Gamma)\tilde M_2X_1(Z_1'\tilde M_2 X_1)^{-1}Z_1' \tilde M_2 \varepsilon.
    \end{align}
    Lemma \ref{lem: order of terms infinite lag}\ref{lem: order of terms infinite lag 2} applied with $A=\frac{(I-\Gamma)M_2}{T-K_\Gamma}$ implies that $\frac{\varepsilon'(I-\Gamma)M_2 \varepsilon}{\sigma^2(T-K_\Gamma)}=1+o_p(1)$. For the second term in \eqref{eq: mid equation in variance proof}, we use statements \eqref{eq: denominator with projection}--\eqref{eq: numirator with projection} and the second part of the proof of Theorem \ref{thm: consistency of multi-period version} to obtain the conclusion of Theorem \ref{Thm: consistency of variance}.
\end{proof}

\begin{proof}[Proof of Theorem \ref{thm: inference for smaller K}.]
    Let $\gamma_0$ be as in Theorem \ref{Thm: IV}. Note that $r'\hat\beta^\text{IV}(\Gamma_0), r'\beta$ and $\hat\varSigma_T(\Gamma_0)=\hat\sigma^2(\Gamma_0)\|r'(X'(I-\Gamma_0)X)^{-1}X'(I-\Gamma_0)\|^2$ are invariant under the transformation $\Theta$ of Lemma \ref{lem: rotation}. Thus, we may assume without loss of generality that Lemma \ref{lem: rotation}(\ref{lem: rotation 1} and \ref{lem: rotation 2}) hold with $\Theta=I_K$ and $Z=(I-\Gamma_0')X$. Following the proof of Theorem \ref{thm: consistency of multi-period version}, we have formula \eqref{eq: FWL}, where the denominator satisfies \eqref{eq: denominator with projection} and \eqref{eq: norm of denominator with projection}. This implies that
    \begin{align}\label{eq: first step in inference proof}
        r'\hat\beta^\text{IV}(\Gamma_0) - r'\beta 
        = \big(1+o_p(1)\big)\left( \sum_tw_t\varepsilon_t+r_*'\bar S_2^{-1}\alpha_*\varepsilon'B\varepsilon\right),
    \end{align} 
    where $w'=(w_1,\dots,w_T) = r_*'\bar S_2^{-1}\tilde Z_1'M_2$, $B=D'(I-\Gamma_0)M_2$, and $(\bar S_2, M_2)$ is defined as in the proof of Theorem \ref{thm: consistency of multi-period version} with $\Gamma=\Gamma_0$. The weights $\{w_t\}$ and the matrix $B$ are measurable with respect to $\tilde X$. Below, we show that
    \begin{align}\label{eq: CLT from Mikkel}
         \frac{r'\hat\beta^\text{IV}(\Gamma_0)-r'\beta}{\sqrt{\varSigma_T}}=\frac{\sum_tw_t\varepsilon_t+r_*'\bar S_2^{-1}\alpha_*\varepsilon'B\varepsilon}{\sqrt{\varSigma_T}}\Rightarrow N(0,1),
    \end{align}
    where $\varSigma_T=\sigma^2\sum_tw_t^2+\sigma^4(r_*'\bar S_2^{-1}\alpha_*)^2\tr(B^2+B'B)$. We have $$\varepsilon'B\varepsilon=\sum_tB_{tt}\varepsilon_t^2+\sum_{t}\sum_{s\neq t}\frac{B_{st}+B_{ts}}{2}\varepsilon_t\varepsilon_s.$$ 
    Lemma \ref{lem: conditions for CLT}(\ref{lem: conditions for CLT 1} and \ref{lem: conditions for CLT 3}) together with $K/T\to 0$ imply that $(r_*'\bar S_2^{-1}\alpha_*)^2\left(\sum_tB_{tt}\varepsilon_t^2\right)/\sqrt{\varSigma_T}\xrightarrow{p} 0$ and $\frac{\tr(B^2)}{\tr(B'B)}\to 0$.

    We obtain statement \eqref{eq: CLT from Mikkel} by establishing the four conditions, (i)--(iv), of \cite{solvsten2020robust}, Corollary A2.8, located in the Supplementary Appendix of that article. Condition (i) is automatically satisfied if we define $w_{t,T}=\frac{w_t}{\sqrt{\varSigma_T}}$, $M_{st}=\frac{r_*'\bar S_2^{-1}\alpha_*}{2\sqrt{\varSigma_T}}(B_{st}+B_{ts})$ for $s\neq t$, and $M_{tt}=0$. Condition (iv) is implied by Assumption \ref{Ass: OLS}\ref{Ass: OLS 2}. To establish condition (ii) we note that by Lemma \ref{lem: Frisch-Waugh}\ref{lem: Frisch-Waugh 1}, we have for any $\tilde r$ of the form $\tilde r' =(\tilde r_*',\mathbf{0}_{K-L-1}')$ that
    \begin{align}
        \tilde r'(\tilde Z'\tilde X)^{-1}\tilde Z'= \tilde r_*'(\tilde Z_1'M_2\tilde X_1)^{-1}\tilde Z_1'M_2
    \end{align}
    where we will use that $M_2=\tilde M_2$ since $X_2=\tilde X_2$ is strictly exogenous. Thus for the specific choice of $\tilde r$ where $\tilde r_*'=r_*'\bar S_2^{-1}(\tilde Z_1'M_2\tilde X_1)$ we have
    \begin{align}
        w' = r_*'\bar S_2^{-1}\tilde Z_1'M_2 = \tilde r_*'(\tilde Z_1'M_2\tilde X_1)^{-1}\tilde Z_1'M_2 = \tilde r' (\tilde Z'\tilde X)^{-1}\tilde Z'
    \end{align}
    so that $ w_t=\tilde r'(\tilde Z'\tilde X)^{-1}(\tilde X_t-\gamma_0\tilde X_{t+1}).$ Note, that
    \begin{align}
        \max_t|w_t| &\le \norm*{(\tilde X'\tilde X)^{1/2}(\tilde X'\tilde Z)^{-1}\tilde r}(1+|\gamma_0|)\max_t \norm*{(\tilde X'\tilde X)^{-1/2}\tilde X_t}, \\
        \sum_t w_t^2 &= \tilde r'(\tilde Z'\tilde X)^{-1}\tilde X'(I-\Gamma_0)(I-\Gamma_0')\tilde X (\tilde X'\tilde Z)^{-1}\tilde r 
        \ge (1-|\gamma_0|)^2 \norm*{(\tilde X'\tilde X)^{1/2}(\tilde X'\tilde Z)^{-1}\tilde r}^2.
        \shortintertext{Thus,}
        \max_{t} \abs{w_{t,T}} 
        &\le \frac{\max_t|w_t|}{\sqrt{\sum_t w_t^2}} 
        \le \frac{1+|\gamma_0|}{1-|\gamma_0|} \max_t\|(\tilde X'\tilde X)^{-1/2}\tilde X_t\|\to 0.
    \end{align}
    For condition (iii), we note that Lemma \ref{lem: conditions for CLT} and $K/T\to 0$ yields
    \begin{align}
        \sum_{s}\sum_{t\neq s}\left[\frac{B_{st}+B_{ts}}{2}\right]^2= \frac{1}{2}\tr(B^2+B'B)(1+o(1)).
    \end{align}
    This yields $\sum_s\sum_{t\neq s}M_{st}^2\leq 1$. Furthermore, 
    \begin{align}
        \left\|\frac{B+B'}{2}-diag(B)\right\|\leq \|B\|+\max_t|B_{tt}|=O(1).
    \end{align}
    Therefore we have $\norm{(M_{st})_{s,t}} \to 0$ and have therefore established \eqref{eq: CLT from Mikkel}.

    Finally, we prove that $\frac{\hat\varSigma_T}{\varSigma_T}\xrightarrow{p} 1$. Reusing the argument in the proof of Theorem \ref{Thm: IV}, we first have that $(\hat\varSigma_T-\hat\varSigma_T(\Gamma_0))/\varSigma_T = o(1)$. By Lemma \ref{lem: Frisch-Waugh}\ref{lem: Frisch-Waugh 2}, we have for $u = (\varepsilon_{0},\dots,\varepsilon_{T-1})'$ that
    \begin{align}
        \hat\varSigma_T(\Gamma_0) &= \hat\sigma^2(\Gamma_0)\|r_*'(Z_1'M_2X_1)^{-1}Z_1'M_2\|^2 
        = [1+o_p(1)]\sigma^2r_*'\bar S_2^{-1}Z_1'M_2M_2'Z_1(\bar S_2')^{-1}r_*\\
        &=[1+o_p(1)]\sigma^2r_*'\bar S_2^{-1}(\tilde Z_1'+\alpha_*u(I-\Gamma_0'))M_2M_2'(\tilde Z_1+(I-\Gamma_0')u\alpha_*')(\bar S_2')^{-1}r_*,
    \end{align}
    where we also used Theorem \ref{Thm: consistency of variance} and \eqref{eq: denominator with projection}. From Lemma \ref{lem: conditions for CLT}(\ref{lem: conditions for CLT 1} and \ref{lem: conditions for CLT 3}), we have 
    \begin{align}
        \E[ u'(I-\Gamma_0')M_2M_2'(I-\Gamma_0)u]
        = [1+o(1)]\E[\varepsilon'BB'\varepsilon] 
        = [1+o(1)]\sigma^2\tr[B'B].
    \end{align}
    From $K/T\to 0$ we have $\frac{\tr(B^2+B'B)}{\tr(B'B)}\to 1$ and therefore
    \begin{align}\label{eq: small terms in Sigma}
        \frac{\hat\varSigma_T \!-\! \varSigma_T}{\varSigma_T}
        = \frac{2\sigma^2r_*'\bar S_2^{-1}\alpha_*u(I \!-\! \Gamma_0)M_2M_2'\tilde Z_1 \!+\! \sigma^2(r_*'\bar S_2^{-1}\alpha_*)^2(\varepsilon'BB'\varepsilon \!-\! \E \varepsilon'BB'\varepsilon)}{\varSigma_T}+o_p(1). \qquad 
    \end{align}
    Define $R=M_2'\tilde Z_1$, $\xi_1=r_*'\bar S_2^{-1}\alpha_*u(I-\Gamma_0)M_2M_2'\tilde Z_1$, $\xi_2=(r_*'\bar S_2^{-1}\alpha_*)^2[\varepsilon'BB'\varepsilon-\E \varepsilon'BB'\varepsilon]$. Now,
    \begin{align}
        \E[\xi_1^2] &= C(r_*'\bar S_2^{-1}\alpha_*)^2\tr(R'B'BR)\leq C(r_*'\bar S_2^{-1}\alpha_*)^2\tr(R'R)\|B'B\|; \\
        \frac{\xi_1}{\varSigma_T} 
        &=O_p\left(\frac{r_*'\bar S_2^{-1}\alpha_*\sqrt{\tr(R'R)\|B'B\|}}{\varSigma_T}\right) \\
        &= O_p\left(\sqrt{\frac{\|B'B\|}{\tr(B'B)}}\frac{(r_*'\bar S_2^{-1}\alpha_*)^2\tr(B'B)+\tr(R'R)}{\varSigma_T}\right)
        =O_p\left(\sqrt{\frac{\|B'B\|}{\tr(B'B)}}\right).
    \end{align}
    Lemma \ref{lem: order of terms infinite lag}\ref{lem: order of terms infinite lag 2} yields $\varepsilon'BB'\varepsilon-\E\varepsilon'BB'\varepsilon=O_P(\|B'B\|_F) $. Thus
    \begin{align}
        \frac{ \xi_2}{\varSigma_T} = O_p\left(\frac{\|B'B\|_F}{\tr(B'B)}\right)= O_p\left(\frac{\sqrt{\tr(B'B)\|B'B\|}}{\tr(B'B)}\right)=O_p\left(\frac{\|B\|}{\sqrt{\tr(B'B)}}\right).
    \end{align}
    Thus, by Lemma \ref{lem: conditions for CLT}\ref{lem: conditions for CLT 3}, both terms in \eqref{eq: small terms in Sigma} are $O_p(1/\sqrt{T}).$
\end{proof}

\begin{proof}[Proof of Theorem \ref{thm: inference for Gaussian errors}]
    Note that $r'\hat\beta^\text{IV}(\Gamma), r'\beta$ and $\hat\varSigma_T(\Gamma)=\hat\sigma^2(\Gamma)\|r'(X'(I-\Gamma)X)^{-1}X'(I-\Gamma)\|^2$ are invariant under the transformation $\Theta$ of Lemma \ref{lem: rotation}. Thus, we may assume without loss of generality that Lemma \ref{lem: rotation}(\ref{lem: rotation 1} and \ref{lem: rotation 2}) hold with $\Theta=I_K$ and $Z=(I-\Gamma)X$. Proceeding as in the proof of Theorem \ref{thm: inference for smaller K} we arrive at equation \eqref{eq: first step in inference proof} with $B=D'(I-\Gamma)M_2$. Due to Gaussianity of the errors, $r_*'\bar S_2^{-1}\tilde Z_1'M_2\varepsilon=w'\varepsilon$ has a Gaussian distribution conditionally on $\tilde X$ with conditional variance $\sigma^2\|w\|^2=\sigma^2\|r_*'\bar S_2^{-1}\tilde Z_1'M_2\|^2$. Below we show that 
    \begin{align}\label{eq: quadratic CLT}
        \frac{\varepsilon'B\varepsilon}{\sigma^2\sqrt{\tr(B^2)+\tr(B'B)}}\Rightarrow N(0,1)
    \end{align}
    and that this term is asymptotically independent from the conditionally Gaussian term $w'\varepsilon$.

    Define $P_w=\frac{ww'}{w'w}$ and $M_w=I-P_w$ then
   $\varepsilon'B\varepsilon= 2\varepsilon'BP_w\varepsilon- \varepsilon'P_wBP_w\varepsilon+\varepsilon'M_wBM_w\varepsilon.$
    We notice that
    \begin{align} 
        \left|\varepsilon'P_wBP_w\varepsilon\right|=&\left(\frac{w'\varepsilon}{\|w\|}\right)^2\left|\frac{w'Bw}{w'w}\right|\leq \|B\|\cdot \chi^2_1=O_p(1); \\
        \varepsilon'BP_w\varepsilon=&\frac{w'\varepsilon}{\|w\|}\frac{w'B\varepsilon}{\|w\|}=\frac{w'\varepsilon}{\|w\|}N(0,\frac{w'BB'w}{w'w})=O_p(1).
    \end{align}
    Due to Lemma \ref{lem: conditions for CLT} we have $\frac{T}{\tr(B^2)+\tr(B'B)}=O(1).$ Thus
    \begin{align}
        \frac{\varepsilon'B\varepsilon}{\sqrt{\tr(B^2)+\tr(B'B)}}=\frac{\varepsilon'M_wBM_w\varepsilon}{\sqrt{\tr(B^2)+\tr(B'B)}}+o_p(1)
    \end{align}
    But $\frac{\varepsilon'M_wBM_w\varepsilon}{\sqrt{\tr(B^2)+\tr(B'B)}}$ is independent from $\frac{w'\varepsilon}{\|w\|}\sim N(0,\sigma^2)$. This implies that $\frac{\varepsilon'B\varepsilon}{\sqrt{\tr(B^2)+\tr(B'B)}}$ is asymptotically independent from the first term. Since $\tr(B)=\sum_tB_{tt}=O_p(1)$:
    \begin{align*}
        \varepsilon'B\varepsilon=\sum_{t}\sum_{s\neq t}\frac{B_{ts}+B_{st}}{2}\varepsilon_t\varepsilon_s+\sum_tB_{tt}(\varepsilon_t^2-\sigma^2) + O_p(1).
    \end{align*}
    Using that $\E\varepsilon_t^4=3\sigma^4$, one can show that 
    \begin{align}
        Var(\varepsilon'B\varepsilon)=2\sigma^4\sum_{t}\sum_{s\neq t}\left(\frac{B_{ts}+B_{st}}{2}\right)^2+2\sigma^4\sum_tB_{tt}^2=\tr(B^2)+\tr(B'B),
    \end{align}
    and due to Lemma \ref{lem: conditions for CLT} the right-hand-side grows no slower than of order $T$. Since $\max_t |B_{t,t}|\leq \|B\|=O(1)$ and the operator norm of $B+B'$ is bounded, conditions (i)--(iv) of Corollary A2.8 in \cite{solvsten2020robust} hold. Therefore \eqref{eq: quadratic CLT} holds and we have asymptotic Gaussianity.

    By the same argument as in the proof of Theorem \ref{thm: inference for smaller K}, we can show that
    \begin{align}
    \frac{\hat\sigma^2(\Gamma)\|r'(X'(I-\Gamma)X)^{-1}X'(I-\Gamma)\|^2}{\sigma^2\|r_*'\bar S_2^{-1}\tilde Z_1'M_2\|^2+\sigma^4(r_*'\bar S_2^{-1}\alpha_*)^2\tr(B'B)}\to^p 1.
    \end{align}
    Pre-multiplying this estimator by $1+\psi=\frac{|\tr(B^2)|+\tr(B'B)}{\tr(B'B)}$ guarantees that the resulting quantity asymptotically weakly exceeds $\varSigma_T=\sigma^2\sum_tw_t^2+\sigma^4(r_*'\bar S_2^{-1}\alpha_*)^2\tr(B^2+B'B)$.
\end{proof}

\section{Additional Simulations}
\label{app: simulations}

The outcome vector is generated as $y = X\beta + \varepsilon$ with $\varepsilon \sim N(0,I)$ and $\beta=0$. The design matrix is generated as $x_{1,t}=\tilde x_{1,t}+a\varepsilon_{t-1}$ and $X_{-1}=\tilde X_{-1}$, where $\tilde X$ is generated as a rotated MA(1) process with $\tilde X\tilde X'/T=I_K$, independent from $\varepsilon$. Specifically, we generate $v_t = \rho u_{t-1} + u_t$ with $\{u_t\}_{t=1}^T$ \emph{i.i.d.} $N(0,I_K)$ and define $\tilde X=V(V'V/T)^{-1/2},$ where the square root comes from Cholesky decomposition. Across simulations, we fix the sample size at $T=200$ and the coefficient on the feedback mechanism at $a=1.5$. Simulation results are summarized in Figure \ref{fig:biasMA} with the left panel showing results for number of regressors $K$ between $4$ and $150$ (fixing $\rho$ at $0.8$). The right panel reports the results for the auto-correlation in regressors $\rho$ between $0$ and $0.98$ (fixing $K$ at $50$). We report simulated values of absolute bias and standard deviation for the first coordinate of OLS and IV together with the mean absolute value of the ratio of the lower trace of $M$ to the sample size. The results are extremely similar to the results reported in Section \ref{sec: bias OLS} both in term of the size of the bias/standard deviations as well as dependence on the number of regressors and their one-period predictability. 

\begin{figure}[htbp]
\centering
\includegraphics[width=\columnwidth]{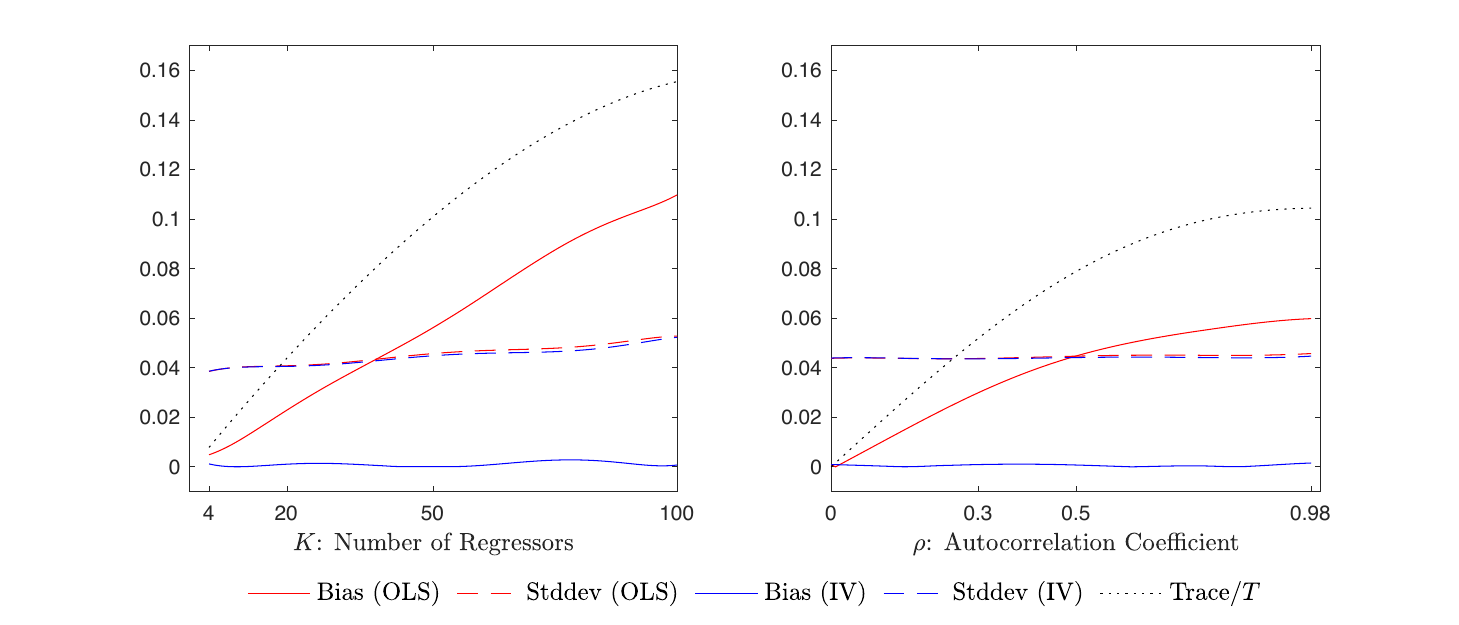}
\caption{Absolute Bias and Standard Deviation of OLS and IV with T=200}
\label{fig:biasMA}
\end{figure}

\end{appendices}

\end{document}